\newcommand{\E}{\textnormal{\textrm{E}}}
\DeclarePairedDelimiterX{\infdivx}[2]{}{}{ \left(#1\;\delimsize\middle\|\;#2\right) }
\newcommand{\dist}{\ensuremath{\textnormal{dist}}}
\newcommand{\simil}{\ensuremath{\textnormal{sim}}}
\newcommand{\LSH}{\mathcal{H}}
\newcommand{\ALSH}{\mathcal{A}}
\newcommand{\DLSH}{\mathcal{D}}
\newcommand{\DSH}{\mathcal{D}}
\newcommand{\ip}[2]{\langle{#1},{#2}\rangle}
\newcommand{\q}{\mathbf{q}}
\newcommand{\x}{\mathbf{x}}
\newcommand{\y}{\mathbf{y}}
\newcommand{\z}{\mathbf{z}}
\renewcommand{\a}{\mathbf{a}}
\newcommand{\vect}[1]{\mathbf{#1}}
\newcommand{\sphere}[1]{\mathbb{S}^{#1 - 1}}
\newcommand{\norm}[1]{\left\lVert #1 \right\rVert}
\newcommand{\cube}[1]{\{0,1\}^{#1}}
\newcommand{\pp}[1]{\mathcal{P}\left(#1\right)} 
\newcommand{\ppp}[1]{\mathcal{P}'\left(#1\right)}
\begin{document}

\title{Distance-Sensitive Hashing}


\author{Martin Aumüller}
\affiliation{%
  \institution{BARC and IT University of Copenhagen}
}
\email{maau@itu.dk}

\author{Tobias Christiani}
\affiliation{%
  \institution{BARC and IT University of Copenhagen}
}
\email{tobc@itu.dk}

\author{Rasmus Pagh}
\affiliation{%
  \institution{BARC and IT University of Copenhagen}
}
\email{pagh@itu.dk}

\author{Francesco Silvestri}
\affiliation{%
  \institution{University of Padova}
}
\email{silvestri@dei.unipd.it}

\renewcommand{\shortauthors}{M. Aumüller et al.}

\begin{abstract}
Locality-sensitive hashing (LSH) is an important tool for managing high-dimensional noisy or uncertain data, for example in connection with data cleaning (similarity join) and noise-robust search (similarity search).
However, for a number of problems the LSH framework is not known to yield good solutions, and instead ad hoc solutions have been designed for particular similarity and distance measures.
For example, this is true for output-sensitive similarity search/join, and for indexes supporting \emph{annulus queries} that aim to report a point close to a certain given distance from the query point.

In this paper we initiate the study of \emph{distance-sensitive hashing} (DSH), a generalization of LSH that seeks a family of hash functions 
such that the probability of two points having the same hash value is a given function of the distance between them.
More precisely, given a distance space $(X, \dist)$ and a ``collision probability function'' (CPF) $f\colon \mathbb{R}\rightarrow [0,1]$ we seek a distribution 
over pairs of functions $(h,g)$ such that for every pair of points $\x, \y \in X$ the collision probability is $\Pr[h(\x)=g(\y)] = f(\text{dist}(\x,\y))$.
Locality-sensitive hashing is the study of how fast a CPF can \emph{decrease} as the distance grows.
For many spaces, $f$ can be made exponentially decreasing even if we restrict attention to the symmetric case where $g=h$.
We show that the \emph{asymmetry} achieved by having a pair of functions makes it possible to achieve CPFs that are, for example, increasing or unimodal, and show how this leads to principled solutions to problems not addressed by the LSH framework.
This includes a novel application to privacy-preserving distance estimation.
We believe that the DSH framework will find further applications in high-dimensional data management.

To put the running time bounds of the proposed constructions into perspective, we show lower bounds for the performance of DSH constructions with increasing and decreasing CPFs under angular distance. Essentially, this shows that our constructions are tight up to lower order terms. In particular, we extend existing LSH lower bounds, showing that they also hold in the asymmetric setting.

\end{abstract}

\begin{CCSXML}
<ccs2012>
<concept>
<concept_id>10003752.10010061</concept_id>
<concept_desc>Theory of computation~Randomness, geometry and discrete structures</concept_desc>
<concept_significance>500</concept_significance>
</concept>
<concept>
<concept_id>10003752.10003809.10010031</concept_id>
<concept_desc>Theory of computation~Data structures design and analysis</concept_desc>
<concept_significance>300</concept_significance>
</concept>
<concept>
<concept_id>10002951.10003317</concept_id>
<concept_desc>Information systems~Information retrieval</concept_desc>
<concept_significance>300</concept_significance>
</concept>
</ccs2012>
\end{CCSXML}

\ccsdesc[500]{Theory of computation~Randomness, geometry and discrete structures}
\ccsdesc[300]{Theory of computation~Data structures design and analysis}
\ccsdesc[300]{Information systems~Information retrieval}

\keywords{locality-sensitive hashing; similarity search; annulus query}

\maketitle

\section{Introduction}\label{sec:introduction}
The growth of data from a variety of sources that need to be managed and analyzed has made it increasingly important to design data management systems with features that make them robust and tolerant towards noisy data.
For example: different texts representing the same object (in data reconciliation), slightly different versions of a string (in plagiarism detection), or feature vectors whose similarity reflects the affinity of two objects (in recommender systems).
In data management, such tasks are often addressed using the similarity join operator~\cite{silva2010similarity}.

When data sets are \emph{high-dimensional}, traditional algorithmic approaches often fail.
Fortunately, there are general principles for handling high-dimensional data sets.
One of the most successful approaches is the locality-sensitive hashing (LSH) framework by Indyk and Motwani~\cite{IndykM98}, further developed in collaboration with Gionis~\cite{gionis1999similarity} and Har-Peled~\cite{Har-PeledIM12}.
LSH is a powerful framework for approximate nearest neighbor (ANN) search in high dimensions that achieves sublinear query time, and it has found many further applications.
However, for a number of problems the LSH framework is not known to yield good solutions, for example output-sensitive similarity search, and indexes supporting \emph{annulus queries} returning points at distance approximately~$r$ from a query point.

\smallskip

{\bf Motivating example.} A classical application of similarity search is in recommender systems: 
Suppose you have shown interest in a particular item, for example a news article $\x$.
The semantic meaning of a piece of text can be represented as a high-dimensional \emph{feature vector}, for example computed using latent semantic indexing~\cite{Deerwester90}.
In order to recommend other news articles we might search the set $P$ of article feature vectors for articles that are ``close'' to~$\x$.
But in general it is not clear that it is desirable to recommend the ``closest'' articles.
Indeed, it might be desirable to recommend articles that are on the same topic but are not \emph{too} aligned with $\x$, and may provide a different perspective~\cite{Abiteboul17}.

\smallskip

{\bf Discussion.} 
Unfortunately, existing LSH techniques do not allow us to search for points that are ``close, but not too close''.
In a nutshell: 
LSH provides a sequence of hash functions $h_1,h_2,\dots$ such that if $\x$ and $\y$ are close we have $h_i(\x)=h_i(\y)$ for some $i$ with constant probability, 
while if $\x$ and $\y$ are distant we have $h_i(\x) = h_i(\y)$ only with small probability (typically $1/n$, where $n$ upper bounds the number of distant points). 
As a special case, this paper discusses techniques that allow us to refine the first requirement:
If $\x$ and $\y$ are ``too close'' we would like collisions to occur only with very small probability.
At first sight this seems impossible because we will, by definition, have a collision when $\x = \y$. 
However, this objection is overcome by switching to an \emph{asymmetric} setting where we work with pairs of functions $(h_i,g_i)$ and are concerned with collisions of the form $h_i(\x)=g_i(\y)$.

More generally, we initiate the systematic study of the following question:
In the asymmetric setting, what is the class of functions $f$ for which it is possible to achieve $\Pr[h(\x)=g(\y)] = f(\text{dist}(\x,\y))$, 
where the probability is over the choice of $(h,g)$ and $\text{dist}(\x,\y)$ is the distance between $\x$ and $\y$. 
We refer to such a function as a \emph{collision probability function} (CPF).
More formally:
\begin{definition} \label{def:dsh} 
A \emph{distance-sensitive hashing} (DSH) scheme for the space $(X, \text{dist})$  is a distribution $\DLSH$ over pairs of functions $h, g \colon X \to \mathbb{R}$ with collision probability function (CPF) $f\colon \mathbb{R} \rightarrow [0,1]$ if for each pair 
 $\x,\y \in X$ and $(h, g)\sim\DLSH$ we have 
    $\Pr[h(\x) = g(\y)] = f(\text{dist}(\x,\y))$.
\label{def:distance:lsh}
\end{definition}
%
The theory of locality-sensitive hashing is the study of \emph{decreasing} CPFs whose collision probability is high for neighboring points and low for far-away points.

%


\subsection{Our contributions}\label{sec:results}

We initiate the systematic study of distance-sensitive hashing (DSH), and in particular we:
\begin{itemize}
\item Show tight upper and lower bounds on the maximum possible growth rate of CPFs under \emph{angular distance}. This extends upper and lower bound techniques for locality-sensitive hashing to the asymmetric setting.
\item Provide several DSH constructions that exploit asymmetry between the functions $g$ and~$h$ to achieve non-standard CPFs. 
For example, for Hamming distance we show how to achieve a CPF that equals any polynomial $\mathcal{P}\colon \{0,\dots,d\}\rightarrow [0,1]$ up to a scaling factor.
\item Present several motivating applications of DSH: Hyperplane queries, annulus search, spherical range reporting,  privacy-preserving distance estimation. 
\end{itemize}

The lower bound for angular distance implies lower bounds for Hamming and Euclidean distance DSH. 
It also shows that existing (asymmetric) LSH constructions used to search for vectors close to a given hyperplane~\cite{vijayanarasimhan2014hyperplane} are near-optimal.

On the upper bound side, our constructions show that asymmetric methods are significantly more expressive than standard, symmetric LSH constructions.
Since asymmetric methods are often applicable in cases where symmetric methods are used, it seems relevant to re-assess whether such constructions can be improved, even in settings where a decreasing CPF is desired.

Though our DSH applications do not lead to quantitative running time improvements compared to existing, published ad-hoc solutions, we believe that studying the possibilities and limitations of the DSH framework will help unifying approaches to solving ``distance sensitive'' algorithmic problems.
We now proceed with a more detailed description of our results.

 
%

\subsubsection{Angular distance}

We consider monotonically increasing CPFs for angular distance between vectors on the unit sphere $\sphere{d}$.
It will be convenient to express distances in terms of dot products $\ip{\cdot}{\cdot}$ rather than by angles or Euclidean distances, with the understanding that there is a 1-1 correspondence between them.

The constructions rely on the idea of ``negating the query point'': leveraging state-of-the-art symmetric LSH schemes, we obtain DSH constructions with monotone CPFs by replacing the query point $\q$ by $-\q$ in a symmetric LSH construction.
We initially apply this idea in section~\ref{sec:cross-polytope} to Cross-Polytope LSH \cite{andoni2015practical} getting an efficient DSH with a  monotonically decreasing CPF. 
We then show that a more flexible result follows with a variant of ideas used in the filter constructions from~\cite{BeckerDGL16,andoni2017optimal,christiani2017framework}. The filter based approach contains a parameter $t$ that can be used for fine tuning the scheme. This parameter is exploited in the data structure solving the annulus query problem (see section~\ref{app:annulus}). More specifically, the filter based approach gives the following result:
\begin{theorem}\label{thm:anti}
	For every $t > 1$ there exists a distance-sensitive family $\DSH_{-}$ for $(\sphere{d}, \ip{\cdot}{\cdot})$ 
	with a monotonically decreasing CPF $f$ such that for every $\alpha \in (-1,1)$ with $|\alpha| < 1 - 1/t$ we have
	\begin{equation}
		\ln(1/f(\alpha)) = \tfrac{1 + \alpha}{1 - \alpha}\tfrac{t^2}{2} + \Theta(\log t) .
	\end{equation}
The complexity of sampling, storing, and evaluating $(h,g) \in \DSH_{-}$ is $O(d t^4 e^{t^2 / 2})$.
\end{theorem}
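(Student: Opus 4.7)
The plan is to apply the ``negate the query'' trick to the Gaussian spherical filter scheme of~\cite{christiani2017framework}: draw an i.i.d.\ sequence $z_1, z_2, \dots \sim \N(0, I_d)$ and set $h(\x) = \min\{i : \ip{z_i}{\x} \geq t\}$ and $g(\y) = \min\{i : \ip{z_i}{-\y} \geq t\}$. Both minima are almost surely finite, so the CPF is well-defined; the truncation needed for the complexity bound is handled at the end.

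Let $p_1 = \Pr[\ip{z}{\vect{u}} \geq t]$ for any unit vector $\vect{u}$ and $p_{12}(\beta) = \Pr[\ip{z}{\vect{u}} \geq t \wedge \ip{z}{\vect{v}} \geq t]$ for unit $\vect{u},\vect{v}$ with $\ip{\vect{u}}{\vect{v}} = \beta$. Since $h$ and $g$ agree iff the same filter is the first passed by both, a waiting-time computation yields
\[ f(\alpha) = \Pr[h(\x) = g(\y)] = \frac{p_{12}(-\alpha)}{2p_1 - p_{12}(-\alpha)}, \]
the substitution $\beta = -\alpha$ arising because $g$ tests against $-\y$. I would then prove, uniformly over $|\alpha| < 1 - 1/t$, the asymptotics $p_1 = \Theta(e^{-t^2/2}/t)$ (Mill's ratio) and $p_{12}(\beta) = \mathrm{poly}(t) \cdot e^{-t^2/(1+\beta)}$. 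The latter follows by decomposing $(\ip{z}{\vect{u}}, \ip{z}{\vect{v}})$ into the independent standard normals $U = (\ip{z}{\vect{u}} + \ip{z}{\vect{v}})/\sqrt{2(1+\beta)}$ and $V = (\ip{z}{\vect{u}} - \ip{z}{\vect{v}})/\sqrt{2(1-\beta)}$ and evaluating the resulting joint tail by Laplace's method. Substituting $\beta = -\alpha$ and taking logarithms gives
\[ \ln(1/f(\alpha)) = \frac{t^2}{1-\alpha} - \frac{t^2}{2} + \Theta(\log t) = \frac{1+\alpha}{1-\alpha}\cdot\frac{t^2}{2} + \Theta(\log t), \]
which is strictly increasing in $\alpha\in(-1,1)$, so $f$ is strictly decreasing.

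For the complexity claim, truncate the filter sequence at $N = \Theta(t^4 e^{t^2/2})$. Since $p_1 = \Omega(e^{-t^2/2}/t)$, the probability that no filter passes for any particular point is doubly exponentially small in $t$, so truncation perturbs $f$ by a factor $1 \pm e^{-\mathrm{poly}(t)}$, which is absorbed into the $\Theta(\log t)$ term. Sampling and storing the $N$ Gaussians cost $O(dN)$, and evaluating $h$ or $g$ requires at most $N$ inner products in $\real^d$, matching the stated $O(d t^4 e^{t^2/2})$ bound.

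The hard part is pinning down the polynomial factor in $p_{12}(\beta)$: matching upper and lower bounds must agree to within $\mathrm{poly}(t)$ \emph{uniformly} over the whole range $|\beta| \leq 1 - 1/t$. The factor grows like $t^{O(1)}$ as $|\beta|$ approaches the endpoints, and the condition $|\alpha| < 1 - 1/t$ is exactly what keeps it inside $\mathrm{poly}(t)$, so that the lower-order term in $\ln(1/f)$ remains $\Theta(\log t)$ as required.
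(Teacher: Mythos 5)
Your proposal follows essentially the same route as the paper's proof: the filter-based family with Gaussian projections, the ``negate-the-query'' transformation, the waiting-time formula $f(\alpha)=p_{12}(-\alpha)/(2p_1-p_{12}(-\alpha))$, the univariate Mill's ratio for $p_1$, a matching upper and lower bound on the bivariate tail $p_{12}$, and truncation of the filter sequence at $m=\Theta(t^4 e^{t^2/2})$ with a negligible failure term. The only substantive difference is that the paper cites Savage's explicit two-sided bound for the bivariate normal tail (Lemma~\ref{lem:savage}) rather than re-deriving it via the rotation to independent coordinates and Laplace's method that you outline, so the ``hard part'' you flag is exactly the step the paper discharges by citation.
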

Note that in terms of the angle between vectors $f$ is \emph{increasing}, as desired.
A corollary of Theorem~\ref{thm:anti} is that we can efficiently obtain a CPF $f$ such that $\rho_- = \ln f(0) / \ln f(\alpha) \leq \tfrac{1-\alpha}{1+\alpha} + o_t(1)$.
(We use $o_t(\cdot)$ to indicate that the function depends only on $t$.)
The value $\rho_-$ measures the gap between collision probabilities (the larger the gap, the smaller $\rho_-$), and is significant in search applications. The construction based on cross-polytope LSH has the same CPF as the construction stated in Theorem~\ref{thm:anti} with $t = \sqrt{2 \log d}$.

It turns out that the value $\rho_-$ is optimal up to the lower-order term.
To show this we consider vectors $\x,\y \in \{-1,+1\}^d$ that are either $0$-correlated or $\alpha$-correlated (i.e., $\Pr[\x_i = \y_i] = \tfrac{1+\alpha}{2}$ independently for each $i$).
These are unit vectors up to a scaling factor $\sqrt{d}$, and for large $d$ the dot product will be tightly concentrated around the correlation.
Since correlation is invariant under linear transformation, we may without loss of generality consider $\x,\y \in \{0,1\}^d$.
In section~\ref{sec:lower:bounds} we show the following lower bound:
\begin{theorem} \label{thm:revexphash} 
Let $\DSH$ be a distribution over pairs of functions $h, g \colon \cube{d} \to R$, 
and define $\hat{f} \colon [-1,1] \to [0,1]$ as $\hat{f}(\alpha) = \Pr[h(\x) = g(\y)]$ where $\x, \y$ are randomly $\alpha$-correlated and $(h,g) \sim \DSH$.  
Then for every $0 \leq \alpha < 1$ we have that $\hat{f}(\alpha) \geq \hat{f}(0)^{\frac{1+\alpha}{1-\alpha}}$.
\end{theorem}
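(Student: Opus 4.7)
I would reduce to a per-bucket lower bound via two-function reverse hypercontractivity on the $\alpha$-correlated cube and then aggregate with a short log-convexity argument. For each $r$ in the common range of $h$ and $g$, let $A_r = h^{-1}(r)$, $B_r = g^{-1}(r)$, and $a_r := |A_r|/2^d$, $b_r := |B_r|/2^d$; the two partitions give $\sum_r a_r = \sum_r b_r = 1$. By independence, $\hat{f}(0) = \sum_r a_r b_r$, while under the $\alpha$-correlated distribution on $(\x,\y)$,
\[
\hat{f}(\alpha) = \sum_r \Pr[\x \in A_r,\, \y \in B_r].
\]

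Next, I would invoke Borell's two-function reverse hypercontractivity on $\cube{d}$: for non-negative $f, g$ and $p, q \in (0,1]$ with $(1-p)(1-q) \geq \alpha^2$, $\E[f(\x)g(\y)] \geq \|f\|_p \|g\|_q$. Choosing $p = q = 1-\alpha$ and $f = \1_{A_r}$, $g = \1_{B_r}$ yields the per-bucket bound $\Pr[\x \in A_r, \y \in B_r] \geq (a_r b_r)^{1/(1-\alpha)}$.

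Writing $\beta = 1/(1-\alpha)$, so $2\beta - 1 = (1+\alpha)/(1-\alpha)$, the theorem then reduces to the elementary inequality
\[
\sum_r (a_r b_r)^{\beta} \;\geq\; \Bigl(\sum_r a_r b_r\Bigr)^{2\beta - 1}.
\]
I would prove this by observing that $\Phi(\beta) := \log \sum_r (a_r b_r)^{\beta}$ is a log-sum-exponential, hence convex in $\beta$, so $\Phi(\beta) \geq \Phi(1) + (\beta-1)\Phi'(1)$ with $\Phi(1) = \log \hat{f}(0)$ and $\Phi'(1) = \sum_r a_r b_r \log(a_r b_r)/\hat{f}(0)$. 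The claim reduces to $\Phi'(1) \geq 2\log \hat{f}(0)$, i.e.\ $\sum_r a_r b_r \log a_r \geq \hat{f}(0)\log \hat{f}(0)$ (and the analogous inequality for $b$). Each of these follows from Jensen's inequality applied to the convex function $x \mapsto x\log x$ under the probability measure $r \sim b$:
\[
\sum_r b_r\,(a_r \log a_r) \;\geq\; \Bigl(\sum_r b_r a_r\Bigr)\log\Bigl(\sum_r b_r a_r\Bigr) \;=\; \hat{f}(0)\log \hat{f}(0).
\]
Exponentiating the tangent bound gives the lemma, and hence the theorem.

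The main subtlety is the aggregation: the inequality $\sum_r u_r^\beta \geq (\sum_r u_r)^{2\beta-1}$ fails for generic non-negative $u_r$, and crucially exploits the factorization $u_r = a_r b_r$ together with $\sum_r a_r = \sum_r b_r = 1$---exactly the structure that enters through the Jensen step on $x\log x$. The resulting rate matches Theorem~\ref{thm:anti} up to the $\Theta(\log t)$ correction, confirming sharpness up to lower-order terms.
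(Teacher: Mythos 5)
Your plan follows the same skeleton as the paper's proof: decompose by bucket, apply a two-set reverse-hypercontractive lower bound to each bucket, then aggregate with a convexity argument. The paper arrives at your per-bucket bound $\Pr[\x\in A_r,\y\in B_r]\ge(a_r b_r)^{1/(1-\alpha)}$ by going through the reverse Small-Set Expansion theorem of O'Donnell and then relaxing the cross-term (using $2a'b'\le a'^2+b'^2$), which yields exactly the diagonal reverse hypercontractivity bound you invoke directly---so the two routes to the per-bucket inequality are equivalent, yours being slightly more streamlined. For the aggregation step the paper states the same inequality $\sum_r (a_r b_r)^c\ge(\sum_r a_r b_r)^{2c-1}$ as a lemma and proves it by two applications of Jensen using convexity of $x\mapsto x^c$ and $x\mapsto x^{2-1/c}$; your proof via convexity of the log-sum-exponential $\Phi(\beta)$ and a tangent at $\beta=1$, reducing to Jensen for $x\mapsto x\log x$ under the measures $a$ and $b$, is a clean and correct alternative. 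Both exploit precisely the product structure $u_r=a_r b_r$ with $\sum a_r=\sum b_r=1$, as you rightly emphasize.

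There is one genuine (though small) omission: your argument is carried out for a single fixed pair $(h,g)$, but $\hat f$ is defined as an expectation over $(h,g)\sim\DSH$. After establishing
\[
\hat f_{h,g}(\alpha)\ \ge\ \hat f_{h,g}(0)^{\frac{1+\alpha}{1-\alpha}}
\quad\text{for each fixed }(h,g),
\]
you still need one more Jensen application---this time in the outer expectation, using convexity of $x\mapsto x^{(1+\alpha)/(1-\alpha)}$ for $\alpha\ge 0$---to conclude
\[
\hat f(\alpha)\ =\ \mathop{\E}_{(h,g)\sim\DSH}\hat f_{h,g}(\alpha)
\ \ge\ \mathop{\E}_{(h,g)\sim\DSH}\hat f_{h,g}(0)^{\frac{1+\alpha}{1-\alpha}}
\ \ge\ \Bigl(\mathop{\E}_{(h,g)\sim\DSH}\hat f_{h,g}(0)\Bigr)^{\frac{1+\alpha}{1-\alpha}}
\ =\ \hat f(0)^{\frac{1+\alpha}{1-\alpha}}.
\]
This is step (4) in the paper's derivation. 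Your write-up equates $\hat f(0)$ with $\sum_r a_r b_r$, which is only correct when $\DSH$ is a point mass; add the outer Jensen and the argument is complete.
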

Considering $f(\alpha) = \lim_{d\rightarrow\infty}\hat{f}(\alpha)$, taking logarithms of the final inequality we get that $\rho_- = \ln f(0) / \ln f(\alpha) \geq \tfrac{1-\alpha}{1+\alpha}$, as desired.
Theorem~\ref{thm:revexphash} extends standard (symmetric) LSH lower bounds~\cite{motwani2007,andoni2016tight,o2014optimal} to an asymmetric setting in the following sense: If there exists a too powerful asymmetric LSH family $\DSH_+$, then by negating the query point $\q$ we obtain a family $\DSH_-$ with a monotone CPF that contradicts the statement in Theorem~\ref{thm:revexphash}. Observe that this reasoning cannot be applied
to standard LSH bounds since they do not handle the asymmetry we allow in our setting.

The proof of the theorem builds on the Reverse Small-Set Expansion Theorem~\cite{odonnell2014analysis} which, given subsets $A, B \subseteq \cube{d}$ and random $\alpha$-correlated vectors $\x, \y$, lower bounds the probability of the event $[\x \in A \land \y \in B]$ as a function of $\alpha$, $|A|$ and $|B|$. 
Through a sequence of inequalities
we extend the lower bound for pairs of subsets of space to hold for distributions over pairs of functions that partition space, yielding a surprisingly powerful and simple lower bound.

\subsubsection{DSH constructions}
In the paper we also presents several specific and general constructions for DSH families, in addition to the constructions on the unit sphere discussed above.

In section~\ref{sec:hamming}, we apply the negating trick to 
the well-known bit-sampling approach from \cite{IndykM98} to get
a simple construction of a DSH family  with an increasing CPF for Hamming distance. The idea to obtain a DSH with increasing CPF is to use the bit-sampling approach, but negate the bit that is picked from the query point. We show that for small distances, the ``collision probability gap''  between points at distance $r$ and $r/c$, which is measured by the value $\rho_{-} = \ln f(r) / \ln
f(r/c)$, is worse than if we were to map
bit strings onto the unit sphere and use the construction from Theorem~\ref{thm:anti}.
This can be considered somewhat surprising because the bit-sampling LSH family has an optimal collision probability gap for the collision probability of points at distance $r$ and $cr$, measured by the value $\rho_+ = \ln f(r) / \ln f(cr)$, for small~$r$~\cite{o2014optimal}.

The negating trick does not work for Euclidean space, since it is potentially unbounded. Nevertheless, we describe in section~\ref{sec:antilshEucl} a DSH construction
with a $\rho_{-}$-value asymptotically matching the performance of constructions on the unit sphere.
The construction is based on an asymmetric version of the classical LSH family for Euclidean space by Datar et al.~\cite{Datar04}: specifically,
for parameters $w\in {\bf \mathbb{R}}$ and $k\in {\bf \mathbb{N}}$, we let
\begin{equation}\label{eq:E2LSH++}
h\colon \x \mapsto \left\lfloor \tfrac{\langle \a, \x\rangle + b
}{w}\right\rfloor,\qquad g\colon \x \mapsto \left\lfloor \tfrac{\langle \a, \x\rangle + b
}{w} \right\rfloor + k,
\end{equation} 
where $b \in [0, w]$ is uniformly random and $\mathbf{a} \sim \mathcal{N}^d(0,1)$ is a $d$-dimensional random Gaussian vector.
We show that this method, for a suitable choice of parameters $w$ and $k$, provides a near-optimal collision probability gap. This is surprising, since the classical construction in~\cite{Datar04} is not optimal as an LSH for Euclidean space~\cite{andoni2006}.
An example CPF for the above method is shown in Figure~\ref{fig:euclidean-cpf}. As we can see from this figure, the collision probaiblity decreases rapidly on the left side of the maximum, but decreases more slowly on the right side. In particular, this construction has a CPF that is not monotone but \emph{unimodal} (i.e., a distribution having a single local maximum).
\begin{figure}[t]
\includegraphics[width=.95\linewidth]{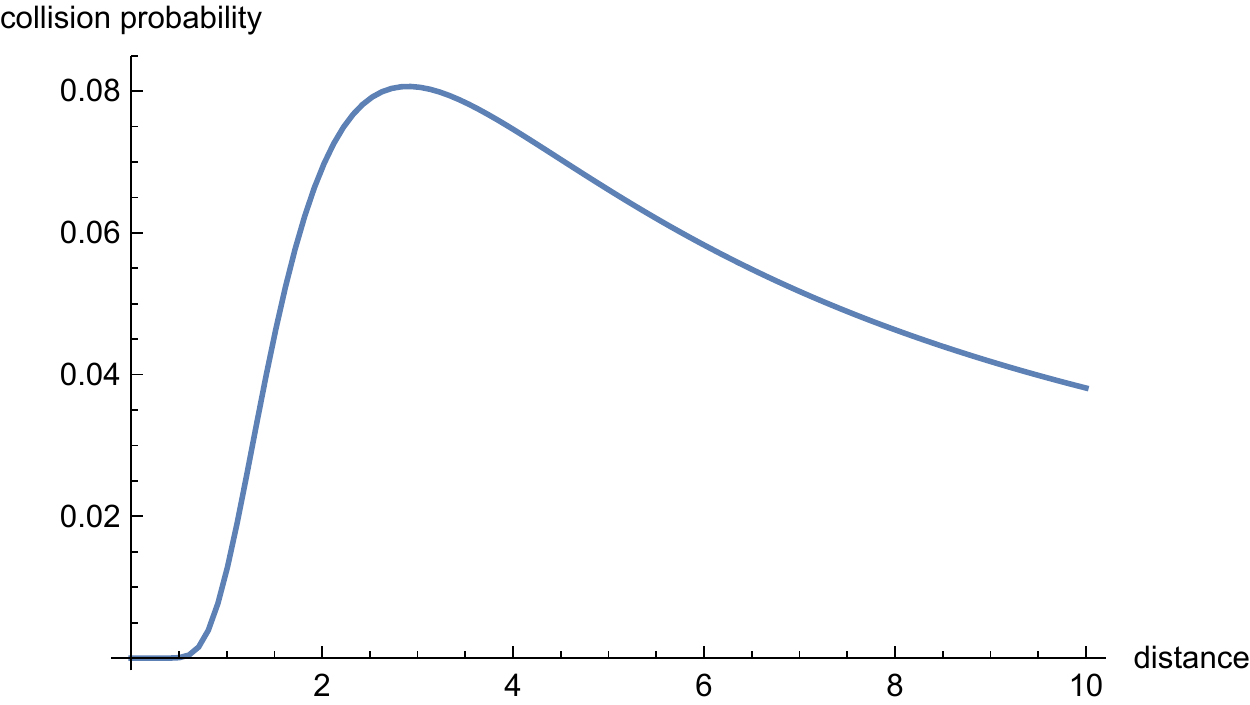}
\caption{Collision probability function of (\ref{eq:E2LSH++}) for $k=3$, $w=1$.}
\label{fig:euclidean-cpf}
\end{figure}

In section~\ref{sec:generalconstr}, we extend our work by targeting the following natural question: let $\pp{t}$ be a polynomial, does there exist 
a distance-sensitive hash family with CPF $f(t) = \pp{t}$?  We present two general approaches of constructing CPFs on the unit sphere and in Hamming space that cover a wide range of such polynomials. 
The result for the unit sphere easily follows from asymmetric embeddings previously used by Valiant~\cite{valiant2015finding} to solve the closest pair problem.
For  Hamming space, we propose an approach based on bit-sampling and polynomial factorization  that obtains the desired CPF up to  a scaling factor $\Delta\geq 1$ that depends on the roots of the polynomial. 

\subsubsection{Applications} 

We briefly describe here some applications of DSH constructions.
They are all straightforward given a DSH, but give an idea of the versatility of the framework.
We refer to section~\ref{app:applications} for formal statements and details.

\begin{itemize}
\item {\bf Hyperplane queries}. The problem of searching a set of unit vectors for a point (approximately) closest to a given hyperplane can be solved using a CPF $f$, parameterized by dot product, that peaks at $f(0)$. This approach was previously used in ad-hoc contructions, see~\cite{vijayanarasimhan2014hyperplane}.
\item {\bf Approximate annulus search}. The problem of searching for a point at approximately distance $r$ from a query point can be solved using a CPF that peaks at $r$. Previous solutions to this problem used a different, two-stage filtering approach~\cite{pagh2017approximate}.
\item {\bf Approximate spherical range reporting}. Classical LSH data structures are inefficient when many near neighbors need to be found, since each neighbor may have a high collision probability and may be ``found'' many times. A ``step function'' CPF that is flat for small distances, and then rapidly decreases implies good output-sensitivity. Again, previous results addressing this problem were ad-hoc~\cite{AhleAP17}.
\item {\bf Privacy preserving distance estimation.} If two parties want to compute the distance between private vectors, limiting the leakage of other information, secure multi-party computations can be used (see e.g.~\cite{goldreich2009foundations}).
Using ``step function'' CPFs we can transform this kind of question into a question about Hamming distance between vectors, for which much more efficient protocols exist~\cite{DBLP:conf/eurocrypt/FreedmanNP04,de2010linear}.
\end{itemize}

\subsection{Related work}\label{sec:related}

A recent book by Augsten and B{\"o}hlen~\cite{augsten2013similarity} surveys algorithms for similarity join, with emphasis on data cleaning.
Many of the commonly used algorithms are heuristics with weak theoretical guarantees, especially in high dimensions.
Recently, however, a substantial literature has been devoted the theoretical study of similarity search and join in data management, e.g.~\cite{ahle2016inner,DBLP:conf/pods/HuTY17,Kapralov15,DBLP:conf/kdd/Zhang017}.
All of these papers address particular similarity or distance measures, and their results are not directly comparable to those obtained in this paper.
In the following we review selected results from the LSH literature in more detail, referring to~\cite{wang2014lsh-survey,AndoniI16} for comprehensive surveys.

\smallskip

For simplicity we consider only LSH constructions that are \emph{isometric} in the sense that the probability of a hash collision depends only on the distance $\text{dist}(\x,\y)$.
In other words, there exists a CPF $f\colon \mathbb{R} \rightarrow [0,1]$ such that $\Pr[h(\x)=h(\y)]=f(\text{dist}(\x,\y))$.
Almost all LSH constructions whose collision probability has been rigorously analyzed are isometric. 
Notable exceptions are recent \emph{data dependent} LSH methods such as~\cite{andoni2015optimal} where the LSH distributions, and thus the collision probabilities, depend on the structure of data.

\smallskip

{\bf $\rho$-values.}
Much attention has been given to optimal so-called $\rho$-values of locality-sensitive hash functions, where we consider \emph{non-increasing} CPFs.
Suppose we are interested in hash collisions when $\text{dist}(\x,\y) \leq r_1$ but want to avoid hash collisions when $\text{dist}(\x,\y) \geq r_2$, for some $r_2>r_1$.
The $\rho$-value of this setting (denoted in this paper with $\rho_+$)  
 is the ratio of the logarithms of collision probabilities at distance $r_1$ and $r_2$, i.e., the real number in $[0,1]$ such that $f(r_1)=f(r_2)^\rho$.
The $\rho$-value determines the performance of LSH-based data structures for the \emph{$(r_1,r_2)$-approximate near neighbor} problem, see~\cite{Har-PeledIM12}.
In many spaces a good upper bound on $\rho$ can be given in terms of the ratio $c=r_2/r_1$, 
but in general the smallest possible $\rho$ can depend on $r_1$, $r_2$, $f(r_1)$, as well as the number of dimensions~$d$.
In our applications it will be natural to consider the ``dual'' $\rho$-value that measures the growth rate that can be achieved when distance increases.

\smallskip

{\bf LSHable functions.}
Charikar in~\cite{Charikar02} gave a necessary condition that all CPFs in the symmetric setting must fulfill, 
namely, $\dist(\x,\y) = 1 - \Pr[h(\x)=h(\y)]$ must be the distance measure of a metric, and more specifically this metric must be isometrically embeddable in~$\ell_1$.
In the asymmetric setting this condition no longer holds since in general $\Pr[h(\x)=g(\x)] < 1$.

Chierichetti et al. considered transformations that can be used to create new CPFs~\cite{Chierichetti15}, and studied the decision problem to verify if there exists an LSH with given pairwise collision probabilities~\cite{Chierichetti14}.
The transformations in~\cite[Lemma 7]{Chierichetti15} are considered in a symmetric setting, but the same constructions applied in the asymmetric setting give the following result, proved for completeness in Appendix~\ref{app:transform}:
\begin{lemma}
\label{lemma:transform}
Let $\{\mathcal{D}_i\}_{i = 1}^{n}$ be a collection of $n$ distance-sensitive families with CPFs $\{f_i\}_{i = 1}^{n}$. 
\begin{enumerate}
\item[\textnormal{(a)}] There exists a distance-sensitive family $\mathcal{D}_\text{concat}$ with CPF $f(x) = \prod_{i = 1}^n f_i(x)$. 
\item[\textnormal{(b)}] Given a probability distribution $\{p_i\}_{i = 1}^n$ over $\{\mathcal{D}_i\}$, there exists a distance-sensitive family $\mathcal{D}^p$ with CPF $f(x) = \sum_{i = 1}^n p_i f_i(x)$.
\end{enumerate}
\end{lemma}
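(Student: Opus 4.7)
The plan is to prove the two parts by the standard ``concatenation'' and ``mixture'' constructions, lifted from the symmetric LSH setting to the asymmetric one. The main subtlety is that each family $\mathcal{D}_i$ may produce functions into a different range, but since Definition~\ref{def:dsh} only cares about equality of hash values, we can always encode tuples or disjoint unions into $\mathbb{R}$ by an injection without affecting collision probabilities.

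For part (a), the plan is direct product. I would define a pair $(H,G)$ by independently sampling $(h_i,g_i)\sim \mathcal{D}_i$ for each $i=1,\dots,n$ and setting
\[
H(\x) = \bigl(h_1(\x),\dots,h_n(\x)\bigr),\qquad G(\y) = \bigl(g_1(\y),\dots,g_n(\y)\bigr),
\]
viewed as elements of $\mathbb{R}$ via a fixed injection from $\mathbb{R}^n$ into $\mathbb{R}$. Then $H(\x)=G(\y)$ if and only if $h_i(\x)=g_i(\y)$ for every $i$. By independence of the $n$ samples, the collision probability factors as
\[
\Pr[H(\x)=G(\y)] = \prod_{i=1}^n \Pr[h_i(\x)=g_i(\y)] = \prod_{i=1}^n f_i(\dist(\x,\y)),
\]
giving the required CPF. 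This yields $\mathcal{D}_{\mathrm{concat}}$.

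For part (b), the plan is a convex mixture. I would sample an index $I\in\{1,\dots,n\}$ with $\Pr[I=i]=p_i$, then sample $(h,g)\sim \mathcal{D}_I$ and output it. To ensure that hash values belonging to different families never collide spuriously, I would first compose each family's outputs with a fixed injection $\iota_i\colon \mathbb{R}\to\mathbb{R}$ whose images $\iota_1(\mathbb{R}),\dots,\iota_n(\mathbb{R})$ are pairwise disjoint (e.g., tag with the index $i$ and then encode into $\mathbb{R}$). Conditioning on $I=i$, the collision probability is exactly $f_i(\dist(\x,\y))$, and the disjointness of ranges means no cross-family collisions contribute. By the law of total probability,
\[
\Pr[h(\x)=g(\y)] = \sum_{i=1}^n p_i\, f_i(\dist(\x,\y)),
\]
yielding $\mathcal{D}^p$ with the claimed CPF.

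The only real obstacle is the bookkeeping about function ranges, namely making sure that (i) the concatenation in (a) really behaves like equality of tuples after we reinterpret as a map into $\mathbb{R}$, and (ii) the mixture in (b) does not introduce spurious collisions between samples drawn from different underlying families; both are handled by fixing injections into $\mathbb{R}$ once and for all. Everything else is a single application of independence (for (a)) and the law of total probability (for (b)), so no calculation beyond these identities is needed.
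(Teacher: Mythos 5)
Your proposal is correct and takes essentially the same route as the paper: part (a) is proved by independent concatenation into tuples, and part (b) by sampling an index $I$ with distribution $p$ and outputting the index-tagged hash values $(I,h_I(\x))$ and $(I,g_I(\y))$, which is exactly your disjoint-range injection. The extra care you take to re-encode tuples and tagged values back into $\mathbb{R}$ is a reasonable addition given Definition~\ref{def:dsh} declares the codomain to be $\mathbb{R}$; the paper silently elides this bookkeeping.
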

An example application of Lemma~\ref{lemma:transform} is shown in Figure~\ref{fig:plateau}.
\begin{figure*}[t]
\hfill
\includegraphics[width=0.4\linewidth]{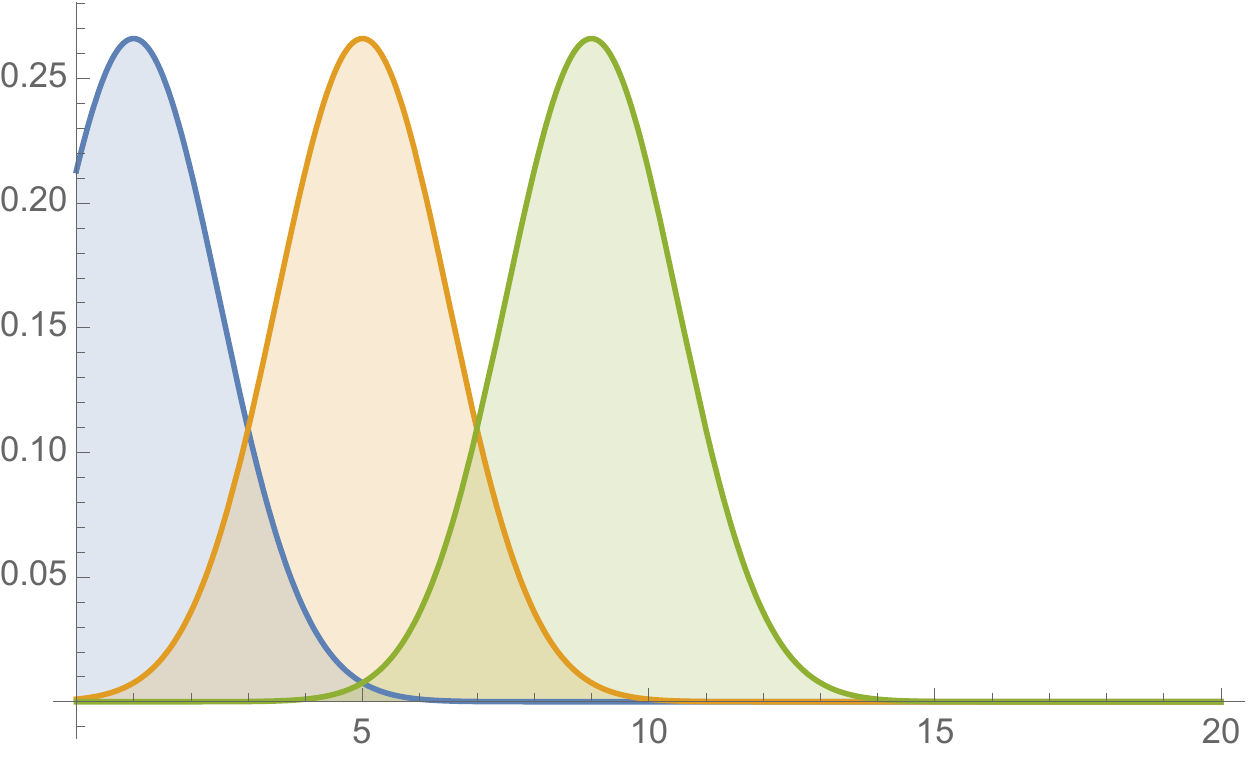}
\includegraphics[width=0.4\linewidth]{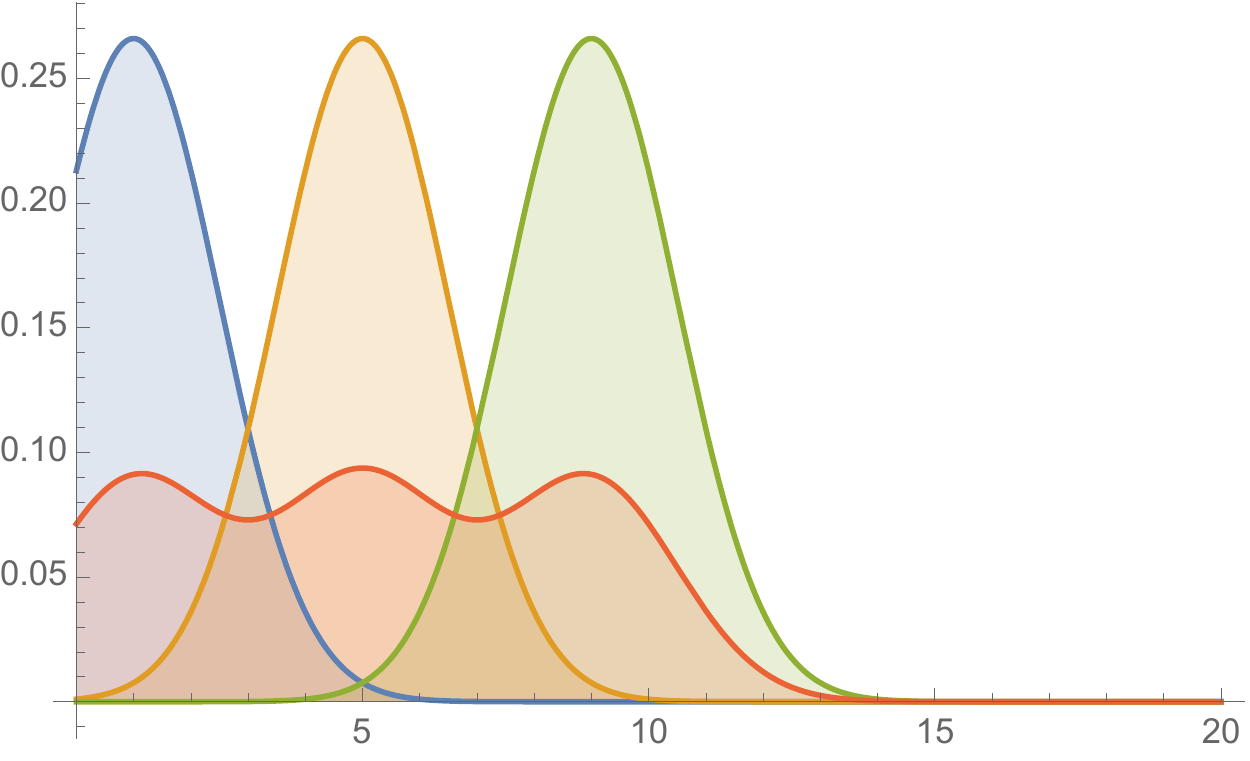}
\hfill\ 
\caption{Composing several unimodal CPFs (left) to form a ``step function'' CPF (red curve on the right) using Lemma~\ref{lemma:transform}. 
}
\label{fig:plateau}
\end{figure*}
%
%
Interestingly, at least in the symmetric setting, the application of this lemma to a single CPF yields \emph{all} transformations that are guaranteed to map a CPF to a CPF.
Chierichetti et al.~\cite{chierichetti2017distortion} recently extended the study of CPFs in the symmetric setting to allow \emph{approximation}, i.e., allowing the collision probability to differ from a target function by a given approximation factor.

\smallskip

{\bf Asymmetric LSH.} 
%
Motivated by applications in machine learning, Vijayanarasimhan et al.~\cite{vijayanarasimhan2014hyperplane} presented asymmetric LSH methods for Euclidean space where the CPF is a decreasing function of the dot product $|\langle \x, \y \rangle|$.
Shrivastava and Li~\cite{ShrivastavaL14} also explored how asymmetry can be used to achieve new CPFs (increasing), in settings where the inner product of vectors is used to measure closeness.
Neyshabur and Srebro~\cite{neyshabur15symmetric} extended this study by showing that the extra power obtained by asymmetry hinges on restrictions on the vector pairs for which we consider collisions:
If vectors are not restricted to a bounded region of $\mathbb{R}^d$, no nontrivial CPF (as a function of inner product) is possible.
On the other hand, if one vector is normalized (e.g.~a query vector), the performance of known asymmetric LSH schemes can be matched with a symmetric method.
But in the case where vectors are bounded but not normalized, asymmetric LSH is able to obtain CPFs that are impossible for symmetric LSH.
Ahle et al.~\cite{ahle2016inner} showed further impossibility results for asymmetric LSH applied to inner products, and that symmetric LSH is possible in a bounded domain even without normalization if we just allow collision probability~1 when vectors coincide.

Indyk~\cite{Indyk2003} showed how asymmetry can be used to enable new types of embeddings.
More recently asymmetry has been used in the context of locality-sensitive \emph{filters}~\cite{andoni2017optimal,christiani2017framework} and \emph{maps}~\cite{christiani2017beyond}.
The idea is to map each point $\x$ to a pair of sets $(h(\x),g(\x))$ such that $\Pr[h(\x)\cap g(\y) \ne\emptyset]$ is constant if $\x$ and $\y$ are close, and very small if $\x$ and $\y$ are far from each other.
This yields a nearest neighbor data structure that adds for each vector $\x \in P$ the elements of $h(\x)$ to a hash table; a query for a vector $\q$ proceeds by looking up each key in $g(\q)$ in the hash table.
One can transform such methods into asymmetric LSH methods by using min-wise hashing~\cite{Broder97,broder1997syntactic}, 
see~\cite[Theorem~1.4]{christiani2017framework}.

\smallskip

{\bf Recommender systems.}
Returning to our motivating example, the topic of getting ``interesting'' recommendations using nearest neighbor methods is not new.
Abbar et al.~\cite{AbbarAIMV13} built a nearest neighbor data structure on a \emph{core-set} of $P$ to guarantee diverse query results.
However, this method effectively discards much of the data set, so may not be suitable in all settings.
Indyk~\cite{Indyk2003} and 
Pagh et al.~\cite{pagh2017approximate} proposed data structures for finding the furthest neighbor in Euclidean space, leveraging random projections and using 
specialized data structures.

\smallskip

{\bf Privacy-preserving search.}
Privacy is an increasing concern in connection with data analytics.
Proximity information is potentially sensitive, since it may be used to reveal the source of a data point.
Ideally we would like information-theoretical privacy guarantees~\cite{Agrawal:2001:DQP:375551.375602}, but the standard technique of adding noise to data does not seem to work well for proximity problems, since adding noise merely shifts distances.
Riazi et al.~\cite{Riazi16} considered
answering nearest neighbor queries without leaking the actual distance.
They showed that standard LSH approaches can compromise privacy under a ``triangulation'' attack, however this risk can be reduced by designing an LSH (symmetric) with a CPF that is ``flat'' in the region of interest of an attacker.
However, only rather weak privacy guarantees were provided.

\section{Optimal angular DSH}\label{angular}

This section describes DSH schemes with monotonically increasing and decreasing CPFs for the unit sphere that match the lower bounds shown in the following section~\ref{sec:lower:bounds}.
As LSH are monotone schemes too (i.e., CPF decreasing with distance or increasing with similarity), we refer to DSH schemes with the opposite monotonicity as \emph{anti-LSH} (i.e., CPF increasing with distance or decreasing with similarity). 

For notational simplicity, the CPF is expressed as function of the inner product of two points, however it holds for other angular similarity and distance measures: on the unit sphere, inner product is equivalent to the cosine similarity and there is a 1-1 correspondence with Euclidean distance and angular distance. 
Results on the unit sphere can be extended to $\ell_s$-spaces for $0 < s \leq 2$ through Rahimi and Recht's~\cite{rahimi2007} embedding version of Bochner's Theorem~\cite{rudin1990} applied to the characteristic functions of $s$-stable distributions as used in~\cite{christiani2017framework}.  

We use the notation $\DSH_-$ to denote a family with CPF that is decreasing in the inner product between points, and $\DSH_+$ to denote a family with a CPF that is increasing in the inner product between points.
The idea behind the following constructions is to take a standard (symmetric) locality-sensitive hash family $\DSH_+$ for the unit sphere with an increasing CPF, and transform it into a family $\DSH_-$ with a decreasing CPF by introducing asymmetry (i.e, by negating the query value). 

\subsection{Cross-polytope DSH schemes}\label{sec:cross-polytope}

Andoni et al.~\cite{andoni2015practical} described the following LSH family $\mathcal{CP}$ for the $d$-dimensional unit sphere $\sphere{d}$: To sample a function $h \sim \mathcal{CP}$, sample 
a random matrix $A \in \mathbb{R}^{d \times d}$ in which each entry is independently sampled from a standard normal distribution $\mathcal{N}(0,1)$. To compute a hash value of a point $\x$, compute $\hat{\x} = A\x/\norm{A\x}\in \sphere{d}$ and map $\x$ to the closest point to $\hat{\x}$ among $\{\pm \mathbf{e_i}\}_{1 \leq i \leq d}$, where $\mathbf{e_i}$ is the $i$-th standard basis vector of $\mathbb{R}^d$. Intuitively, a random hash function from $\mathcal{CP}$ applies a random rotation to a point and hashes it to its closest point on the cross-polytope.

To formally define a DSH from $\mathcal{CP}$, we sample $(h_+, g_+) \sim \mathcal{CP}_+$ by sampling $h \sim \mathcal{CP}$ and set $h_+ = g_+ = h$. 
In \cite{andoni2015practical}, Andoni et al. showed the following theorem, here reproduced in terms of inner product similarity.\footnote{An inner product of $\alpha \in (-1,1)$ between two vectors on the unit sphere corresponds to Euclidean distance $\tau = \sqrt{2(1 - \alpha)}$.}
\begin{theorem}[Theorem 1 in \cite{andoni2015practical}] Let $f$ be the CPF of hash family $\mathcal{CP}_+$. Suppose that $\x,\y \in \sphere{d}$ such that $\ip{\x}{\y} = \alpha$, where $\alpha \in (-1,1)$. Then, 
\begin{align*}
\ln \frac{1}{f(\alpha)} = \frac{1 - \alpha}{1 + \alpha} \ln d + O_\alpha(\ln \ln d).
\end{align*}
\label{thm:crosspolytope}
\end{theorem}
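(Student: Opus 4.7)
The plan is to reduce the collision probability to a bivariate Gaussian tail computation and then extract the leading asymptotic by Laplace's method. First, exploiting the rotational invariance of the i.i.d.\ Gaussian matrix $A$, I would assume without loss of generality that $\x = \mathbf{e}_1$ and $\y = \alpha \mathbf{e}_1 + \sqrt{1-\alpha^2}\,\mathbf{e}_2$. Then $A\x = \mathbf{u}$ (the first column of $A$) and $A\y = \alpha \mathbf{u} + \sqrt{1-\alpha^2}\,\mathbf{v}$, where $\mathbf{u}, \mathbf{v}$ are independent standard Gaussians in $\real^d$. Normalization does not alter the argmax absolute coordinate or its sign, so $h(\x)$ is determined by the pair $(\arg\max_j |u_j|,\, \mathrm{sign}(u_{\arg\max}))$, and $h(\y)$ is determined analogously by $\mathbf{G} := \alpha \mathbf{u} + \sqrt{1-\alpha^2}\,\mathbf{v}$.

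By symmetry over the $2d$ possible hash outcomes, $f(\alpha) = 2d \cdot \Pr[E_1]$, where $E_1$ is the event that both $\mathbf{u}$ and $\mathbf{G}$ attain their largest absolute coordinate at index $1$ with positive sign. Since the pairs $(u_j, G_j)$ for $j \ge 2$ are i.i.d.\ bivariate normal with correlation $\alpha$, conditioning on $(u_1, G_1) = (t, s)$ gives
\begin{equation*}
f(\alpha) = 2d \int_0^\infty\!\!\int_0^\infty \varphi_\alpha(t,s)\bigl(1 - q_\alpha(t,s)\bigr)^{d-1}\,dt\,ds,
\end{equation*}
where $\varphi_\alpha$ is the bivariate standard normal density with correlation $\alpha$ and $q_\alpha(t,s) = \Pr[|U| > t \text{ or } |V| > s]$. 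I would then apply Laplace's method. By the symmetry $(t,s) \leftrightarrow (s,t)$ the saddle point lies on the diagonal $t = s = t^*$, where $\varphi_\alpha(t,t) = \frac{1}{2\pi\sqrt{1-\alpha^2}}\exp(-t^2/(1+\alpha))$. Using the Mills ratio $\bar\Phi(t) \sim \varphi(t)/t$ together with the estimate $q_\alpha(t,t) = 4\bar\Phi(t)(1 + o(1))$ valid for large $t$ (the intersection $\Pr[|U| > t, |V| > t]$ decays like $\exp(-t^2/(1+\alpha))$ and is of lower order), the log-integrand on the diagonal becomes $\ln(2d) - t^2/(1+\alpha) - 4(d-1)\bar\Phi(t) + O(1)$. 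The saddle-point equation $\varphi(t^*) = t^*/(2(1+\alpha)(d-1))$ yields $(t^*)^2 = 2\ln d - \ln\ln d + O_\alpha(1)$, at which point $(d-1)\bar\Phi(t^*) = O(1/\sqrt{\ln d})$ is negligible, and the leading value is $\ln d - 2\ln d/(1+\alpha) + O_\alpha(\ln\ln d) = -\frac{1-\alpha}{1+\alpha}\ln d + O_\alpha(\ln\ln d)$, as claimed.

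The main obstacle is obtaining sharp, uniform asymptotics for $q_\alpha(t,s)$ in the regime $t,s = \Theta(\sqrt{\ln d})$: the union event must be evaluated with Mills-type bounds tight enough that the $O(\ln\ln d)$ error is not inflated, and the intersection must be shown to contribute only to lower order. The $\alpha$-dependence in the error term arises because the exponent of the intersection tail involves $1/(1+|\alpha|)$, and the Gaussian width of the saddle contributes constants depending on $(1-\alpha^2)$. The Laplace method additionally requires control on the integrand away from the saddle, which follows from strict log-concavity of $\varphi_\alpha$ and the monotonicity of $q_\alpha$ in each argument, ensuring that the contribution outside a $\Theta(1/\sqrt{\ln d})$-neighborhood of $(t^*, t^*)$ is absorbed in the $O_\alpha(\ln\ln d)$ error term.
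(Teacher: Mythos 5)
This theorem is not proved in the paper you are reading; it is quoted verbatim as Theorem~1 of Andoni et al.~\cite{andoni2015practical} (only restated in terms of the inner product $\alpha$ rather than the Euclidean distance $\tau=\sqrt{2(1-\alpha)}$), so there is no in-paper proof to compare against. That said, your derivation is essentially the argument behind the original result: reduce by rotational invariance to $\x=\mathbf{e}_1$, $\y=\alpha\mathbf{e}_1+\sqrt{1-\alpha^2}\,\mathbf{e}_2$, observe that the pair $\bigl((A\x)_j,(A\y)_j\bigr)_{j}$ is i.i.d.\ $\alpha$-correlated bivariate Gaussian, write $f(\alpha)=2d\cdot\Pr[E_1]$ by outcome symmetry, condition on the first coordinate pair to obtain the double integral with the $(1-q_\alpha)^{d-1}$ kernel, and extract the exponent by a saddle on the diagonal. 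The saddle equation, the value $(t^*)^2 = 2\ln d - \ln\ln d + O_\alpha(1)$, and the final cancellation $\ln d - 2\ln d/(1+\alpha) = -\tfrac{1-\alpha}{1+\alpha}\ln d$ are all correct, and the Hessian of the log-integrand at the saddle has entries $\Theta_\alpha(\ln d)$, so the 2D Laplace prefactor contributes $-\ln\ln d + O_\alpha(1)$, consistent with the $O_\alpha(\ln\ln d)$ error term.

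One small arithmetic slip: at the saddle you claim $(d-1)\bar\Phi(t^*) = O(1/\sqrt{\ln d})$, but the saddle equation $\varphi(t^*) = t^*/(2(1+\alpha)(d-1))$ combined with the Mills ratio $\bar\Phi(t^*)\sim\varphi(t^*)/t^*$ gives $(d-1)\bar\Phi(t^*) \to \tfrac{1}{2(1+\alpha)}$, which is $\Theta_\alpha(1)$, not $o(1)$. This does not affect the conclusion---$\Theta_\alpha(1)$ is still comfortably absorbed in the $O_\alpha(\ln\ln d)$ error---but the stated rate is wrong. Otherwise the sketch is sound; turning it into a full proof requires the uniform control of $q_\alpha(t,s)$ away from the diagonal and the log-concavity argument you mention, which is exactly where the technical work in~\cite{andoni2015practical} lives.
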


To obtain a DSH with a monotonically decreasing CPF (in the similarity), we define the family $\mathcal{CP}_-$ consisting of pairs $(h_-, g_-)$ as follows: To sample a pair $(h_-,g_-) \sim \mathcal{CP}_-$, sample a function $h \sim \mathcal{CP}$. For each point $\x \in X$, set $h_-(\x) = h(\x)$ and $g_-(\x) = h(-\x)$. This means that we invert the query point before applying the hash function. Intuitively, we map the point to the point on the cross-polytope that is \emph{furthest away}
after applying the random rotation. 

\begin{corollary}Let $f$ be the CPF of hash family $\mathcal{CP}_-$. Suppose that $\x,\y \in \sphere{d}$ such that $\ip{\x}{\y} = \alpha$, where $\alpha \in (-1,1)$. Then, 
\begin{align*}
\ln \frac{1}{f(\alpha)} = \frac{1 + \alpha}{1 - \alpha} \ln d + O_\alpha(\ln \ln d).
\end{align*}
\end{corollary}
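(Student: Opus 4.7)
The plan is to reduce the corollary to Theorem~\ref{thm:crosspolytope} by exploiting the construction of $\mathcal{CP}_-$: a sample $(h_-, g_-)$ is obtained from a single $h \sim \mathcal{CP}$ by setting $h_-(\x) = h(\x)$ and $g_-(\x) = h(-\x)$. Since both $h_-$ and $g_-$ are determined by the same underlying $h$, for any $\x, \y \in \sphere{d}$ we have
\[
\Pr_{(h_-, g_-) \sim \mathcal{CP}_-}\!\bigl[h_-(\x) = g_-(\y)\bigr] \;=\; \Pr_{h \sim \mathcal{CP}}\!\bigl[h(\x) = h(-\y)\bigr],
\]
which is exactly the symmetric cross-polytope collision probability evaluated on the pair $(\x, -\y)$.

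Next, I would verify that the hypotheses of Theorem~\ref{thm:crosspolytope} apply to this pair. Since $\norm{-\y} = \norm{\y} = 1$, the point $-\y$ lies on $\sphere{d}$, and $\ip{\x}{-\y} = -\ip{\x}{\y} = -\alpha \in (-1, 1)$ whenever $\alpha \in (-1,1)$. Substituting $-\alpha$ in place of $\alpha$ in Theorem~\ref{thm:crosspolytope} yields
\[
\ln \frac{1}{f(\alpha)} \;=\; \frac{1 - (-\alpha)}{1 + (-\alpha)} \ln d + O_{-\alpha}(\ln \ln d) \;=\; \frac{1 + \alpha}{1 - \alpha}\ln d + O_\alpha(\ln \ln d),
\]
which is the claimed bound. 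The rewrite of $O_{-\alpha}(\cdot)$ as $O_\alpha(\cdot)$ is justified because the hidden dependence in Theorem~\ref{thm:crosspolytope} only requires the inner product to be bounded away from $\pm 1$, and this property is preserved under negation.

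I do not anticipate any real obstacle: all the technical work on random Gaussian rotations and nearest cross-polytope vertex analysis is encapsulated in Theorem~\ref{thm:crosspolytope}, so the proof is a one-line symmetry argument. The only point that deserves a brief sanity check is that $h_-$ and $g_-$ share the same sample $h$ (so the negation trick really produces the symmetric collision event on $(\x, -\y)$ rather than an independent copy), and this is immediate from the definition of $\mathcal{CP}_-$. The result is the natural ``flipped'' counterpart of Theorem~\ref{thm:crosspolytope}, with the roles of $\alpha$ and $-\alpha$ interchanged, confirming that the CPF of $\mathcal{CP}_-$ is monotonically decreasing in inner-product similarity and matches the $\rho_-$ bound discussed after Theorem~\ref{thm:anti}.
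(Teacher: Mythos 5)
Your proof is correct and follows exactly the same route as the paper: it reduces the claim to Theorem~\ref{thm:crosspolytope} by observing that $h_-(\x) = g_-(\y)$ is the event $h(\x) = h(-\y)$, then substitutes $-\alpha$ for $\alpha$. Your extra remark that the $O_{-\alpha}(\cdot)$ constant may be relabeled $O_\alpha(\cdot)$ is a reasonable sanity check but does not change the argument.
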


\begin{proof}
If $\ip{\x}{\y} = \alpha$, then $\ip{\x}{-\y} = -\alpha$. Since $h_-(\x) = g_-(\y)$ corresponds to $h(\x) = h(-\y)$, we can apply the result of Theorem~\ref{thm:crosspolytope} for similarity threshold $-\alpha$.  
\end{proof}

\subsection{Filter-based DSH schemes}\label{sec:unit_sphere}
Recent work on similarity search for the unit sphere \cite{andoni2014beyond, andoni2015optimal, BeckerDGL16, christiani2017framework} has used variations of the following technique: Pick a sequence of random spherical caps\footnote{A spherical cap is a portion of a sphere cut off by a plane.} and hash a point $\x \in \sphere{d}$ to the index of the first spherical cap in the sequence that contains $\x$. 
By allowing the spherical cap to have different sizes for queries and updates, it was shown how to obtain space-time tradeoffs for similarity search on the unit sphere that is optimal for random data~\cite{andoni2017optimal, christiani2017framework}.
We obtain a family $\DSH_{-}$ with a decreasing CPF by taking a standard (symmetric) family with its sequence of spherical caps 
and introduce asymmetry by negating the query point. Intuitively, this means that for $(h, g) \sim \DSH_{-}$ we let $h$ use the original sequence of spherical caps while $g$ uses the spherical caps that are \emph{diametrically opposite} to the ones used by $h$. 
We therefore get a collision $h(\x) = g(\y)$ if and only if $\x$ and $\y$ are contained in random diametrically opposite spherical caps.
We proceed by describing the family $\DSH_{+}$ and the modification that gives us the family $\DSH_{-}$.

The family $\DSH_{+}$ takes as parameter a real number $t > 0$ and an integer $m$ that we will later set as a function of $t$.
We sample a pair of functions $(h,g)$ from $\DSH_{+}$ by sampling $m$ vectors $z_{1}, \dots, z_{m}$ where $\z_{i} \sim \mathcal{N}^{d}(0,1)$. 
The functions $h,g$ map a point $\x \in \sphere{d}$ to the index $i$ of the first projection $\z_i$ where $\ip{\z_i}{\x} \geq t$.
If no such projection is found, then we ensure that $h(\x) \neq g(\x)$ by mapping them to different values. Formally, we set
\begin{align*}
	h_{+}(\x) &= \min(\{ i \mid \ip{\z_i}{\x} \geq t \} \cup \{ m + 1 \}), \\
	g_{+}(\x) &= \min(\{ i \mid \ip{\z_i}{\x} \geq t \} \cup \{ m + 2 \}). 
\end{align*}

We use the idea of negating the query point to obtain a family $\DSH_{-}$ from $\DSH_{+}$ by setting: 
\begin{equation*}
	g_{-}(\x) = g_{+}(-\x) = \min(\{ i \mid \ip{\z_i}{\x} \leq -t \} \cup \{ m + 2 \}). 
\end{equation*}

The analysis of CPF for $\DSH_{-}$ and $\DSH_{+}$ and the proof of Theorem~\ref{thm:anti} are provided in Appendix~\ref{app:tailbounds}.

\section{Lower bound for monotone DSH}\label{sec:lower:bounds}
This section provides  lower bounds on the CPFs of DSH families in $d$-dimensional Hamming space under the 
similarity measure $\simil_{H}(\x, \y) = 1 - 2 \norm{\x-\y}_{1}/d$.
These results extend to the unit sphere and Euclidean space through standard embeddings. 

Our primary focus is to obtain the lower bound in Theorem~\ref{thm:revexphash}, which holds for a CPF that is decreasing with the similarity.
As with our upper bounds for the unit sphere, re-applying the same techniques also yields a lower bound for the case of an increasing CPF in the similarity.
The proof combines the (reverse) small-set expansion theorem by O'Donnell~\cite{odonnell2014analysis} with techniques inspired by the LSH lower bound of Motwani et al.~\cite{motwani2007}. 
The main contribution here is to extend this lower bound for pairs of subsets of Hamming space to our object of interest: distributions over pairs of functions that partition space. 
We begin by introducing the required tools from~\cite{odonnell2014analysis}.
\begin{definition}\label{def:correlation}
	For $-1 \leq \alpha \leq 1$ and $\x, \y \in \cube{n}$ we say that $(\x, \y)$ is randomly $\alpha$-correlated 
	if $\x$ is uniformly distributed over $\cube{n}$ and each component of $\y$ is i.i.d.\ according to 
	\begin{equation*}
		\y_i =
		\begin{cases}
			\x_i & \text{with probability } \frac{1 + \alpha}{2}, \\
			1 - \x_i & \text{with probability } \frac{1 - \alpha}{2}. 
		\end{cases}
	\end{equation*}
\end{definition}
The reverse small-set expansion theorem lower bounds the probability that random $\alpha$-correlated points $(\x, \y)$ 
end up in a pair of subsets $A, B$ of the Hamming cube, as a function of the size of these subsets.  
In the following, for $A \subset \cube{d}$ we refer to the quantity $|A|/2^d$ as the \emph{volume} of $A$.
\begin{theorem}[Rev.\ Small-Set Expansion~\cite{odonnell2014analysis}]\label{thm:expansionreverse}
Let $0 \leq \alpha \leq 1$. Let $A, B \subseteq \cube{d}$ have volumes $\exp(-a^{2}/2)$, $\exp(-b^{2}/2)$, respectively, where $a, b \geq 0$. 
Then we have that
\begin{equation*}
	\Pr_{\substack{(\x, \y) \\ \alpha \text{-corr.}}} [\x \in A, \y \in B]\geq \exp\left(-\frac{1}{2}\frac{a^2 + 2\alpha a b + b^2}{1 - \alpha^2}\right).
\end{equation*}
\end{theorem}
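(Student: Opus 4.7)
The plan is to derive this bound from the two-function form of Borell's reverse hypercontractivity inequality on the Boolean cube, then optimize the H\"older-type exponents to recover the claimed quadratic form. Since the theorem is cited from O'Donnell's text, I would follow that line of argument.

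First, I would invoke the two-function reverse hypercontractivity on $\cube{d}$: for any $\alpha \in [0,1]$, any $p, q \leq 1$ with $(1-p)(1-q) \geq \alpha^2$, and any nonnegative $f, g \colon \cube{d} \to \real_{\geq 0}$, one has
\[
    \E[f(\x)\,g(\y)] \;\geq\; \|f\|_p \cdot \|g\|_q,
\]
where $(\x, \y)$ is randomly $\alpha$-correlated. This is the reverse analogue of the two-function hypercontractive inequality. The standard proof goes by induction on $d$: tensorization over coordinates reduces the claim to the one-dimensional case on the uniform two-point space, where it becomes a finite algebraic verification.

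Next, I would specialize to $f = \1_A$ and $g = \1_B$. Since $\|\1_A\|_p = (|A|/2^d)^{1/p} = \exp(-a^2/(2p))$ and $\|\1_B\|_q = \exp(-b^2/(2q))$, the inequality becomes
\[
    \Pr[\x \in A,\ \y \in B] \;\geq\; \exp\!\Bigl(-\tfrac{a^2}{2p} - \tfrac{b^2}{2q}\Bigr).
\]
I would then optimize $(p,q)$ subject to $(1-p)(1-q) \geq \alpha^2$. Driving the constraint to equality, the choice
\[
    p \;=\; (1-\alpha^2)\,\frac{a}{a + \alpha b}, \qquad q \;=\; (1-\alpha^2)\,\frac{b}{b + \alpha a}
\]
satisfies $1 - p = \alpha(b+\alpha a)/(a+\alpha b)$ and $1 - q = \alpha(a+\alpha b)/(b+\alpha a)$, and hence $(1-p)(1-q) = \alpha^2$. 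A short computation then yields $a^2/p + b^2/q = (a^2 + 2\alpha ab + b^2)/(1 - \alpha^2)$, which is precisely the exponent appearing in the target bound.

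The main obstacle is the one-dimensional base case of reverse hypercontractivity. Tensorization over coordinates and the parameter optimization above are routine, but the one-variable inequality requires showing that on $\cube{1}$ the quantity $\E[f(\x)g(\y)] - \|f\|_p \|g\|_q$ is nonnegative whenever $(1-p)(1-q) \geq \alpha^2$; after a change of variables this reduces to a nontrivial convexity argument verified via calculus. Degenerate regimes --- $a = 0$ or $b = 0$ (where $A$ or $B$ has full measure) and $\alpha = 1$ (where the exponent diverges and the bound is vacuous) --- are handled separately and yield trivial or equality cases.
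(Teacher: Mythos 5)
The paper cites this result from O'Donnell's \emph{Analysis of Boolean Functions} without reproving it, and your proposal matches the standard derivation given there: apply the two-function reverse hypercontractivity inequality to indicator functions and then optimize the H\"older exponents subject to $(1-p)(1-q) \geq \alpha^2$. Your choice of $p$ and $q$ is correct --- one verifies $(1-p)(1-q) = \alpha^2$ and $a^2/p + b^2/q = (a^2 + 2\alpha ab + b^2)/(1-\alpha^2)$ --- and it is appropriate to defer the single-coordinate base case of reverse hypercontractivity, which is the genuinely technical step and is exactly what the cited source supplies.
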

We define a probabilistic version of the collision probability function that we will state results for. 
In Section \ref{sec:lower_concentration} we will apply concentration bounds on the similarity between $\alpha$-correlated pairs of points in order to make statements about the actual CPF.
We will use $R$ to denote the range of a family of functions which, without loss of generality, we can assume to be finite.
\begin{definition}[Probabilistic CPF]
	Let $\DSH$ be a distribution over pairs $h, g \colon \cube{d} \to R$. 
    We define the probabilistic CPF $\hat{f} \colon [-1,1] \to [0,1]$ by 
	\begin{equation*}
		\hat{f}(\alpha) = \Pr_{\substack{(h,g) \sim \DSH \\ \text{$(\x, \y)$ $\alpha$-corr.} }}[h(\x) = g(\y)].
	\end{equation*}
\end{definition}
The proof of the lower bound will make use of the following technical inequality that follows from two applications of Jensen's inequality.
\begin{lemma} \label{lem:inequality} 
	Let $p, q$ denote discrete probability distributions, then for every $c \geq 1$ we have that 
	\begin{equation*}
		\sum_{i} (p_i q_i)^c \geq \left(\sum_{i} p_i q_i \right)^{2c-1} 
	\end{equation*}
	with reverse inequality for $c \leq 1$.
\end{lemma}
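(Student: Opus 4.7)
The plan is to reorganize $\sum_i (p_i q_i)^c$ so that two successive applications of Jensen's inequality yield the bound. The key decomposition is to write each summand as $(p_i q_i)^c = q_i \cdot (p_i q_i^{(c-1)/c})^c$, which presents the whole sum as the $q$-weighted average of the $c$-th power of the auxiliary quantity $p_i q_i^{(c-1)/c}$.

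For $c \geq 1$ the function $x \mapsto x^c$ is convex, so Jensen with weights $q_i$ would bound
\begin{equation*}
\sum_i (p_i q_i)^c \;=\; \sum_i q_i \bigl(p_i q_i^{(c-1)/c}\bigr)^c \;\geq\; \Bigl(\sum_i p_i q_i^{(2c-1)/c}\Bigr)^c,
\end{equation*}
where I have used $q_i \cdot q_i^{(c-1)/c} = q_i^{(2c-1)/c}$ inside. A second application of Jensen, this time with weights $p_i$ and the map $x \mapsto x^{(2c-1)/c}$, would bound the inner sum below by $(\sum_i p_i q_i)^{(2c-1)/c}$; the map in question is convex exactly when the exponent $(2c-1)/c = 2 - 1/c$ is at least $1$, which is equivalent to the hypothesis $c \geq 1$. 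Raising to the $c$-th power and chaining with the first step delivers $\sum_i (p_i q_i)^c \geq (\sum_i p_i q_i)^{2c-1}$ as required.

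For the reverse inequality when $c \leq 1$, in the natural range $c \in [1/2,1]$ in which the exponent $(2c-1)/c$ lies in $[0,1]$, both $x \mapsto x^c$ and $x \mapsto x^{(2c-1)/c}$ are concave, so both Jensen steps reverse and the identical chaining gives $\sum_i (p_i q_i)^c \leq (\sum_i p_i q_i)^{2c-1}$. The only non-routine step in the argument is spotting the initial decomposition $(p_i q_i)^c = q_i \cdot (p_i q_i^{(c-1)/c})^c$; once it is fixed, the exponents line up so that a single intermediate quantity $\sum_i p_i q_i^{(2c-1)/c}$ bridges the two applications of Jensen with no residue left to control, and convexity of the two outer functions corresponds cleanly to the hypothesis on $c$.
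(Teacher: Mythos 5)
Your proof is correct and follows essentially the same two-step Jensen argument as the paper, with the roles of $p$ and $q$ swapped in the initial decomposition: the paper writes $\sum_i (p_iq_i)^c = \sum_i p_i\,(p_i^{1-1/c}q_i)^c$ where you write $\sum_i q_i\,(p_i q_i^{(c-1)/c})^c$, and since the statement is symmetric in $p,q$ the two are interchangeable, leading to the same intermediate quantity (up to the swap) and the same pair of convexity conditions on $x\mapsto x^c$ and $x\mapsto x^{2-1/c}$. Your explicit remark that the reverse direction needs $c\geq 1/2$ so that the exponent $2-1/c$ is nonnegative (hence $x\mapsto x^{2-1/c}$ is actually concave) is a small but genuine sharpening over the paper's blanket phrase ``for $c\leq 1$,'' and $c\in[1/2,1]$ is indeed the range that arises in the application.
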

\begin{proof}
	Assume $c \geq 1$.
	By Jensen's inequality, using the fact that $x \mapsto x^c$ and $x \mapsto x^{2 - 1/c}$ are convex we have that 
	\begin{equation*}
	\sum_{i} (p_i q_i)^c = \sum_{i} p_i(p_i^{1 - 1/c} q_i) 
						 \geq \left( \sum_i p_i^{2 - 1/c} q_i \right)^c 
						 \geq \left(\sum_{i} p_i q_i \right)^{2c-1}. 
	\end{equation*}
	For $c \leq 1$ we have that $x \mapsto x^c$ and $x \mapsto x^{2 - 1/c}$ are concave and the inequality is reversed.
\end{proof}
We are now ready to state our main lemma that lower bounds $\hat{f}(\alpha)$ in terms of $\hat{f}(0)$. This immediately implies Theorem~\ref{thm:revexphash}.
\begin{lemma} \label{lem:revexphash}
	For every $0 \leq \alpha < 1$ and for every distribution $\DSH$ over pairs of functions $h, g \colon \cube{d} \to R$, we have that $\hat{f}(\alpha) \geq \hat{f}(0)^{\frac{1+\alpha}{1-\alpha}}$.
\end{lemma}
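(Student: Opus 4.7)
The plan is to condition on the choice of $(h,g) \sim \DSH$ and decompose the collision event through the level sets of $h$ and $g$. For each value $r \in R$, let $A_r = h^{-1}(r)$ and $B_r = g^{-1}(r)$, so $\{A_r\}$ and $\{B_r\}$ are two partitions of $\cube{d}$. Writing $p_r = |A_r|/2^d = e^{-a_r^2/2}$ and $q_r = |B_r|/2^d = e^{-b_r^2/2}$, both $\{p_r\}$ and $\{q_r\}$ are probability distributions over $R$. The conditional collision probability $\hat{f}_{(h,g)}(\alpha)$ for $\alpha$-correlated $(\x,\y)$ decomposes as $\sum_r \Pr[\x \in A_r \land \y \in B_r]$, and at $\alpha = 0$ this equals exactly $\sum_r p_r q_r$.

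For $\alpha > 0$, I would apply the Reverse Small-Set Expansion Theorem~\ref{thm:expansionreverse} to each pair $(A_r, B_r)$ individually and sum, yielding, for fixed $(h,g)$,
\begin{equation*}
\hat{f}_{(h,g)}(\alpha) \geq \sum_r \exp\left(-\tfrac{a_r^2 + 2\alpha a_r b_r + b_r^2}{2(1-\alpha^2)}\right).
\end{equation*}
The cross term $2\alpha a_r b_r$ is inconvenient, so the key step is the elementary inequality $\alpha(a_r - b_r)^2 \geq 0$, i.e.\ $a_r^2 + 2\alpha a_r b_r + b_r^2 \leq (1+\alpha)(a_r^2 + b_r^2)$. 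Dividing by $1 - \alpha^2 = (1-\alpha)(1+\alpha)$ reduces the exponent to $(a_r^2 + b_r^2)/(2(1-\alpha))$, which rewrites the right-hand side as $\sum_r (p_r q_r)^{1/(1-\alpha)}$. Applying Lemma~\ref{lem:inequality} with $c = 1/(1-\alpha) \geq 1$ lower-bounds this by $(\sum_r p_r q_r)^{2c-1}$, and the algebraic identity $2c - 1 = (1+\alpha)/(1-\alpha)$ closes the loop to give $\hat{f}_{(h,g)}(\alpha) \geq \hat{f}_{(h,g)}(0)^{(1+\alpha)/(1-\alpha)}$.

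To finish, I would take expectation over $(h,g) \sim \DSH$. Since $(1+\alpha)/(1-\alpha) \geq 1$, the map $x \mapsto x^{(1+\alpha)/(1-\alpha)}$ is convex, so Jensen's inequality gives
\begin{equation*}
\hat{f}(\alpha) = \E_{(h,g)}[\hat{f}_{(h,g)}(\alpha)] \geq \E_{(h,g)}\!\left[\hat{f}_{(h,g)}(0)^{\frac{1+\alpha}{1-\alpha}}\right] \geq \hat{f}(0)^{\frac{1+\alpha}{1-\alpha}},
\end{equation*}
which is the desired bound. The most delicate point is ensuring that the AM-GM step (wasteful whenever $a_r \neq b_r$, i.e.\ when $|A_r|$ and $|B_r|$ differ) is exactly compensated by Lemma~\ref{lem:inequality}: the exponent $1/(1-\alpha)$ introduced by the simplification is precisely the value of $c$ for which $2c-1$ equals the target exponent $(1+\alpha)/(1-\alpha)$. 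Any attempt to sharpen the AM-GM step risks destroying this matching, so the main obstacle is recognizing that the two inequalities are tuned to each other and that the extra outer Jensen step over $\DSH$ is legal only because $(1+\alpha)/(1-\alpha) \geq 1$ for $\alpha \geq 0$.
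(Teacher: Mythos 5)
Your proof is correct and follows essentially the same route as the paper's: per-$(h,g)$ decomposition over level sets, Reverse Small-Set Expansion applied termwise, the $\alpha(a_r-b_r)^2\geq 0$ simplification to kill the cross term, Lemma~\ref{lem:inequality} with $c=1/(1-\alpha)$, and a final Jensen step over $\DSH$. The only cosmetic difference is that you make explicit the identity $\hat f_{(h,g)}(0)=\sum_r p_r q_r$ and the convexity justification for the outer Jensen step, both of which are left implicit in the paper.
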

\begin{proof}
For a function $h \colon \cube{d} \to R$ define its inverse image $h^{-1} \colon R \to 2^{\cube{d}}$ by $h^{-1}(i) = \{ \x \in \cube{d} \mid h(\x) = i \}$.
For a pair of functions $(h, g) \in \DSH$ and $i \in R$ we define $a_{h,i}, b_{g,i} \geq 0$ such that
$|h^{-1}(i)|/2^{d} = \exp(-a_{h,i}^{2}/2)$ and $|g^{-1}(i)|/2^{d} = \exp(-b_{g,i}^{2}/2)$.
For fixed $(h, g)$ define $\hat{f}_{h,g}(\alpha) = \Pr_{\text{$(\x, \y)$ $\alpha$-corr.}}[h(\x) = g(\y)]$. We obtain a lower bound on $\hat{f}(\alpha)$ as follows:
\begin{align*}
	\hat{f}(\alpha) &= \mathop{\E}_{(h, g) \sim \DSH} \left[ \sum_{i \in R} \Pr_{\text{$(\x, \y)$ $\alpha$-corr.}}[h(\x) = g(\y) = i] \right] \\
                 &\stackrel{(1)}{\geq} \mathop{\E}_{(h, g) \sim \DSH} \left[ \sum_{i \in R} \exp\left(-\frac{1}{2}\frac{a_{h,i}^{2} + 2\alpha a_{h,i}b_{g,i} + b_{g,i}^{2}}{1 - \alpha^2}\right) \right]  \\
                 &\stackrel{(2)}{\geq} \mathop{\E}_{(h, g) \sim \DSH} \left[ \sum_{i \in R} \exp\left(-\frac{1}{2}\frac{a_{h,i}^{2} + b_{g,i}^{2}}{1 - \alpha}\right) \right] \\
                 &\stackrel{(3)}{\geq} \mathop{\E}_{(h, g) \sim \DSH} \hat{f}_{h,g}(0)^\frac{1 + \alpha}{1 - \alpha} 
                 \stackrel{(4)}{\geq} \left( \mathop{\E}_{(h, g) \sim \DSH} \hat{f}_{h,g}(0) \right)^\frac{1 + \alpha}{1 - \alpha} \\ 
				 &= \hat{f}(0)^\frac{1 + \alpha}{1 - \alpha}. 
\end{align*}
Here, (1) is due to Theorem \ref{thm:expansionreverse}, (2)
follows from the simple fact that $a^2 + \alpha(a^2 + b^2) + b^2 \geq a^2 + 2\alpha a b + b^2$, (3)
follows from Lemma~\ref{lem:inequality} with $c = 1/(1-\alpha)$, 
and (4) follows from a standard application of Jensen's Inequality.
\end{proof}
\subsection{Extending the lower bound} \label{sec:lower_concentration}
We will now use Lemma~\ref{lem:revexphash} to together with concentration inequalities to obtain a lower bound on the $\rho_-$-value of DSH schemes with a CPF that is decreasing in $\simil_{H}(\x, \y)$. 
We introduce the following property:
\begin{definition} \label{def:sim_ins}
    Let $\DSH$ be a DSH family for $(X, \dist)$ with CPF $f$.
	We say that $\DSH$ is \emph{$(\alpha_{-}, \alpha_{+}, f_{-}, f_{+})$-decreasingly sensitive} (resp., \emph{$(\alpha_{-}, \alpha_{+}, f_{-}, f_{+})$-increasingly sensitive}) if it satisfies:  
	\begin{itemize}
		\item For $\alpha \leq \alpha_{-}$, we have $f(\alpha) \geq f_{-}$ (respectively, $f(\alpha) \leq f_{-}$);
		\item For $\alpha \geq \alpha_{+}$, we have $f(\alpha) \leq f_{+}$ (respectively, $f(\alpha) \geq f_{+}$).
	\end{itemize}
\end{definition}
We observe that if $\dist$ is a distance (resp., similarity) measure, then a decreasingly (resp., increasingly) sensitive DSH scheme corresponds to a standard LSH.
\begin{theorem} \label{thm:lower}
	Let $0 < \alpha_{-} < \alpha_{+} < 1$ be constants. 
	Then every $(\alpha_{-}, \alpha_{+}, f_{-}, f_{+})$-decreasingly sensitive family $\DSH$ for $(\cube{d}, \simil_{H})$ must satisfy 
	\begin{equation*}
		\rho_- = \frac{\log(1/f_{-})}{\log(1/f_{+})} \geq \frac{1 - \alpha_{+}}{1 + \alpha_{+} - 2\alpha_{-}} - O(\sqrt{\log(1/f_{+})/d}).
	\end{equation*}
\end{theorem}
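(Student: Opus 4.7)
The first step is to translate the combinatorial hypothesis into bounds on the probabilistic CPF $\hat f$ via Hoeffding concentration. For $(\x,\y)$ drawn $\alpha$-correlated, $\|\x\oplus\y\|_1$ is a sum of $d$ i.i.d.\ Bernoulli$((1-\alpha)/2)$ variables, so $\Pr[|\simil_H(\x,\y)-\alpha|>t]\le 2\exp(-dt^2/2)$ by Hoeffding. I would choose $t=\Theta(\sqrt{\log(1/f_+)/d})$ so the failure probability $\delta$ is polynomially small in $f_+$. Splitting the expectation $\hat f(\alpha)=\mathbb{E}[f(\simil_H)]$ on the events $\{\simil_H\le\alpha_-\}$ (at correlation $\alpha_- - t$) and $\{\simil_H\ge\alpha_+\}$ (at correlation $\alpha_+ + t$) and invoking the decreasingly-sensitive hypothesis yields the transfer bounds
\[
\hat f(\alpha_- - t)\ge (1-\delta)f_- \quad\text{and}\quad \hat f(\alpha_+ + t)\le f_+ + \delta.
\]

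Next, I would prove a generalization of Lemma~\ref{lem:revexphash} that takes an arbitrary nonnegative base correlation on the right-hand side:
\[
\hat f(\alpha_1)\ge \hat f(\alpha_0)^{(1+\alpha_1-2\alpha_0)/(1-\alpha_1)}\quad\text{for }0\le\alpha_0\le\alpha_1<1,
\]
which recovers Lemma~\ref{lem:revexphash} at $\alpha_0=0$. The plan follows the skeleton of the original proof: for each fixed $(h,g)\in\DSH$, apply Theorem~\ref{thm:expansionreverse} bin by bin; use the pointwise estimate $E_i(\alpha_1)\le\tfrac{1-\alpha_0}{1-\alpha_1}E_i(\alpha_0)$, obtained by maximizing the ratio $(a^2+2\alpha_1 ab+b^2)/(a^2+2\alpha_0 ab+b^2)$ over $a,b\ge 0$ using $2ab\le a^2+b^2$; apply Lemma~\ref{lem:inequality} with exponent $c=(1-\alpha_0)/(1-\alpha_1)$ so that $2c-1=(1+\alpha_1-2\alpha_0)/(1-\alpha_1)$; and close with Jensen's inequality to pass from fixed $(h,g)$ to the expectation over $\DSH$.

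Instantiating the generalization with $\alpha_0=\alpha_- - t$ and $\alpha_1=\alpha_+ + t$ and chaining with the transfer bounds yields
\[
f_+ + \delta \;\ge\; \hat f(\alpha_+ + t) \;\ge\; \hat f(\alpha_- - t)^{(1+\alpha_+-2\alpha_-+O(t))/(1-\alpha_+ +O(t))} \;\ge\; ((1-\delta)f_-)^{(1+\alpha_+-2\alpha_-+O(t))/(1-\alpha_+ +O(t))}.
\]
Taking logarithms, absorbing $\delta=O(f_+^{\Theta(1)})$ into additive error and simplifying gives $\rho_-=\log(1/f_-)/\log(1/f_+)\ge (1-\alpha_+)/(1+\alpha_+-2\alpha_-) - O(\sqrt{\log(1/f_+)/d})$, which is the claimed bound.

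The main obstacle is the generalization in the second step. The existing proof of Lemma~\ref{lem:revexphash} hinges on the identity $\hat f_{h,g}(0)=\sum_ip_iq_i$ for independent uniform inputs, whereas at a nonzero base $\alpha_0$ only the lower bound $\hat f_{h,g}(\alpha_0)\ge\sum_i\exp(-E_i(\alpha_0))$ from Theorem~\ref{thm:expansionreverse} is directly available. The delicate part is ensuring that the combined use of the ratio bound on $E_i(\alpha_1)/E_i(\alpha_0)$ and Lemma~\ref{lem:inequality} produces the sharper exponent $(1+\alpha_1-2\alpha_0)/(1-\alpha_1)$ rather than the weaker $(1+\alpha_1)/(1-\alpha_1)$ that comes from naively chaining through $\hat f(0)$.
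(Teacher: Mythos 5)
Your proposal takes a genuinely different route from the paper, and it has a gap at precisely the point you flag as ``the main obstacle.''

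The paper's proof does not generalize Lemma~\ref{lem:revexphash} to a nonzero base correlation. Instead it pads: given $(h,g)$ on $\cube{d}$ it defines $\hat h(\x)=h(\x\circ\vect{1})$, $\hat g(\x)=g(\x\circ\vect{1})$ on $\cube{\hat d}$ with $\hat d = \lceil 2r/(1-\varepsilon_p)\rceil$ chosen so that $0$-correlated points in $\cube{\hat d}$ concentrate near the far threshold, and $\alpha$-correlated points (for an $\alpha$ chosen as a function of $c$ and $\varepsilon_q$) concentrate near the near threshold. This recenters the problem so that Lemma~\ref{lem:revexphash} can always be applied at base $\alpha_0 = 0$, where the exact identity $\hat f_{h,g}(0) = \sum_i p_i q_i$ is available. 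Your plan replaces the padding by Hoeffding concentration in the ambient $d$ dimensions (which is fine, and is essentially the same kind of estimate as the paper's Chernoff step), but then needs the generalized inequality $\hat f(\alpha_1)\geq\hat f(\alpha_0)^{(1+\alpha_1-2\alpha_0)/(1-\alpha_1)}$ at a nonzero base $\alpha_0$, and this is exactly what you have not established.

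The proof sketch for that generalization does not close. The ratio bound $E_i(\alpha_1)\leq\tfrac{1-\alpha_0}{1-\alpha_1}E_i(\alpha_0)$ is correct (writing $r=2a_ib_i/(a_i^2+b_i^2)\in[0,1]$, the ratio is $\tfrac{(1+\alpha_1 r)(1-\alpha_0^2)}{(1+\alpha_0 r)(1-\alpha_1^2)}$, which is increasing in $r$ and equals $\tfrac{1-\alpha_0}{1-\alpha_1}$ at $r=1$), and so is the arithmetic $2c-1=(1+\alpha_1-2\alpha_0)/(1-\alpha_1)$ with $c=(1-\alpha_0)/(1-\alpha_1)$. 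But after these steps you are left with $\hat f_{h,g}(\alpha_1)\geq\sum_i\bigl(e^{-E_i(\alpha_0)}\bigr)^c$, and two things break. First, $e^{-E_i(\alpha_0)}$ for $\alpha_0>0$ is not of the form $p_iq_i$ with $p,q$ probability vectors (for $\alpha_0>0$ it is strictly smaller than $p_iq_i$), so Lemma~\ref{lem:inequality} does not apply. Second, even if one had an analogue of Lemma~\ref{lem:inequality} giving $\sum_i(e^{-E_i(\alpha_0)})^c\geq(\sum_i e^{-E_i(\alpha_0)})^{2c-1}$, the reverse small-set expansion theorem only yields $\hat f_{h,g}(\alpha_0)\geq\sum_i e^{-E_i(\alpha_0)}$ — the wrong direction. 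Since $2c-1\geq 1$ and the quantities are in $[0,1]$, this gives $(\sum_i e^{-E_i(\alpha_0)})^{2c-1}\leq\hat f_{h,g}(\alpha_0)^{2c-1}$, which cannot be converted into the desired lower bound $\hat f_{h,g}(\alpha_1)\geq\hat f_{h,g}(\alpha_0)^{2c-1}$. At $\alpha_0=0$ this direction problem disappears because the small-set-expansion lower bound becomes an equality, and that is exactly why the paper works so hard (via padding) to recenter to $\alpha_0=0$. Nothing in your sketch substitutes for this recentering, so the argument as written does not go through; if you want to pursue the generalized lemma you will need an independent argument that is not a chain through reverse small-set expansion.
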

In the statement of Theorem~\ref{thm:lower} we may replace the properties from Definition~\ref{def:sim_ins} that hold for every $\alpha \leq \alpha_{-}$ and every $\alpha \geq \alpha_{+}$ 
with less restrictive versions that hold in an $\varepsilon$-interval around $\alpha_{-}, \alpha_{+}$ for some $\varepsilon = o_{d}(1)$.
Furthermore, if we rewrite the bound in terms of relative Hamming distances $\delta$ and $\delta/c$ where $\delta, c$ are constants, we obtain a lower bound of $1/(2c - 1) - o_{d}(1)$ --- an expression that is familiar from known LSH lower bounds~\cite{motwani2007, andoni2016tight}.

We now prove Theorem \ref{thm:lower} proving an analogous result for a $(r, cr, p, q)$-increasingly sensitive family under Hamming distance.
Indeed, a $(\alpha_{-}, \alpha_{+}, f_{-}, f_{+})$-decreasingly sensitive family for $(\cube{d}, \simil_{H})$ is 
a $(r, cr, p, q)$-increasingly sensitive family for the space $(\cube{d}, \dist_{H})$, with $\alpha_{-}=1-r/d$ and $\alpha_{+}=1-cr/d$.
\begin{theorem} \label{thm:lowerhamming}
	For every constant $\varepsilon > 0$, we have that every $(r, cr, p, q)$-increasingly sensitive family $\ALSH$ for $\cube{d}$ under Hamming distance with $r \leq (1-\varepsilon)d/2$ must satisfy
	\begin{equation*}
		\rho(\ALSH) = \frac{\log 1/p}{\log 1/q} \geq \frac{1}{2c-1} - O(\sqrt{(c/r) \log(1/q)}).
	\end{equation*}
\end{theorem}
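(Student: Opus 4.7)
The plan is to reduce Theorem~\ref{thm:lowerhamming} to Theorem~\ref{thm:lower} via the correspondence between Hamming similarity $\simil_H$ and Hamming distance $\dist_H$ described in the paragraph preceding the statement. Because a $(r, cr, p, q)$-increasingly sensitive family under $\dist_H$ is an $(\alpha_-, \alpha_+, f_-, f_+)$-decreasingly sensitive family under $\simil_H$, Theorem~\ref{thm:lower} applies directly once the parameters are translated. The analytic content is already contained in Lemma~\ref{lem:revexphash}; what remains is essentially a change of variables plus the bookkeeping needed to express the error term in the target form.

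The first step is to fix the correspondence explicitly. With the linear relation $\simil_H = 1 - 2\,\dist_H/d$, setting $\alpha_- = 1 - 2cr/d$, $\alpha_+ = 1 - 2r/d$, $f_- = q$, $f_+ = p$ yields a decreasingly sensitive family under similarity: indeed $\alpha_- < \alpha_+$, and the hypothesis $r \leq (1-\varepsilon)d/2$ gives $\alpha_+ \geq \varepsilon$, meeting Theorem~\ref{thm:lower}'s requirement that the thresholds be constants bounded away from the endpoints. Invoking Theorem~\ref{thm:lower} and simplifying the leading fraction,
\begin{equation*}
\frac{1 - \alpha_+}{1 + \alpha_+ - 2\alpha_-} \;=\; \frac{2r/d}{2(2c-1)r/d} \;=\; \frac{1}{2c-1},
\end{equation*}
recovers the main term of Theorem~\ref{thm:lowerhamming}, and rewriting $\rho_- = \log(1/f_-)/\log(1/f_+)$ in the notation $\rho(\ALSH) = \log(1/p)/\log(1/q)$ yields the stated inequality.

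The only substantive work is re-expressing the error. The slack $O(\sqrt{\log(1/f_+)/d})$ produced by Theorem~\ref{thm:lower} arises from Hoeffding concentration of the Binomial$(d,(1-\alpha)/2)$ Hamming distance around its mean: pushing a tail of weight $\approx f_+$ translates to a deviation of $\alpha$ by $\delta \sim \sqrt{\log(1/f_+)/d}$. Propagating this shift through the leading fraction amplifies it by the reciprocal of the denominator $2(2c-1)r/d$, which has order $r/d$. Combining the amplification factor $\Theta(d/r)$ with the concentration slack $\Theta(\sqrt{\log(1/q)/d})$ produces an error of $\Theta(\sqrt{\log(1/q)/r})$, and a parallel computation at $\alpha_-$ contributes the extra $\sqrt{c}$, giving the claimed $O(\sqrt{(c/r)\log(1/q)})$. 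The main obstacle is precisely this amplification bookkeeping: a naive inheritance from Theorem~\ref{thm:lower} gives only the weaker $O(\sqrt{\log(1/q)/d})$ term, and one must exploit the specific scale $1 - \alpha_+ = \Theta(r/d)$ to see that the concentration slack inflates at rate $d/r$ when converted into the final $\rho$-bound.
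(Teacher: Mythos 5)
Your plan inverts the paper's logical dependency: the paper proves Theorem~\ref{thm:lowerhamming} first, directly from Lemma~\ref{lem:revexphash}, and then obtains Theorem~\ref{thm:lower} as a corollary (the proof in the paper even ends with the phrase ``to obtain Theorem~\ref{thm:lower}''). Reducing Theorem~\ref{thm:lowerhamming} to Theorem~\ref{thm:lower} is therefore circular as matters stand. The reduction would only be legitimate if Theorem~\ref{thm:lower} had an independent proof, and you do not supply one; the ``analytic content'' in Lemma~\ref{lem:revexphash} is a statement about the \emph{probabilistic} CPF $\hat{f}$ evaluated on randomly $\alpha$-correlated pairs, and turning that into a statement about the actual CPF $f$ at fixed distances is exactly what the concentration machinery in the paper's proof (restricting to $\hat{d}$ coordinates, Chernoff bounds, the chain of inequalities relating $\hat{p},\hat{q}$ to $p,q$) accomplishes. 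That machinery is not ``a change of variables plus bookkeeping''; it is the entire proof, and you have elided it.

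Even granting Theorem~\ref{thm:lower} as a black box, two further problems remain. First, Theorem~\ref{thm:lower} is stated only for \emph{constant} $0 < \alpha_- < \alpha_+ < 1$; under your identification $\alpha_\pm = 1 - 2r/d$ (resp.\ $1-2cr/d$), this forces $r/d$ to be a constant fraction. Theorem~\ref{thm:lowerhamming} must hold for much smaller $r$ (e.g.\ $r = O(1)$), where $\alpha_\pm \to 1$ and Theorem~\ref{thm:lower} does not apply; this is the regime the dimension-reduction to $\hat{d} = \lceil 2r/(1-\varepsilon_p)\rceil$ is designed to handle. Second, your reading of the error term is reversed: because $r \leq (1-\varepsilon)d/2 < d$ and $c\geq 1$, one has $\sqrt{\log(1/q)/d} \leq \sqrt{(c/r)\log(1/q)}$, so the naive substitution into Theorem~\ref{thm:lower} would yield a \emph{stronger} (smaller) error term, not a weaker one. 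The ``amplification by $\Theta(d/r)$'' you invoke to inflate the error term is the opposite of a valid derivation — it turns a claimed bound into a weaker one, which is not how one transports a lower bound. In short, the proposal takes the corollary for the theorem, is circular, fails outside the constant-$r/d$ regime, and misdiagnoses the direction of the error-term mismatch.
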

\begin{proof}
Given $\ALSH$ we define a distribution $\hat{\ALSH}$ over pairs of functions $\hat{h}, \hat{g} \colon \cube{\hat{d}} \to R$ where $\hat{d} \leq d$ remains to be determined.
We sample a pair of functions $(\hat{h}, \hat{g})$ from $\hat{\ALSH}$ by sampling $(h, g)$ from $\LSH$ 
and setting $\hat{h}(\x) = h(\x \circ \vect{1})$ and similarly $\hat{g}(\x) = g(\x \circ \vect{1})$ where $\vect{1}$ denotes the $(d - \hat{d})$-dimensional all-ones vector.
We will now turn to the process of relating $p$ to $\hat{p} = \hat{f}(0)$ and $q$ to $\hat{q} = \hat{f}(\alpha)$ for~$\hat{\ALSH}$.

Let $0 < \varepsilon_{p} < 1$ and set $\hat{d} = \lceil 2r/(1-\varepsilon_{p}) \rceil$. 
Then by applying standard Chernoff bounds we get 
\begin{equation*}
	\Pr_{(\x, \y) \, 0 \text{-corr.}}[\dist(\x, \y) \leq r] \leq \exp \left( -\frac{\varepsilon_{p}^{2}}{1-\varepsilon_{p}}\frac{r}{2} \right).
\end{equation*}
For convenience, define $\delta_p = \exp \left( -\frac{\varepsilon_{p}^{2}}{1-\varepsilon_{p}}\frac{r}{2} \right)$. Then 
$
	\hat{p} \geq (1 - \delta_p)p.
$

In order to tie $\hat{q}$ to $q$ we consider the probability of $\alpha$-correlated points having distance greater than $r/c$.
The expected Hamming distance of $\alpha$-correlated $(\x, \y)$ in $\hat{d}$ dimensions is $\hat{d}(1-\alpha)/2$.
We would like to set $\alpha$ such that the probability of the distance exceeding $r/c$ is small.
Let $X$ denote $\dist(\x, \y)$, then the standard Chernoff bound states that: 
\begin{equation*}
\Pr[X \geq (1+\varepsilon)\mu] \leq e^{-\varepsilon^2 \mu / 3}.
\end{equation*}
For a parameter $0 < \varepsilon_q < 1$ we set $\alpha$ such that the following is satisfied:
\begin{equation*}
	(1 + \varepsilon_q)\mu \geq (1 + \varepsilon_q) \frac{2r}{1 - \varepsilon_{p}}\frac{1 - \alpha}{2} \geq r/c.
\end{equation*}
This results in a value of $\alpha = 1 - \frac{1-\varepsilon_{p}}{1 + \varepsilon_{q}}\frac{1}{c}$ and we observe that
\begin{equation*}
	\delta_{q} \leq \exp(-\varepsilon_{q}^{2} \mu / 3) \leq \exp\left(- \frac{\varepsilon_{q}^{2}}{1+\varepsilon_{q}} \frac{r}{3c}\right).
\end{equation*}
It follows that
\begin{equation*}
	\hat{q} \leq (1-\delta_q)q + \delta_{q}.
\end{equation*}
Let us summarize what we know so far: 
\begin{align*}
\hat{p} &\geq (1-\delta_p)p \\
\hat{q} &\leq (1-\delta_q)q + \delta_q \leq q(1 + \delta_{q}/q) \\
0 &< \varepsilon_{p}, \varepsilon_{q} < 1 \\
\delta_{p} &\leq \exp \left( -\frac{\varepsilon_{p}^{2}}{1-\varepsilon_{p}}\frac{r}{2} \right), \quad \delta_{q} \leq \exp \left( -\frac{\varepsilon_{q}^{2}}{1+\varepsilon_{q}} \frac{r}{3c} \right) \\
\alpha &= 1 - \frac{1-\varepsilon_{p}}{1 + \varepsilon_{q}}\frac{1}{c}, \quad \hat{q} \geq \hat{p}^{\frac{1+\alpha}{1-\alpha}}.
\end{align*}
We assume that $0 < q < p < 1$ and furthermore, without loss of generality we can assume that $q \leq 1/e$ due to the powering technique (see Lemma~\ref{lemma:transform}(a)).
In our derivations we also assume that $\delta_p \leq 1/2$ and $\delta_q \leq 1/(2e)$ such that $\delta_{q}/q \leq 1/2$. This will later be implicit in the statement of the result in big-O notation.
From our assumptions and standard bounds on the natural logarithm we are able to derive the following:
\begin{align}
\frac{\ln(1/p)}{\ln(1/q)} &\geq \frac{\ln(1 - \delta_{p}) \ln(1/\hat{p})}{\ln(1/q)} \geq \frac{\ln(1/\hat{p})}{\ln(1/q)} - 2\delta_{p} \notag\\ 
		&\geq \frac{\ln(1/\hat{p})}{\ln(1 + \delta_{q}/q) + \ln(1/\hat{q})} - 2\delta_{p} \notag\\
		&\geq \frac{\ln(1/\hat{p})}{\ln(1/\hat{q})} \left(1 - \frac{\ln(1 + \delta_{q}/q)}{\ln(1/\hat{q})} \right) - 2\delta_{p} \notag\\
		&\geq \frac{\ln(1/\hat{p})}{\ln(1/\hat{q})} - \frac{\ln(1 + \delta_{q}/q)}{\ln(1/(1 + \delta_{q}/q)q)} - 2\delta_{p} \label{eq:hatineq} \\
		&\geq \frac{\ln(1/\hat{p})}{\ln(1/\hat{q})} - 2\delta_{q}/q - 2\delta_{p}.\notag
\end{align}
In equation \eqref{eq:hatineq} we use the statement itself combined with our assumptions on $p$ and $q$ to deduce that
\begin{equation*}
1 > \frac{\ln(1/p)}{\ln(1/q)} \geq \frac{\ln(1/\hat{p})}{\ln(1/\hat{q})}.
\end{equation*}
We proceed by lower bounding $\hat{\rho}$. Temporarily define $1 - \varepsilon' = \frac{1-\varepsilon_{p}}{1 + \varepsilon_{q}}$ and observe that
\begin{align*}
\frac{\ln(1/\hat{p})}{\ln(1/\hat{q})} &\geq \frac{1-\alpha}{1 + \alpha} = \frac{(1-\varepsilon')/c}{2 - (1-\varepsilon')/c} \\
&\geq \frac{1}{2c-1} - \frac{\varepsilon'}{(2c-1)^2} - \frac{\varepsilon'}{2c-1}.
\end{align*}
We have that 
\begin{equation*}
\varepsilon' = 1 - \frac{1-\varepsilon_{p}}{1 + \varepsilon_{q}} = \frac{1 + \varepsilon_{q} - (1-\varepsilon_{p})}{1 + \varepsilon_{q}} \leq \varepsilon_{q} + \varepsilon_{p},
\end{equation*}
and combining these bounds results in  
\begin{equation*}
\frac{\ln(1/p)}{\ln(1/q)} \geq \frac{1}{2c-1}- 2(\varepsilon_{q} + \varepsilon_{p} - \delta_{q}/q - \delta_{p}).
\end{equation*}
We can now set $\varepsilon_{q} = \varepsilon_{p} = K \cdot \sqrt{(c/r) \ln (1/q)}$ for some universal constant $K$ to obtain Theorem~\ref{thm:lower}.
\end{proof}

\subsection{Lower bound for asymmetric LSH}
We can re-apply the techniques behind Lemma \ref{lem:revexphash} and Theorem \ref{thm:lower} to state similar results in the other direction 
where for $\alpha_{-} < \alpha_{+}$ we are interested in upper bounding $f(\alpha_{+})$ as a function of $f(\alpha_{-})$.
This is similar to the well-studied problem of constructing LSH lower bounds and our results match known LSH bounds~\cite{motwani2007,andoni2016tight}, 
indicating that the asymmetry afforded by $\DSH$ does not help us when we wish to construct similarity-sensitive families with monotonically increasing CPFs.
Implicitly, this result already follows from the space-time tradeoff lower bounds for similarity search shown independently by Andoni et al.~\cite{andoni2017optimal} and Christiani~\cite{christiani2017framework}. 
As with Lemma~\ref{lem:revexphash}, the following theorem by O'Donnell~\cite{odonnell2014analysis} is the foundation of our lower bounds.
\begin{theorem}[Gen.\ Small-Set Expansion]\label{thm:expansion}
Let $0 \leq \alpha \leq 1$. Let $A, B \subseteq \cube{d}$ have volumes $\exp(-a^{2}/2)$, $\exp(-b^{2}/2)$ and assume $0 \leq \alpha b \leq a \leq b$. Then,
\begin{equation*}
	\Pr_{\substack{(\x, \y) \\ \alpha \text{-corr.}}}[\x \in A, \y \in B] \geq \exp\left(-\frac{1}{2}\frac{a^2 - 2\alpha a b + b^2}{1 - \alpha^2}\right).
\end{equation*}
\end{theorem}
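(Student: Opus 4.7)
My plan is to prove Theorem~\ref{thm:expansion} via Borell's reverse hypercontractivity inequality, following the standard Boolean-analysis framework of \cite{odonnell2014analysis}. The first step is to recast the probability in operator form: letting $T_\alpha$ denote the noise operator on $\cube{d}$, defined by $T_\alpha f(\x) = \mathop{\E}_{\y\,\alpha\text{-corr.\ with }\x}[f(\y)]$, one has
\begin{equation*}
\Pr_{(\x,\y)\,\alpha\text{-corr.}}[\x \in A,\; \y \in B] \;=\; \left\langle \1_A,\; T_\alpha \1_B \right\rangle,
\end{equation*}
where the inner product is taken with respect to the uniform measure on the cube.

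Next, I would invoke Borell's two-function reverse hypercontractivity inequality: for any exponents $p, q < 1$ with $(1-p)(1-q) \geq \alpha^2$ and any nonnegative $f, g$, one has $\langle f, T_\alpha g\rangle \geq \|f\|_p \|g\|_q$, with the $L^r$ norms interpreted via the convention $\|h\|_r = \mathop{\E}[h^r]^{1/r}$ for $r \neq 0$. Plugging in the indicators $f = \1_A$ and $g = \1_B$ and using $|A|/2^d = e^{-a^2/2}$, $|B|/2^d = e^{-b^2/2}$ yields
\begin{equation*}
\Pr_{(\x,\y)\,\alpha\text{-corr.}}[\x \in A, \y \in B] \;\geq\; \exp\!\left(-\tfrac{1}{2}\!\left(\tfrac{a^2}{p} + \tfrac{b^2}{q}\right)\right).
\end{equation*}

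The third step is to optimize the right-hand side over $p, q < 1$ subject to $(1-p)(1-q) = \alpha^2$. Substituting $u = 1-p$, $v = 1-q$ with $uv = \alpha^2$, I would differentiate $a^2/(1-u) + b^2/(1-v)$ with respect to $u$ and solve. The stationary equation $a/|1-u| = \alpha b/|u - \alpha^2|$ admits two branches; the branch with $u > \alpha^2$ (both $p,q > 0$) reproduces the exponent $(a^2 + 2\alpha ab + b^2)/(1-\alpha^2)$ of Theorem~\ref{thm:expansionreverse}, while the other branch pushes one of $p, q$ into the negative regime and, after routine algebra, produces exactly the sharper exponent $(a^2 - 2\alpha ab + b^2)/(1-\alpha^2)$ claimed in the theorem. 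The side hypothesis $0 \leq \alpha b \leq a \leq b$ is precisely the condition under which this second stationary point remains inside the admissible region $p, q < 1$ of Borell's inequality.

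The main obstacle I expect is the parameter bookkeeping: Borell's reverse hypercontractivity lives in an unfamiliar regime where $L^p$ norms with negative $p$ appear, and one must verify both that the optimizer $(p^\ast, q^\ast)$ obtained from the Lagrangian lies in the admissible half-plane and that the indicator functions are handled correctly by the sign convention for negative exponents. Verifying the former is exactly what produces the asymmetric hypothesis $\alpha b \leq a \leq b$; outside this range the optimization degenerates and only the weaker bound of Theorem~\ref{thm:expansionreverse} survives. Once Borell's inequality is imported as a black box from \cite{odonnell2014analysis}, the remaining derivation is a short calculus computation, and the conceptual take-away is that the asymmetry of the two sets $A$ and $B$ translates into asymmetry in the optimizing pair of exponents.
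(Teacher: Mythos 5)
The statement you are trying to prove is carried with the wrong inequality sign in the paper: the Generalized Small-Set Expansion Theorem is an \emph{upper} bound, i.e., the conclusion should read $\Pr[\x \in A, \y \in B] \leq \exp\bigl(-\tfrac{1}{2}\tfrac{a^2-2\alpha ab + b^2}{1-\alpha^2}\bigr)$. A quick sanity check: with $d=1$, $A = B = \{0\}$, $\alpha = 1/2$, the left-hand side is $3/8$ while the right-hand side is $2^{-4/3} \approx 0.40$, so the $\geq$ reading is false. The $\leq$ direction is also what the paper needs to derive Lemma~\ref{lem:exphash}, which is an upper bound on $\hat{f}(\alpha)$. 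Finally, the paper does not prove this theorem at all; it is quoted from O'Donnell's book~\cite{odonnell2014analysis}, where it follows from the \emph{forward} two-function hypercontractivity theorem (for $p,q>1$ with $(p-1)(q-1)\geq\alpha^2$, $\E[f(\x)g(\y)]\leq\|f\|_p\|g\|_q$) after optimizing over $p,q$.

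Once the sign is corrected, your approach is the wrong tool: Borell's reverse hypercontractivity only yields lower bounds. But the proposed ``second branch'' also fails on its own terms. Writing $u=1-p$, $v=1-q$ with $uv=\alpha^2$, the second solution of your stationary equation, namely $a(u-\alpha^2)=-\alpha b(1-u)$, gives $u=-\alpha(b-a\alpha)/(a-\alpha b)$; under the hypothesis $\alpha b\leq a\leq b$ the numerator is nonpositive and the denominator nonnegative, so $u\leq 0$ and hence $v=\alpha^2/u\leq 0$ as well. Equivalently \emph{both} $p$ and $q$ move above $1$ --- not one of them into the negative regime as you write --- and $p,q>1$ is precisely the admissible region of the forward theorem, where the inequality direction reverses. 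Moreover, the norm identity $\|\1_A\|_p=\exp(-a^2/(2p))$ you rely on is valid only for $p>0$: for $p\leq 0$ the $p$-norm of a nontrivial indicator collapses to $0$, so no improved lower bound can be extracted from reverse hypercontractivity there. The unique stationary point inside $(\alpha^2,1)^2$ is $u=\alpha(a\alpha+b)/(a+b\alpha)$, which reproduces exactly the exponent of Theorem~\ref{thm:expansionreverse}; that is the strongest bound Borell's inequality gives for indicators. To actually prove Theorem~\ref{thm:expansion} (with the corrected $\leq$), replace Borell by the forward two-function hypercontractivity theorem, rerun the same Lagrangian computation over $p,q>1$, and observe that the hypothesis $\alpha b\leq a\leq b$ is exactly what keeps the optimizer $p-1=\alpha(b-a\alpha)/(a-\alpha b)$ nonnegative.
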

\begin{lemma} \label{lem:exphash}
	For every $0 \leq \alpha < 1$ and for every distribution $\DSH$ over pairs of functions $h, g \colon \cube{d} \to R$, we have $\hat{f}(\alpha) \leq \hat{f}(0)^{\frac{1-\alpha}{1+\alpha}}$.
\end{lemma}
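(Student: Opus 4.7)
The plan is to follow the proof of Lemma \ref{lem:revexphash} step by step, flipping every inequality by substituting its dual ingredient. Specifically, Theorem \ref{thm:expansion} replaces Theorem \ref{thm:expansionreverse} to give an upper (rather than lower) bound on $\Pr[\x\in A,\y\in B]$; Lemma \ref{lem:inequality} is applied with exponent $c=1/(1+\alpha)\le 1$, which yields the reversed form of that inequality; and Jensen's inequality is applied to the concave map $x\mapsto x^{(1-\alpha)/(1+\alpha)}$ rather than to a convex one.

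Concretely, for a fixed pair $(h,g)$ I would introduce $a_{h,i},b_{g,i}\ge 0$ via $|h^{-1}(i)|/2^{d}=\exp(-a_{h,i}^{2}/2)$ and $|g^{-1}(i)|/2^{d}=\exp(-b_{g,i}^{2}/2)$, exactly as in the proof of Lemma \ref{lem:revexphash}. Theorem \ref{thm:expansion} then bounds each term of $\hat{f}_{h,g}(\alpha)$ from above by $\exp\bigl(-\tfrac{1}{2}(a_{h,i}^{2}-2\alpha a_{h,i}b_{g,i}+b_{g,i}^{2})/(1-\alpha^{2})\bigr)$. Using the identity $a^{2}-2\alpha ab+b^{2}=(1-\alpha)(a^{2}+b^{2})+\alpha(a-b)^{2}\ge(1-\alpha)(a^{2}+b^{2})$, the exponent simplifies and yields $\hat{f}_{h,g}(\alpha)\le\sum_{i}(p_{i}q_{i})^{1/(1+\alpha)}$, where $p_{i}=|h^{-1}(i)|/2^{d}$ and $q_{i}=|g^{-1}(i)|/2^{d}$ are probability distributions on $R$. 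Since $2c-1=(1-\alpha)/(1+\alpha)$ for $c=1/(1+\alpha)$, the $c\le 1$ branch of Lemma \ref{lem:inequality} gives $\hat{f}_{h,g}(\alpha)\le\hat{f}_{h,g}(0)^{(1-\alpha)/(1+\alpha)}$. A final application of Jensen's inequality to the concave function $x\mapsto x^{(1-\alpha)/(1+\alpha)}$ moves the expectation over $(h,g)\sim\DSH$ inside the power, yielding $\hat{f}(\alpha)\le\hat{f}(0)^{(1-\alpha)/(1+\alpha)}$ as desired.

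The principal obstacle is the side condition $0\le\alpha b\le a\le b$ demanded by Theorem \ref{thm:expansion}, which has no counterpart in the reverse variant used in Lemma \ref{lem:revexphash}: an arbitrary DSH scheme can produce partitions whose block volumes are wildly uneven, so the hypothesis does not hold term-by-term. I would address this by splitting the sum over $i$ according to whether $(a_{h,i},b_{g,i})$ satisfies the hypothesis (after possibly swapping the roles of $h$ and $g$, since the $\alpha$-correlated distribution is symmetric in $\x$ and $\y$) and bounding the remaining ``tail'' terms by the trivial inequality $\Pr[\x\in h^{-1}(i),\y\in g^{-1}(i)]\le\min(p_{i},q_{i})$. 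This trivial bound dominates the right-hand side of Theorem \ref{thm:expansion} precisely in the regime where the hypothesis fails (one inverse image much smaller than the other), so it can be folded into the same inequality chain without affecting the final exponent $(1-\alpha)/(1+\alpha)$.
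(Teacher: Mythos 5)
Your proposal is correct, and it is in fact more explicit than the paper itself, which does not write out a proof of Lemma~\ref{lem:exphash} at all---it merely asserts that the same techniques as Lemma~\ref{lem:revexphash} apply. More importantly, you have correctly identified the one genuine subtlety in dualizing the argument: Theorem~\ref{thm:expansion}, unlike its reverse counterpart, carries the side condition $0 \leq \alpha b \leq a \leq b$, which an arbitrary partition pair need not satisfy. Your patch is valid. After taking $a_{h,i} \leq b_{g,i}$ without loss of generality for each $i$ (legitimate because the $\alpha$-correlated measure is exchangeable in $\x,\y$, so $\Pr[\x\in A,\y\in B]=\Pr[\x\in B,\y\in A]$), if $\alpha b > a$ you fall back on the trivial bound $\Pr[\x\in A,\y\in B]\leq\min(p_i,q_i)=\exp(-b^2/2)$. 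This suffices: $a<\alpha b\leq\sqrt{\alpha}\,b$ for $\alpha\leq 1$ gives $\alpha b^2\geq a^2$, hence $b^2(1+\alpha)\geq a^2+b^2$, hence $\exp(-b^2/2)\leq(p_iq_i)^{1/(1+\alpha)}$, so the term slots into the same chain as the terms covered by Theorem~\ref{thm:expansion}. The remaining steps---the bound $a^2-2\alpha ab+b^2\geq(1-\alpha)(a^2+b^2)$, Lemma~\ref{lem:inequality} with $c=1/(1+\alpha)\leq 1$ and $2c-1=(1-\alpha)/(1+\alpha)$, and Jensen for the concave power $(1-\alpha)/(1+\alpha)$---all check out. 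Two small remarks. First, the inequality sign in the paper's statement of Theorem~\ref{thm:expansion} is a typo (it should read $\leq$, giving an upper bound, as your proof correctly requires). Second, your remark that the trivial bound ``dominates the right-hand side of Theorem~\ref{thm:expansion}'' in the failure regime is a true observation but is not by itself the inequality the chain needs; what is needed, and what holds by the computation above, is $\min(p_i,q_i)\leq(p_iq_i)^{1/(1+\alpha)}$.
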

We are now ready to state the corresponding result for similarity-sensitive families.
\begin{theorem} \label{thm:lower2}
	Let $0 < \alpha_{-} < \alpha_{+} < 1$ be constants. 
	Then every $(\alpha_{-}, \alpha_{+}, f_{-}, f_{+})$-sensitive family $\DSH$ for Hamming space $(\cube{d}, \simil_{H})$ must satisfy 
	\begin{equation*}
		\frac{\log(1/f_{+})}{\log(1/f_{-})} \geq \frac{1 - \alpha_{+}}{1 + \alpha_{+} - 2\alpha_{-}} - O(\sqrt{\log(1/f_{-})/d}).
	\end{equation*}
\end{theorem}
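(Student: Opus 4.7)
The plan is to adapt the proof of Theorem~\ref{thm:lowerhamming} almost line by line, using Lemma~\ref{lem:exphash} in place of Lemma~\ref{lem:revexphash} and arranging the concentration argument so that random $0$-correlated pairs play the role of ``far'' pairs (similarity $\leq \alpha_-$) and $\alpha$-correlated pairs play the role of ``close'' pairs (similarity $\geq \alpha_+$) for an increasingly sensitive family.

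Concretely, given $\DSH$ I would define $\hat{\DSH}$ over pairs $\hat{h},\hat{g}\colon \cube{\hat{d}}\to R$ by padding inputs with the $(d-\hat{d})$-dimensional all-ones vector $\vect{1}$: sample $(h,g)\sim\DSH$ and set $\hat{h}(\x)=h(\x\circ\vect{1})$, $\hat{g}(\y)=g(\y\circ\vect{1})$. The similarity in $\cube{d}$ of such a padded pair equals $1-2\,\dist_H(\x,\y)/d$, so it is a deterministic function of the Hamming distance in $\cube{\hat{d}}$. For small $\varepsilon_p,\varepsilon_q>0$ I would choose $\hat{d}=\lceil(1-\alpha_-)d/(1-\varepsilon_p)\rceil$ and $\alpha=1-(1-\alpha_+)(1-\varepsilon_p)/((1-\alpha_-)(1+\varepsilon_q))$, both valid since $0<\alpha_-<\alpha_+<1$. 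Standard Chernoff bounds on the binomially distributed Hamming distance then yield that $0$-correlated pairs have padded similarity at most $\alpha_-$ except with probability $\delta_p=\exp(-\Omega(\varepsilon_p^2 d))$, while $\alpha$-correlated pairs have padded similarity at least $\alpha_+$ except with probability $\delta_q=\exp(-\Omega(\varepsilon_q^2 d))$. Combining these tail estimates with the sensitivity assumption, and using the trivial bound $f\leq 1$ on the bad events, gives $\hat{f}(0)\leq f_-+\delta_p$ and $\hat{f}(\alpha)\geq(1-\delta_q)f_+$.

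Applying Lemma~\ref{lem:exphash} to $\hat{\DSH}$ gives $\hat{f}(\alpha)\leq \hat{f}(0)^{(1-\alpha)/(1+\alpha)}$, and a direct computation shows $(1-\alpha)/(1+\alpha)=(1-\alpha_+)/(1+\alpha_+-2\alpha_-)-O(\varepsilon_p+\varepsilon_q)$. Chaining the three inequalities, taking logarithms, dividing by $\log(1/f_-)$, and using $\log(1+x)\leq x$ together with $-\log(1-x)\leq 2x$ (both valid for $x\leq 1/2$) yields
\[
\frac{\log(1/f_+)}{\log(1/f_-)}\;\geq\;\frac{1-\alpha_+}{1+\alpha_+-2\alpha_-}\;-\;O\!\left(\varepsilon_p+\varepsilon_q+\frac{\delta_p/f_-+\delta_q}{\log(1/f_-)}\right).
\]
Finally, setting $\varepsilon_p=\varepsilon_q=\Theta(\sqrt{\log(1/f_-)/d})$ with a sufficiently large implicit constant makes $\delta_p,\delta_q$ decay faster than any polynomial in $1/f_-$, so the correction collapses to $O(\sqrt{\log(1/f_-)/d})$, as claimed.

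The hard part will be the last calculation: propagating both the Chernoff perturbations $\delta_p,\delta_q$ and the parameter perturbations $\varepsilon_p,\varepsilon_q$ through the logarithms precisely enough to match the stated big-$O$ error. This is the same flavor of bookkeeping as the chain of inequalities ending the proof of Theorem~\ref{thm:lowerhamming}, and it can proceed under the standard assumption $f_-,f_+\leq 1/e$ -- obtained without loss of generality via the powering construction of Lemma~\ref{lemma:transform}(a) -- which keeps all logarithms bounded away from zero and simplifies the estimates on $\log(1+\delta_p/f_-)$ and $-\log(1-\delta_q)$.
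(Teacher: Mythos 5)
Your proposal is correct and follows essentially the same path the paper takes: the paper itself states that Theorem~\ref{thm:lower2} is obtained by ``re-applying the techniques behind Lemma~\ref{lem:revexphash} and Theorem~\ref{thm:lower},'' which is precisely what you do -- substituting Lemma~\ref{lem:exphash}, padding with the all-ones vector, and running the same Chernoff/logarithm bookkeeping with the roles of the close and far pairs matched to the increasing CPF. Your parameter choices $\hat{d}=\lceil(1-\alpha_-)d/(1-\varepsilon_p)\rceil$ and $1-\alpha=(1-\alpha_+)(1-\varepsilon_p)/((1-\alpha_-)(1+\varepsilon_q))$ produce exactly the clean limit $(1-\alpha)/(1+\alpha)\to(1-\alpha_+)/(1+\alpha_+-2\alpha_-)$, and the residual looseness you acknowledge in absorbing $\delta_p/f_-$ and $\delta_q$ into the error term (for constant $K$ these decay only polynomially in $f_-$, not superpolynomially) is the same level of precision the paper itself uses in the final lines of the proof of Theorem~\ref{thm:lowerhamming}.
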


\section{Hamming and Euclidean Space DSH}\label{sec:algos}


\subsection{Anti-LSH construction in Hamming space}\label{sec:hamming}
Bit-sampling~\cite{IndykM98} is one of the simplest LSH families for Hamming space, yet gives optimal $\rho_+$-values in terms of the approximation factor~\cite{o2014optimal}.
Its CPF is $f(t)=1-t$, where $t$ is the relative Hamming distance.
By using a function pair $(\x\mapsto x_i, \x\mapsto 1-x_i)$ where $i\in \{1,\dots,d\}$ is random, we get a simple asymmetric DSH family for Hamming space whose CPF $f(t)=t$ is monotonically increasing in the relative Hamming distance.
We refer to this specific family as \emph{anti bit-sampling}. 
For anti bit-sampling, we get that $\rho_-= \ln f(r) / \ln f(r/c) = \Omega(1/\ln c)$ as soon as the relative Hamming distance $r \in [0,1]$ is smaller than $1/e$.
Perhaps surprisingly, anti bit-sampling is not optimal and a better result, with $\rho_-=O(1/c)$, follows by the anti-LSH schemes based on cross-polytope hashing and filters for the unit sphere in the following subsection. Similarly, the DSH construction for Euclidean space in section~\ref{sec:antilshEucl} gives a value of $O(1/c)$ for $\rho_-$.

\subsection{A DSH construction in Euclidean space}\label{sec:antilshEucl}
A simple and elegant  DSH family in Euclidean space is given by  a natural  extension of the LSH family introduced by Datar et al. \cite{Datar04}, where we project a point onto a line and split this line up into buckets.
Let $k$ and $w$ be two suitable parameters to be chosen below.
Consider the family $\mathcal{R}_{k, w}$ of pairs of functions $(h, g)$ defined in equation \eqref{eq:E2LSH++},
indexed by a uniform real number $b \in [0, w]$ and a $d$-dimensional random Gaussian vector $\a \sim \mathcal{N}^d(0,1)$.
We have the following result whose proof is provided in Appendix~\ref{app:proof:e2antilsh}:

\begin{theorem}
    Let $r_-$ and $r$ be two real values such that $0<r_-<r$, and let $c=r/r_-$.  Then there exists
    a constant $w = w(c)$ such that for each $k$ the family $\mathcal{R}_{k, w}$ satisfies $$\rho_- = \frac{\ln (1/f(r))}{\ln (1/f(r_-))} = \frac{1}{c^2} \left( 1  + O_k(1/k)\right).$$
    \label{thm:euclidean:anti:lsh}
\end{theorem}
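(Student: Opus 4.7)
The plan is to reduce the collision probability to a one-dimensional integral of a triangle function against a Gaussian density, and then extract its logarithmic leading order for large $k$ by elementary two-sided Gaussian tail estimates.

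\emph{Step 1: reduction to an integral.} I would condition on $u := \langle \a, \x - \y\rangle$, which by the 2-stability of the Gaussian distribution has law $\mathcal{N}(0, r^2)$ with $r = \|\x - \y\|$. Because $b/w$ is uniform on $[0,1)$ and independent of $\a$ and $\y$, the fractional part of $(\langle \a,\y\rangle + b)/w$ is uniform on $[0,1)$, so the standard Datar et al.\ calculation (shifted by $k$) shows that, conditional on $u$, the collision probability equals
\begin{equation*}
T(u) \;=\; \max\bigl(0,\, 1 - |u/w - k|\bigr),
\end{equation*}
a triangle of height $1$ supported on $[(k-1)w, (k+1)w]$. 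Integrating against the density of $u$,
\begin{equation*}
f(r) \;=\; \frac{1}{r\sqrt{2\pi}} \int_{(k-1)w}^{(k+1)w} T(u)\, e^{-u^2/(2r^2)}\, du.
\end{equation*}

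\emph{Step 2: two-sided tail estimates.} For $k$ large the Gaussian weight decays sharply over the support of $T$. An upper bound on $f(r)$ follows from $T \le 1$ combined with Mills' ratio $\Pr[Z \ge t] \le \phi(t)/t$ applied to $\Pr[U \ge (k-1)w]$. A matching lower bound on $f(r)$ follows by restricting the integral to $[(k-\tfrac12)w, (k+\tfrac12)w]$, where $T \ge 1/2$, and lower-bounding the Gaussian density by its value at the right endpoint. Expanding the squares $(k-1)^2$ and $(k+\tfrac12)^2$, the two bounds together yield
\begin{equation*}
\ln\bigl(1/f(r)\bigr) \;=\; \frac{k^2 w^2}{2r^2} \,+\, O\!\bigl(k w^2/r^2\bigr) \,+\, O(\log k),
\end{equation*}
with the hidden constants depending only on $w/r$.

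\emph{Step 3: the ratio.} Applying Step 2 with $r$ replaced by $r_- = r/c$ gives $\ln(1/f(r_-)) = c^2 k^2 w^2/(2r^2) + O(c^2 k w^2/r^2)$. Dividing numerator and denominator of $\rho_-$ by $k^2 w^2/(2r^2)$,
\begin{equation*}
\rho_- \;=\; \frac{1 + O(1/k)}{c^2 + O(c^2/k)} \;=\; \frac{1}{c^2}\bigl(1 + O_k(1/k)\bigr),
\end{equation*}
and the theorem follows with any fixed positive $w$, e.g.\ $w(c) := 1$. The main delicate point is that the $O(k)$ subleading term in the numerator has a prefactor $w^2/r^2$ that differs from the corresponding prefactor in the denominator by exactly a factor $c^2$, so it survives the leading cancellation rather than being absorbed; the matching two-sided bounds above are what is needed to verify that after dividing, the remaining relative error is uniformly $O(1/k)$ rather than $\Omega(1)$. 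One should also observe that for $r \ll kw$ --- automatic as $k \to \infty$ with $r$ and $c$ fixed --- we sit on the increasing (left) flank of the unimodal CPF, consistent with $\rho_- = 1/c^2 < 1$.
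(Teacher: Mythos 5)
Your proposal is correct and follows essentially the same route as the paper's proof in Appendix~B: reduce to the integral of the triangle function against the $\mathcal{N}(0,\Delta^2)$ density, bound the integral from above by $(\text{triangle area})\times(\text{density at the left endpoint})$ (equivalently, a Mills-ratio tail bound) and from below by restricting to the flat central part of the triangle and evaluating the density at the right endpoint, then take the ratio of logarithms. Two small places where you are actually more complete than the paper: you track both directions of the inequality (upper \emph{and} lower bounds on $\ln(1/f(r))$ and on $\ln(1/f(r_-))$), which is what the equality claim in the theorem really requires, whereas the printed proof only assembles the upper bound $\rho_- \leq \tfrac{1}{c^2}(1+O(1/k))$; and you observe that any fixed $w>0$ works once the $O(\log k)$ prefactor terms are absorbed into the error, whereas the paper imposes $w \le \sqrt{2\pi}/(2c)$ only to make a single logarithm nonnegative so it can be dropped from the denominator. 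Both choices are compatible with the theorem statement, which asks only for existence of some $w(c)$. One minor remark: your closing observation about the $O(k)$ subleading terms ``surviving'' is harmless but a little overstated---after dividing both numerator and denominator by the leading term $k^2w^2/(2r^2)$, each reduces to $1+O(1/k)$ (resp.\ $c^2(1+O(1/k))$) regardless of whether the subleading prefactors match, so there is no cancellation that could have been hoped for.
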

\vspace{-1.5em}
The proof uses that for $\a \sim \mathcal{N}^d(0,1)$ the inner product 
$\langle \a, (\x - \y)\rangle$ is distributed as $\mathcal{N}(0, \Delta)$ for two points $\x$ 
and $\y$ at distance $\Delta$. For the hash values of $\x$ and $\y$ to collide,
the inner product must roughly lie in the interval $[(k - 1) w, (k + 1)w]$. Because we are 
free to choose $k$ and $w$, we can move the interval into the tail of the $\mathcal{N}(0, \Delta)$ distribution, where the target distances $r$ and $r/c$ have quadratic influence in the exponents.


\section{General constructions}\label{sec:generalconstr}
So far we have focused our attention on constructions with monotone CPFs, which just represent one kind of  DSH schemes.
It is natural to wonder if more advanced CPFs can be obtained.
In this section, we provide some results in this direction by describing two constructions yielding a wide class of CPFs.
We remark that these general constructions do not seem to provide any improved constructions in the monotone case.

\paragraph{Angular similarity functions}
We say that $\text{sim}\colon [-1,1] \rightarrow [0,1]$ is an \emph{LSHable angular similarity function} if there exists an hash family $\mathcal{S}$ with collision probability function $\text{sim}(\ip{\x}{\y})$ for each $\x, \y \in \sphere{d}$.
For example, the function $\text{sim}(t) = 1 - \arccos(t) / \pi$ is LSHable using the \emph{SimHash} construction of Charikar~\cite{Charikar02}.

Valiant~\cite{valiant2015finding} described  a pair of mappings $\varphi^{\mathcal P}_1,\varphi^{\mathcal P}_2\colon {\mathbb{R}}^d \rightarrow {\mathbb{R}}^D$, where $D=O(d^k)$, such that $\varphi^{\mathcal P}_1(\x)\cdot \varphi^{\mathcal P}_2(\y) = \pp{\ip{\x}{\y}}$, for any polynomial $\pp{t}=\sum_{i=0}^k a_i t^i$.
By leveraging this construction we get the following result (with proof provided  in Appendix~\ref{app:angfunc}).
\begin{theorem}\label{thm:distribution-sensitive}
	Suppose that $\text{sim}$ is an LSHable angular similarity function and that the polynomial $\pp{t}=\sum_{i=0}^k a_i t^i$ satisfies $\sum_{i=0}^k |a_i|$ = 1.
    Then there exists a distribution over pairs $(h,g)$ of functions such that for all $\x,\y\in \sphere{d}$, $\Pr[h(\x)=g(\y)] = \textnormal{sim}(\pp{\ip{\x}{\y}})$.
\end{theorem}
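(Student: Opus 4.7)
The plan is to compose Valiant's asymmetric polynomial embedding $(\varphi^{\mathcal{P}}_1, \varphi^{\mathcal{P}}_2)$ with a hash family witnessing the LSHability of $\textnormal{sim}$. The reduction runs on a larger sphere $\sphere{D}$ with $D=O(d^k)$: if I can arrange that both embeddings land on $\sphere{D}$ and that their inner product equals $\pp{\ip{\x}{\y}}$, then applying a single hash from $\mathcal{S}$ to each side will yield exactly the CPF $\textnormal{sim}(\pp{\ip{\x}{\y}})$.

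Concretely, I would realise Valiant's mapping via tensor powers. Writing $\pp{t} = \sum_{i=0}^{k} a_i t^i$ and using the identity $\ip{\x^{\otimes i}}{\y^{\otimes i}} = \ip{\x}{\y}^{i}$, define
\[
\varphi^{\mathcal{P}}_1(\x) = \bigoplus_{i=0}^{k} \sqrt{|a_i|}\,\mathrm{sign}(a_i)\,\x^{\otimes i}, \qquad \varphi^{\mathcal{P}}_2(\y) = \bigoplus_{i=0}^{k} \sqrt{|a_i|}\,\y^{\otimes i},
\]
with the convention $\x^{\otimes 0} := 1$, so that $\ip{\varphi^{\mathcal{P}}_1(\x)}{\varphi^{\mathcal{P}}_2(\y)} = \sum_i a_i \ip{\x}{\y}^i = \pp{\ip{\x}{\y}}$. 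The key observation is that, because $\norm{\x}=\norm{\y}=1$, we get $\norm{\varphi^{\mathcal{P}}_1(\x)}^2 = \norm{\varphi^{\mathcal{P}}_2(\y)}^2 = \sum_{i=0}^k |a_i|$, and the hypothesis $\sum_i |a_i|=1$ is precisely what forces both embeddings onto $\sphere{D}$. This also guarantees $\pp{\ip{\x}{\y}} \in [-1,1]$, which is needed before we can feed the value to $\textnormal{sim}$.

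Finally, let $\mathcal{S}$ be the hash family on $\sphere{D}$ witnessing that $\textnormal{sim}$ is LSHable; all standard symmetric constructions (e.g.\ SimHash) carry over verbatim between ambient dimensions, so it is harmless to work on $\sphere{D}$ rather than on $\sphere{d}$. I would then sample a single $s \sim \mathcal{S}$ and set $h(\x) = s(\varphi^{\mathcal{P}}_1(\x))$ and $g(\y) = s(\varphi^{\mathcal{P}}_2(\y))$, obtaining
\[
\Pr[h(\x) = g(\y)] \;=\; \textnormal{sim}\bigl(\ip{\varphi^{\mathcal{P}}_1(\x)}{\varphi^{\mathcal{P}}_2(\y)}\bigr) \;=\; \textnormal{sim}\bigl(\pp{\ip{\x}{\y}}\bigr),
\]
which is the desired CPF.

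There is no real obstacle here. The substantive content of the theorem is the recognition that the normalisation condition $\sum_i|a_i|=1$ is exactly what is required so that Valiant's asymmetric tensor embedding sends $\sphere{d}\times\sphere{d}$ into $\sphere{D}\times\sphere{D}$ while preserving the polynomial inner-product identity; once both sides live on the same sphere, symmetric LSHability of $\textnormal{sim}$ does the rest.
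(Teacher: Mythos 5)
Your proof is correct and follows essentially the same route as the paper: Valiant's asymmetric tensor-power embedding, with the signs split between the two maps (your $\sqrt{|a_i|}\,\mathrm{sign}(a_i)$ versus $\sqrt{|a_i|}$ is literally the same split as the paper's $\sqrt{|a_i|}$ versus $a_i/\sqrt{|a_i|}$), the observation that $\sum_i|a_i|=1$ normalises both images onto $\sphere{D}$, and then composition with a single hash $s\sim\mathcal{S}$. The only cosmetic addition is your explicit remark that the LSHable family applies on $\sphere{D}$, which the paper leaves implicit.
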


The computational cost of a naïve implementation of the proposed scheme may be prohibitive when $d^k$ is large.
However, by using the so-called \emph{kernel approximation} methods~\cite{pham2013fast}, we can in near-linear time compute approximations $\hat{\varphi}^{\mathcal P}_1(\x)$ and $\hat{\varphi}^{\mathcal P}_2(\y)$ that satisfy $\hat{\varphi}^{\mathcal P}_1(\x)\cdot \hat{\varphi}^{\mathcal P}_2(\y) = \pp{\ip{\x}{\y}}\pm \varepsilon$ with high probability for a given approximation error $\varepsilon > 0$.

\paragraph{Hamming distance functions}
It is natural to wonder which CPFs can be expressed as a function of the relative Hamming distance $d_h(\x,\y)$.
A first answer follows by using the anti bit-sampling approach from section~\ref{sec:algos} together with Lemma~\ref{lemma:transform}. This gives a scheme for matching any polynomial $\pp{t}=\sum_{i=0}^k a_i t^i$ that satisfies $\sum_{i=0}^k a_i = 1$ and $a_i > 0$ for each $i$.

In this section, we provide another construction that  matches, up to a scaling factor $\Delta$,  any polynomial $\pp{t}$ having no roots with a real part in $(0,1)$.
The scaling factor depends only on the roots of the polynomial. 
We claim that such a factor is unavoidable in the general case: indeed, without $\Delta$, it would be possible to match the CPF $1-t^2$ for Hamming space, which implies $\rho\leq 1/c^2$ in contradiction with the lower bound $1/c$ in~\cite{o2014optimal}. 
(Nevertheless, it is an open question to assess how tight $\Delta$ is.)
We have the following result that is proven in Appendix~\ref{sec:hamfunc}:

\begin{theorem}\label{thm:hamcontstr}
Let $\pp{t}=\sum_{i=0}^k a_i t^i$, $Z$ be the multiset of roots of $\pp{t}$, and $\psi\leq k$ be the number of roots with negative real part. 
Then there exists a DSH family with collision probability 
$\Pr(h(\x)=g(\y))=\pp{d_h(\x,\y)}/\Delta$ with  $\Delta = a_k 2^\psi \prod_{z\in Z, |z|>1} |z_i|$. 
\end{theorem}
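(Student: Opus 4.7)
The plan is to factor $\pp{t}$ over $\mathbb{R}$ into linear and quadratic pieces, realize each as a scaled CPF using atomic building blocks, and then glue them together with Lemma~\ref{lemma:transform}. The atoms at our disposal are: bit-sampling (CPF $1-t$), anti bit-sampling (CPF $t$), the constant CPF $1$ (two identical maps), and the zero CPF (maps into disjoint ranges). Grouping complex conjugate roots yields
\[
\pp{t} = a_k \prod_{z \in Z_{\mathbb{R}}} (t - z) \cdot \prod_{\{z,\bar z\}\subseteq Z_{\mathbb{C}}} (t - z)(t - \bar z),
\]
and the root hypothesis forces every linear factor to have $z \leq 0$ or $z \geq 1$, and every quadratic factor to have $\Re(z) \leq 0$ or $\Re(z) \geq 1$.

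For each factor I construct a CPF proportional to it, tracking its scaling $\delta$. A real root $z \leq 0$ yields $(t + |z|)/(1+|z|) = \tfrac{1}{1+|z|}\cdot t + \tfrac{|z|}{1+|z|}\cdot 1$, a convex combination via Lemma~\ref{lemma:transform}(b), with $\delta_z = 1+|z|$. A real root $z \geq 1$ yields $(z - t)/z = \tfrac{1}{z}(1-t) + (1-\tfrac{1}{z})\cdot 1$, with $\delta_z = z$. For a complex conjugate pair I use the identity $(t - z)(t - \bar z) = (t - \Re(z))^2 + \Im(z)^2$. If $\Re(z) \geq 1$, then $(1 - t/\Re(z))$ is a CPF by the case above, so $(1 - t/\Re(z))^2$ is a CPF by Lemma~\ref{lemma:transform}(a), and
\[
\frac{(t-z)(t-\bar z)}{|z|^2} = \frac{\Re(z)^2}{|z|^2}\bigl(1 - t/\Re(z)\bigr)^2 + \frac{\Im(z)^2}{|z|^2}\cdot 1
\]
is a convex combination of CPFs with $\delta = |z|^2$. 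Symmetrically, if $\Re(z) \leq 0$, squaring the CPF $(t + |\Re(z)|)/(1 + |\Re(z)|)$ and combining with the constant CPF $1$ realizes the factor with $\delta = (1 + |\Re(z)|)^2 + \Im(z)^2 = 1 + 2|\Re(z)| + |z|^2$.

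Concatenating all factor-CPFs via Lemma~\ref{lemma:transform}(a) gives a DSH whose CPF equals $\pp{t}/\Delta'$, where $\Delta' = |a_k|\prod_i \delta_i$; the signs cancel because $\pp{t}$ has constant sign on $[0,1]$ under the root hypothesis. A direct per-factor check then shows $\Delta' \leq \Delta = a_k\, 2^\psi \prod_{z\in Z,\,|z|>1}|z|$: real roots $z \leq 0$ contribute $\delta_z = 1+|z| \leq 2\max(1,|z|)$, real roots $z \geq 1$ contribute $|z|$, complex pairs with $\Re(z) \geq 1$ contribute exactly $|z|^2$, and complex pairs with $\Re(z) \leq 0$ contribute $1 + 2|\Re(z)| + |z|^2 \leq (1+|z|)^2 \leq (2\max(1,|z|))^2$, each bounded by the matching contribution to $\Delta$ (a factor $2$ per root with $\Re(z) \leq 0$ and a factor $|z|$ per root with $|z| > 1$). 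Any slack $\Delta' < \Delta$ is absorbed by mixing with the zero CPF via Lemma~\ref{lemma:transform}(b). The main obstacle is the complex-pair case: since the atomic CPFs are real-valued one cannot realize the complex linear factors $(t-z)$ and $(t-\bar z)$ separately, and the decomposition $(t-z)(t-\bar z) = (t-\Re(z))^2 + \Im(z)^2$ into the sum of a square of a real-root CPF and a constant CPF is the key trick that makes the construction go through.
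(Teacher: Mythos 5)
Your proof is correct and follows the same high-level route as the paper's: factor $\pp{t} = a_k\prod_{z\in Z}(t-z)$ over $\mathbb{R}$ into linear and quadratic pieces, realize each piece (normalized) as a CPF built from bit-sampling, anti bit-sampling, and the constant CPF, and glue via Lemma~\ref{lemma:transform}. The two proofs also share the key trick of handling a conjugate pair via $(t-z)(t-\bar z)=(t-\Re(z))^2+\Im(z)^2$, writing it as a convex combination of a squared linear CPF and a constant. Where you differ is in the implementation of the per-factor atoms: the paper defines bespoke ``scaled and biased'' variants of (anti) bit-sampling and case-splits into seven subcases ($S_1$ through $S_7$, distinguishing $|z|\lessgtr 1$ and $|\Re(z)|\lessgtr 1$), whereas you treat each factor uniformly as a convex combination with the constant CPF $1$, obtaining e.g.\ $(t+|z|)/(1+|z|)$ for all real $z\le 0$ and $\bigl((t+|\Re(z)|)^2+\Im(z)^2\bigr)/\bigl((1+|\Re(z)|)^2+\Im(z)^2\bigr)$ for all conjugate pairs with $\Re(z)\le 0$. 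This is both simpler (four cases, no threshold at $|z|=1$) and strictly tighter: for instance your real-negative-root factor contributes $1+|z|$ against the paper's $2\max(1,|z|)$, and your $\Re(z)\le0$ conjugate pair contributes $1+2|\Re(z)|+|z|^2$ against the paper's $4\max(1,|z|^2)$. You then pad with the zero CPF to exactly reach the claimed $\Delta$; in fact your argument implicitly shows that the $\Delta$ in the theorem is not tight and a smaller scaling is attainable. One small point to keep in mind: the result depends on the hypothesis (stated in the surrounding text, not in the theorem environment) that no root of $\pp{t}$ has real part in the open interval $(0,1)$ and that $\pp{t}\ge 0$ on $[0,1]$ --- you invoke both, which is right.
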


The construction exploits the factorization  $\pp{t}=a_k \prod_{z\in Z}(t-z)$  and consists of a combination of $|Z|$ variations of bit-sampling and anti bit-sampling. 
Although the proposed scheme may not reach the $\rho$-value given by the polynomial $\pp{t}$, it can be used for estimating $\pp{d_H(\x,\y)}$ since the scaling factor is constant and only depends on the polynomial.

Finally, we observe that our scheme can be used to approximate any function $f(t)$ that can be represented with a Taylor series: indeed, it is sufficient to truncate the series to the  term that gives the desired approximation, and then to apply our construction to the resulting truncated polynomial.

\section{Applications}\label{app:applications}
\subsection{Hyperplane queries and annulus search}

\emph{Approximate annulus search} is the problem of finding a point in the set $P$ of data points with distance in an interval $[r_-,r_+]$ from a query point. On the unit sphere, hyperplane queries are a special case of annulus queries where we want to find a point with inner product close to $0$ to a query point. This type of search has applications in machine learning (see~\cite{Jain_NIPS10,Liu_ICML12}).

The ad hoc solution for this problem \cite{pagh2017approximate} in Euclidean space works by first building an LSH data structure  
that aims to retrieve points at distance at most $r$ while filtering out points at distance at least $r_+$. In each repetition and in each bucket of the hash table, one builds a data structure that is set up to filter away points at distance at least $r_-$ while preserving points at distance at least $r$. The latter filtering can be thought of as applying an anti-LSH in each bucket.  
For $r_- = r/c$ and $r_+ = cr$ for a $c \geq 1$,  the construction in \cite{pagh2017approximate} answers queries in time $\tilde{O}(dn^{\rho + 1/c^2})$ and the data structure uses space $\tilde{O}(n^{1 + \rho + 1/c^2} + dn)$, where $\rho$ describes the collision probability gap of the LSH that is used, disregarding logarithmic terms by using the $\tilde{O}(.)$ notation.

Having access to a DSH family with a CPF that peaks inside $[r_-,r_+]$ and is significantly smaller at the ends of the interval gives an LSH-like solution to this problem.

\begin{theorem}\label{thm:annulus-intro} 
Suppose we have a set $P$ of $n$ points, an interval $[r_-,r_+]$, a distance $r\in [r_-,r_+]$, 
and assume we are given a DSH scheme with a CPF $f$ that peaks inside $[r_-,r_+]$ and satisfies $f(r') \leq 1/n$ for all $r'\notin [r_-,r_+]$.
Then there exists a data structure that, given a query $\q$ for which there exists $\x\in P$ with $\dist(\q,\x)=r$, returns $\x'\in P$ with $\dist(\q,\x') \in [r_-,r_+]$ with probability at least $1/2$.
The data structure uses space $O(n^{1 + \rho^\ast}/f(r) + dn)$ and has query time $O(dn^{ \rho^\ast})$, where $\rho^\ast = \log (1/f(r))/\log n$.
\end{theorem}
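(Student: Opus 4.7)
The plan is to adapt the classical LSH query algorithm to the asymmetric DSH setting. I would sample $L$ independent pairs $(h_j, g_j) \sim \mathcal{D}$ for $j = 1, \ldots, L$, and for each $\mathbf{x} \in P$ place a pointer to $\mathbf{x}$ in bucket $h_j(\mathbf{x})$ of the $j$-th hash table. To answer a query $\mathbf{q}$, I would retrieve every point in bucket $g_j(\mathbf{q})$ of the $j$-th table, compute its actual distance to $\mathbf{q}$, and return any candidate whose distance lies in $[r_-, r_+]$. The key design choice is $L = \Theta(1/f(r)) = \Theta(n^{\rho^\ast})$, using the identity $n^{\rho^\ast} = 1/f(r)$ that follows directly from the definition $\rho^\ast = \log(1/f(r))/\log n$.

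Correctness and query-time bounds then follow from the two hypotheses on $f$. For the ``good'' point $\mathbf{x}^\ast$ with $\dist(\mathbf{q}, \mathbf{x}^\ast) = r$, Definition~\ref{def:dsh} gives collision probability exactly $f(r)$ in each independent repetition, so the probability of missing $\mathbf{x}^\ast$ in all $L$ repetitions is $(1 - f(r))^L \le e^{-L f(r)}$, which drops below $1/2$ for a suitable constant in $L$. On the other hand, each point $\mathbf{z}$ with $\dist(\mathbf{q}, \mathbf{z}) \notin [r_-, r_+]$ generates collision probability $f(\dist(\mathbf{q}, \mathbf{z})) \le 1/n$ per repetition, so by linearity of expectation the total number of spurious candidates across all $L$ tables is at most $L$ in expectation. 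Each repetition costs $O(d)$ to evaluate $g_j(\mathbf{q})$ plus $O(d)$ per candidate verification, summing to $O(Ld) = O(d\, n^{\rho^\ast})$ expected query time; a Markov-style truncation and constant-factor blow-up of $L$ converts this into a worst-case bound while preserving success probability at least $1/2$.

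For space, the raw data points cost $O(dn)$; the $L$ hash tables together hold at most $n$ pointers each, contributing $O(Ln) = O(n^{1+\rho^\ast})$; and the extra $1/f(r)$ factor in the stated bound accommodates the storage of the $L$ sampled DSH pairs (each encoding both an indexing function and a query function). The argument contains no real obstacle beyond observing that asymmetry poses no difficulty: one uses $h_j$ on the data side and $g_j$ on the query side of each table, and the DSH collision identity $\Pr[h(\mathbf{x}) = g(\mathbf{q})] = f(\dist(\mathbf{x},\mathbf{q}))$ decouples the analysis exactly as in the symmetric LSH proof. The only subtlety is recognising that, in contrast to classical LSH where $f$ is monotone and the ``near'' probability $p_1$ is unambiguous, here $f(r)$ is a peak value, so the condition $f(r') \le 1/n$ for \emph{all} $r'$ outside the annulus (rather than only for $r' > r_+$) is what makes the standard ``bad-candidate'' accounting go through verbatim.
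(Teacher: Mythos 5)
Your proof is correct and takes essentially the same route as the paper's: build $L = \Theta(1/f(r)) = \Theta(n^{\rho^\ast})$ independent hash tables keyed by $h_j$, probe with $g_j$, bound spurious collisions by $\leq 1$ per table in expectation via the hypothesis $f(r')\le 1/n$ outside $[r_-,r_+]$, and combine a $(1-f(r))^L$ miss bound with a Markov truncation of over-full queries. The only wobble is your post-hoc rationalisation of the $1/f(r)$ factor in the stated space bound as ``storage of the $L$ DSH pairs''; the paper's own proof also only accounts for $O(Ln)=O(n^{1+\rho^\ast})$ pointers and does not actually justify that extra factor either, so this is a looseness in the theorem statement rather than a gap in your argument.
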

\begin{proof}
The data structure is a straightforward adaptation of the construction of a near neighbor data structure using LSH.  
Associate  with each data point $\x$ and query point $\y$ the hash values $h(\x)$ and  $g(\y)$, where $(h, g)$ are independently sampled from the distance-sensitive family.
Store all points $\x \in S$ according to $h(\x)$ in a hash table. Let $\y$ be the query point and let $\x$ be a 
point at distance $r$. Compute $g(\y)$ and 
retrieve all the points from $S$ that have the same hash value. If a point within distance $[r_-,r_+]$ is among the  points, output one such point. 
We expect $\max\{f(r_-) n,f(r_+) n\} \leq 1$ collisions with points at distance 
at most $r_-$ or at least $r_+$.  The probability of finding $\x$ is at least $f(r)$. Thus, $L=1/f(r) \leq n^{\rho^\ast}$ 
repetitions suffice to retrieve $\x$ with constant probability $1/e$. If the algorithm retrieves more than $8L$ points, none of which is in the interval $[r_-,r_+]$, the algorithm terminates. By Markov's inequality, the probability that the algorithm retrieves $8L$ points, none of which is in the interval $[r_-,r_+]$, is at most $1/8$.  
\end{proof}
We note that the assumption $f(r_+), f(r_-) \leq 1/n$ in the theorem is not critical: the standard technique of \emph{powering} (see Lemma~\ref{lemma:transform}(a)) 
allows us to work with the CPF $f(x)^k$ for integer $k$, where $k$ is the smallest integer such that $f(x)^k \leq 1/n$.

We observe that  this data structure improves the trivial scanning solution when $\rho^\ast = \log (1/f(r))/\log n<1$, that is when  $f(r)>f(r_-)$ and $f(r)>f(r_+)$.
This is satisfied by \emph{unimodal} distance-sensitive hash families, that is when the CPF has a single maximum at $t^\ast$ and is decreasing for both $t \leq t^\ast$ and  $t \geq t^\ast$: as soon as $t^\ast$ lies in the interval $(r_-,r_+)$ we obtain a 
data structure with sublinear query time.

Obtaining a CPF that peaks inside of $[r_-, r_+]$ can be achieved by combining a standard LSH family $\mathcal{H}$ with a DSH family $\mathcal{A}$ that has an increasing CPF by means of powering each part. For example, when combining a bit-sampling and an anti bit-sampling family, concatenating $k_1$ bit-sampling and $k_2$ anti bit-sampling results in the CPF $f(t) = (1-t)^{k_1}t^{k_2}$. Setting $k_1 = k_2(1-t)/t$ results in $f$ peaking at distance $r$. In general, for the value $\rho^\ast$ in the
statement of the theorem, this approach yields a bound of $\rho^\ast \leq \rho_+ + \rho_-$, where $\rho_+$ and $\rho_-$ are the $\rho$-values of $\mathcal{H}$ and $\mathcal{A}$, respectively. This is essentially the same as the running time guarantee of the ad hoc data structure in \cite{pagh2017approximate} in Euclidean space.

In section~\ref{app:annulus}, we show how to combine the two monotonic constructions from section~\ref{sec:unit_sphere} on the unit sphere. A key ingredient of the construction is that we do not need the powering approach described above but can combine a single LSH with a single anti-LSH function and set the thresholds of each part accordingly. With respect to hyperplane queries under inner product similarity, the resulting construction 
allows us to search a point set $P$ of unit vectors for a vector approximately orthogonal to a query vector $\q$ in time $dn^{\rho^\ast + o(1)}$ for $\rho^\ast = \tfrac{1- \alpha^2}{1+\alpha^2}$, 
where we guarantee to return a vector $\x$ with $\langle \x , \q \rangle \in [-\alpha,\alpha]$ if an orthogonal vector exists.
When applied to search for approximately orthogonal vectors, our technique improves $\rho$-values of previous techniques, but the improvement is not surprising in view of recent progress in angular LSH, see e.g.~\cite{andoni2015practical}.
However, Theorem~\ref{thm:annulus_solution} in Section~\ref{app:annulus} supports searching a wide range of different annuli and not just the ones centered around vectors with zero correlation. As an additional result, we obtain a definition of an annulus in a space with bounded distances. 

\subsection{Annulus search on the unit sphere} \label{app:annulus}
We will construct a distance sensitive family $\DSH$ for solving the approximate annulus search problem.
Let $\DSH_{+}$ be parameterized by $t_{+}$ and let $\DSH_{-}$ be parameterized by $t_{-}$. 
To sample a pair of functions $(h, g)$ from $\DSH$ we independently sample a pair $(h_{+}, g_{+})$ from $\DSH_{+}$ and $(h_{-}, g_{-})$ from $\DSH_{-}$ 
and define $(h,g)$ by $h(\x) = (h_{+}(\x), h_{-}(\x))$ and $g(\x) = (g_{+}(\x), g_{-}(\x))$.

Let $f(\alpha)$ denote the CPF of $\DSH$. 
We would like to be able to parameterize $\DSH$ such that $f(\alpha)$ is somewhat symmetric around a unique maximum value of $\alpha$.
It can be verified from the definition of $\DSH_{+}$ that $p_{+}(-1) = 0$ which implies that $f(-1) = f(1) = 0$.
If we ignore lower order terms and define $\gamma > 0$ by $t_{-} = \gamma t_{+}$, then we can see that
\begin{equation*}
	\ln(1/f(\alpha)) \approx   \frac{1-\alpha}{1+\alpha}\frac{t_{+}^{2}}{2} + \frac{1+\alpha}{1-\alpha}\frac{\gamma^2 t_{+}^{2}}{2}.
\end{equation*}
For simplicity, temporarily define $a(\alpha) = (1-\alpha)/(1+\alpha) > 0$.
Given a fixed $\gamma$, the equation $a + \gamma^2 / a$ is minimized (corresponding to approximately maximizing $f(\alpha)$) when setting $a = \gamma$.
Let $\alpha_{\max} \in (-1, 1)$ and set $\gamma = a_{\max} = (1 -\alpha_{\max})/(1+\alpha_{\max})$. 
To find values $\alpha_{-} < \alpha_{\max} < \alpha_{+}$ where $\ln(1/f(\alpha_{-})) \approx  \ln(1/f(\alpha_{+}))$ 
note that this condition holds for every $s > 1$ when we set $a_{-} = sa_{\max}$ and $a_{+} = (1/s)a_{\max}$.
We therefore parameterize $\DSH$ by $\alpha_{\max} \in (-1, 1)$ and $t > 0$ and set $t_{+} = t$ and $t_{-} = (1-\alpha_{\max})/(1+\alpha_{\max})t_{+}$. 
By combining our bounds from Lemma~\ref{lem:cpfbounds} with the above observations we are able to obtain the following theorem 
which immediately yields a solution to the approximate annulus search problem. 
\begin{theorem} \label{thm:annulus}
	For every choice of $t > 0$ and constant $\alpha_{\max} \in (-1, 1)$ the family $\DSH$ satisfies the following: 
	For every choice of constant $s > 1$ consider the interval $[\alpha_{-}, \alpha_{+}]$ defined to contain every $\alpha$ such that
	$\frac{1}{s}\frac{1 - \alpha_{\max}}{1 + \alpha_{\max}} \leq \frac{1 - \alpha}{1 + \alpha} \leq  s\frac{1 - \alpha_{\max}}{1 + \alpha_{\max}}$, then  
	\begin{itemize}
		\item For $\alpha \in [\alpha_{-}, \alpha_{+}]$ we have that 
		\begin{equation*}
			f(\alpha) = \Omega\left((1/t^2) \exp\left(-(s + 1/s)\frac{1 - \alpha_{\max}}{1 + \alpha_{\max}}\frac{t^2}{2}\right)\right). 
		\end{equation*}
		\item For $\alpha \notin [\alpha_{-}, \alpha_{+}]$ we have that
		\begin{equation*}
		f(\alpha) = O\left((1/t^2) \exp\left(-(s + 1/s)\frac{1 - \alpha_{\max}}{1 + \alpha_{\max}}\frac{t^2}{2}\right)\right). 
		\end{equation*}
	\end{itemize}
	The complexity of sampling, storing, and evaluating a pair of functions $(h,g) \in \mathcal{D}$ is $O(d t^4 e^{t^2 / 2})$.
\end{theorem}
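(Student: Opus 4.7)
The plan is to first reduce the theorem to a one-variable convex-analysis problem, and then to import the sharp CPF estimates for $\DSH_+$ and $\DSH_-$ from Lemma~\ref{lem:cpfbounds}. Since $(h_+, g_+)$ and $(h_-, g_-)$ are sampled independently, the collision events $h_+(\x) = g_+(\y)$ and $h_-(\x) = g_-(\y)$ are independent, so the combined CPF factorizes as
\begin{equation*}
f(\alpha) \;=\; f_+(\alpha)\, f_-(\alpha),
\end{equation*}
where $f_\pm$ denote the CPFs of the two component families. This factorization is the structural fact that makes the whole approach go through.

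With the parameter choice $t_+ = t$ and $t_- = \gamma t$ for $\gamma := (1-\alpha_{\max})/(1+\alpha_{\max})$, the plan is to substitute the sharp tail estimates of Lemma~\ref{lem:cpfbounds} into each factor. This yields, after multiplication,
\begin{equation*}
f(\alpha) \;=\; \Theta\!\left(\tfrac{1}{t^2}\right)\exp\!\left(-\tfrac{t^2}{2}\,G(\alpha)\right), \qquad G(\alpha) := a(\alpha) + \frac{\gamma^2}{a(\alpha)},
\end{equation*}
where $a(\alpha) := (1-\alpha)/(1+\alpha)$ is a strictly decreasing bijection from $(-1,1)$ onto $(0,\infty)$. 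The $1/t^2$ prefactor is the product of the two $\Theta(1/t)$ Gaussian-tail prefactors contributed by $f_+$ and $f_-$; obtaining this requires the sharpened CPF bounds of Lemma~\ref{lem:cpfbounds} rather than the coarser $\exp(\Theta(\log t))$ form in which Theorem~\ref{thm:anti} is stated.

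The remaining work is elementary. Regarded as a function of $a > 0$, $G$ is strictly convex with unique minimum $2\gamma$ attained at $a = \gamma$, and it satisfies the symmetry $G(a) = G(\gamma^2 / a)$. Hence the sublevel sets $\{a : G(a) \le c\}$ are precisely intervals of the form $[\gamma/s, s\gamma]$, where $s \ge 1$ is determined by $G(s\gamma) = G(\gamma/s) = (s + 1/s)\gamma = c$. The defining condition $\tfrac{1}{s}\gamma \le a(\alpha) \le s\gamma$ of $[\alpha_-, \alpha_+]$ in the statement is exactly this sublevel set pulled back through the bijection $a(\cdot)$. Therefore $G(\alpha) \le (s + 1/s)\gamma$ on $[\alpha_-, \alpha_+]$, which gives the $\Omega$ lower bound on $f(\alpha)$, and $G(\alpha) \ge (s + 1/s)\gamma$ outside, which gives the $O$ upper bound.

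For the complexity claim, each sample from $\DSH$ consists of one independent sample from each of $\DSH_+$ and $\DSH_-$. Applying Theorem~\ref{thm:anti} to the $\DSH_-$ component (parameter $\gamma t$) and its direct analogue to the $\DSH_+$ component (parameter $t$), and noting $\gamma$ is a constant, bounds the total cost of sampling, storing, and evaluating a pair $(h,g)$ by $O(d t^4 e^{t^2/2})$. The main obstacle I anticipate is the bookkeeping between the $\log$-level approximation afforded by Theorem~\ref{thm:anti} and the polynomial-level approximation required for the $1/t^2$ prefactor in the theorem, which forces one to open up Lemma~\ref{lem:cpfbounds} rather than use Theorem~\ref{thm:anti} as a black box.
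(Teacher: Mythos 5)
Your proposal matches the paper's approach essentially step for step: factorize $f = f_+ \cdot f_-$ via the independence of the two concatenated components, substitute the sharp CPF estimates of Lemma~\ref{lem:cpfbounds} (with $t_+ = t$, $t_- = \gamma t$ and Lemma~\ref{lem:symmetry} to convert $f_-$ to the $f_+$ form), obtain $\ln(1/f(\alpha)) = \tfrac{t^2}{2}\bigl(a(\alpha) + \gamma^2/a(\alpha)\bigr) + \Theta(\log t)$ with $a(\alpha) = (1-\alpha)/(1+\alpha)$, and then observe that the sublevel sets of the strictly convex map $a \mapsto a + \gamma^2/a$ are exactly the multiplicative intervals $[\gamma/s,\, s\gamma]$ with value $(s+1/s)\gamma$ at the endpoints. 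The paper only gives this argument as a sketch ("by combining our bounds from Lemma~\ref{lem:cpfbounds} with the above observations"), so your expansion — in particular, your correct observation that one must use the $\Theta(1/t)$ prefactors from Lemma~\ref{lem:cpfbounds} rather than the $\Theta(\log t)$ error in Theorem~\ref{thm:anti} to recover the stated $1/t^2$ prefactor — is a faithful filling-in of that sketch rather than a different route.
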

\noindent See Figure~\ref{fig:annuli} for a visual representation of the annulus for given parameters $\alpha_{\max}$ and $s$. 

\begin{figure}[t]
\centering
    \includegraphics[width=0.5\textwidth]{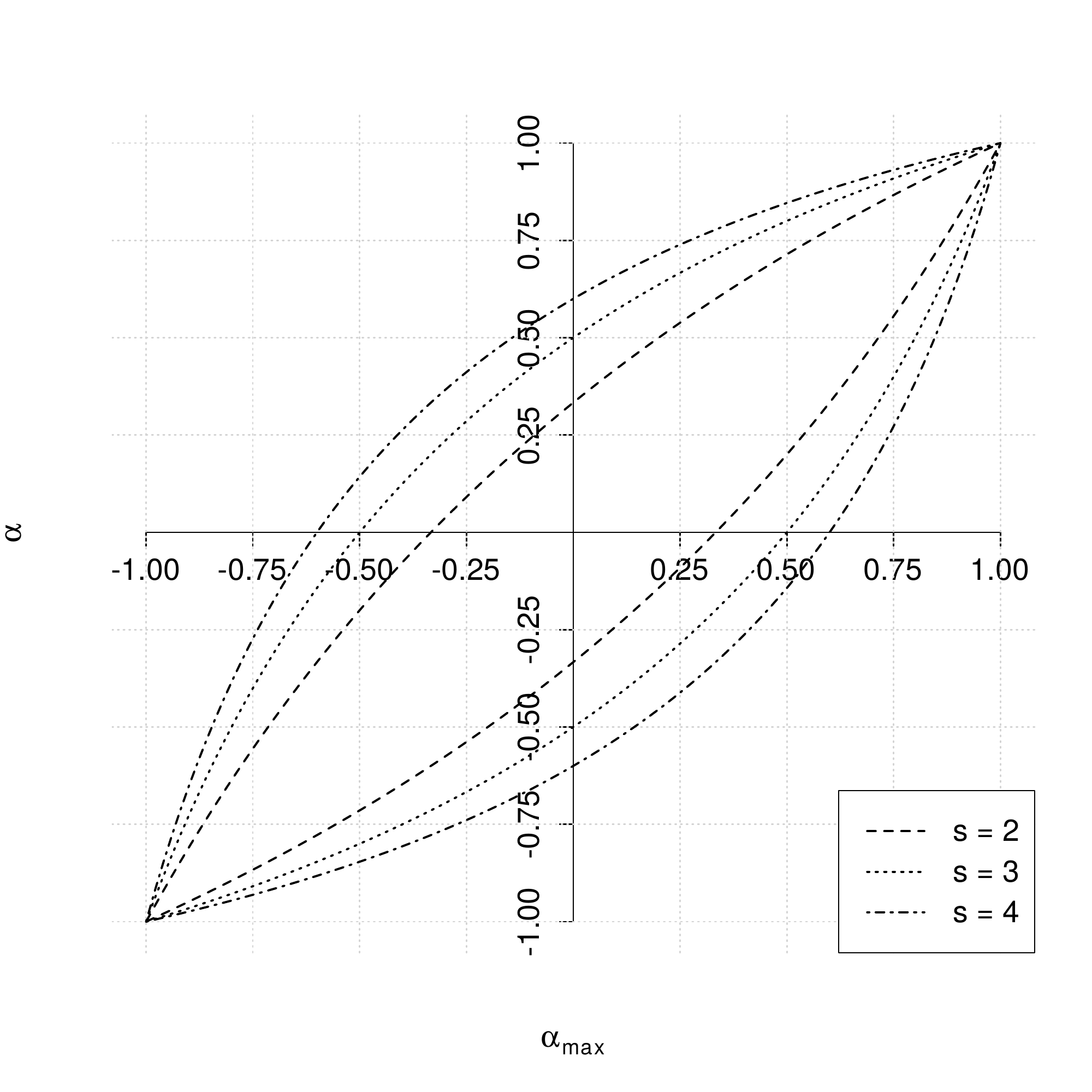}
\caption{Annuli as defined in Theorem \ref{thm:annulus} for every value of $\alpha_{\max}$ and $s = 2, 3, 4$.}
	\label{fig:annuli}
\end{figure}

We define an approximate annulus search problem for similarity spaces and proceed by applying 
Theorem \ref{thm:annulus} to provide a solution for the unit sphere, resulting in Theorem \ref{thm:annulus_solution}.
\begin{definition}
	Let $\beta_{-} < \alpha_{-} \leq \alpha_{+} < \beta_{+}$ be given real numbers.
	For a set $P$ of $n$ points in a similarity space $(X, \simil)$ a solution to the $((\alpha_{-}, \alpha_{+}), (\beta_{-}, \beta_{+}))$-annulus search problem 
	is a data structure that supports a query operation that takes as input a point $\x \in X$ and if there exists a point $\y \in P$ 
	such that $\alpha_{-} \leq \simil(\x, \y) \leq \alpha_{+}$ then it returns a point $\y' \in P$ such that $\beta_{-} \leq \simil(\x, \y') \leq \beta_{+}$.  
\end{definition}

\begin{theorem} \label{thm:annulus_solution}
	For every choice of constants $-1 < \beta_{-} < \alpha_{-} < \alpha_{+} < \beta_{+} < 1$ such that
	$\frac{1 - \alpha_{-}}{1 + \alpha_{-}}\frac{1 - \alpha_{+}}{1 + \alpha_{+}} = \frac{1 - \beta_{-}}{1 + \beta_{-}}\frac{1 - \beta_{+}}{1 + \beta_{+}}$
	we can solve the $((\alpha_{+}, \alpha_{-}), (\beta_{+}, \beta_{-}))$-annulus problem for $(\sphere{d}, \ip{\cdot}{\cdot})$ 
	with space usage $dn + n^{1 + \rho + o(1)}$ words and query time $dn^{\rho + o(1)}$ where
	\begin{equation*}
		\rho = \frac{c_{\alpha} + 1/c_{\alpha}}{c_{\beta} + 1/c_{\beta}} \leq \frac{2}{c + 1/c} 
	\end{equation*}
	and we define $1 < c_{\alpha} < c_{\beta}$ by $c_{\alpha} = \sqrt{\frac{1 - \alpha_{-}}{1 + \alpha_{-}} \big/ \frac{1 - \alpha_{+}}{1 + \alpha_{+}}}$, 
	$c_{\beta} = \sqrt{\frac{1 - \beta_{-}}{1 + \beta_{-}} \big/ \frac{1 - \beta_{+}}{1 + \beta_{+}}}$, and $c = c_{\beta}/c_{\alpha}$.
\end{theorem}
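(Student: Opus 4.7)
The strategy is to instantiate the DSH family of Theorem~\ref{thm:annulus} with a value of $\alpha_{\max}$ pinned down by the symmetry hypothesis, boost the collision gap via the powering transformation of Lemma~\ref{lemma:transform}(a), and feed the resulting family into the generic annulus data structure of Theorem~\ref{thm:annulus-intro}.

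\emph{Choosing $\alpha_{\max}$ and matching the annuli.} Write $a(\alpha) = (1-\alpha)/(1+\alpha)$. The hypothesis says $a(\alpha_-)a(\alpha_+) = a(\beta_-)a(\beta_+)$, so define $a_{\max}$ to be their common geometric mean and set $\alpha_{\max} = (1-a_{\max})/(1+a_{\max}) \in (-1,1)$. Since $a$ is strictly decreasing, the assumption $\beta_- < \alpha_- < \alpha_+ < \beta_+$ gives $a(\beta_+) < a(\alpha_+) < a_{\max} < a(\alpha_-) < a(\beta_-)$. With this choice of $\alpha_{\max}$, the interval $\{\alpha : a_{\max}/s \leq a(\alpha) \leq s\,a_{\max}\}$ of Theorem~\ref{thm:annulus} equals exactly $[\alpha_-,\alpha_+]$ when $s = c_\alpha$ and exactly $[\beta_-,\beta_+]$ when $s = c_\beta$. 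Consequently, for the family $\DSH$ of Theorem~\ref{thm:annulus} parameterized by $\alpha_{\max}$ and a parameter $t$ to be fixed below,
\begin{align*}
\log(1/f(\alpha)) &= (c_\alpha + 1/c_\alpha)\,a_{\max}\,t^2/2 + O(\log t) \ \ \text{for } \alpha \in [\alpha_-, \alpha_+],\\
\log(1/f(\alpha)) &= (c_\beta + 1/c_\beta)\,a_{\max}\,t^2/2 + O(\log t) \ \ \text{for } \alpha \notin [\beta_-, \beta_+].
\end{align*}

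\emph{Amplification and instantiation of the data structure.} Let $t = t(n)$ grow unboundedly but satisfy $t^2 = o(\log n)$ (for example $t = (\log n)^{1/4}$). Apply Lemma~\ref{lemma:transform}(a) with $k := \lceil \log n / \min_{\alpha \notin [\beta_-,\beta_+]} \log(1/f(\alpha)) \rceil = \Theta(\log n / t^2)$ independent copies, obtaining a DSH with CPF $f^k$ satisfying $f^k(\alpha) \leq 1/n$ for every $\alpha \notin [\beta_-,\beta_+]$. Plugging this amplified family into Theorem~\ref{thm:annulus-intro} yields a data structure that, whenever there exists $\y \in P$ with $\ip{\x}{\y} \in [\alpha_-,\alpha_+]$, returns some $\y' \in P$ with $\ip{\x}{\y'} \in [\beta_-,\beta_+]$ with constant probability. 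The exponent produced by that theorem is
\[
\rho^* = \frac{k\,\log(1/f(\alpha_{\max}))}{\log n} = \frac{c_\alpha + 1/c_\alpha}{c_\beta + 1/c_\beta} + O\!\left(\tfrac{\log t}{t^2}\right) = \rho + o(1),
\]
and each evaluation of the composite hash function costs $O(k\,d\,t^4\,e^{t^2/2}) = d\cdot n^{o(1)}$ because both $k$ and $e^{t^2/2}$ are sub-polynomial in $n$ under our choice of $t$. Combining with the $L = n^{\rho^*}$ repetitions gives query time $dn^{\rho + o(1)}$ and space $dn + n^{1 + \rho + o(1)}$.

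\emph{The elementary inequality and the main obstacle.} To finish, verify $\rho \leq 2/(c + 1/c)$ with $c = c_\beta/c_\alpha$. Writing $x = c_\alpha$, $y = c_\beta = cx$, and cross-multiplying reduces the claim to $(x^2 - 1)(y^2 - x^2)/x \geq 0$, which holds because $c_\alpha > 1$ and $c_\beta > c_\alpha$. The one subtle step is the parameter balancing in the amplification: we need $t \to \infty$ so that the $O(\log t / t^2)$ slack in the exponent is $o(1)$, yet slowly enough ($t^2 = o(\log n)$) that $e^{t^2/2}$ does not contribute to the exponent of $n$ in the running time. Any $t$ with $t = \omega(1)$ and $t^2 = o(\log n)$, e.g.\ $t = (\log n)^{1/4}$, simultaneously satisfies both requirements and yields the stated bounds.
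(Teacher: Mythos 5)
Your proof takes exactly the route the paper intends: instantiate the unimodal family of Theorem~\ref{thm:annulus} with $\alpha_{\max}$ chosen so that the matching condition pins $a_{\max}$ to the common geometric mean of $a(\alpha_-)a(\alpha_+)$ and $a(\beta_-)a(\beta_+)$, then amplify by concatenation so the collision probability drops below $1/n$ outside $[\beta_-,\beta_+]$, and plug the result into the generic machinery of Theorem~\ref{thm:annulus-intro}. The observation that one must take $t=\omega(1)$ but $t^2=o(\log n)$ so that the $\Theta(\log t)$ error term in the exponent becomes $o(1)$ while $e^{t^2/2}$ (and hence the cost $O(dt^4 e^{t^2/2})$ of one filter pair) stays $n^{o(1)}$ is the key balancing step, and you handle it correctly; simply setting $t$ to the threshold that forces $f\le 1/n$ directly would make $e^{t^2/2}$ polynomial in $n$ and wreck the claimed query/space bounds, so the powering detour is essential. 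This is not spelled out in the paper but is the only sensible reading of ``applying Theorem~\ref{thm:annulus}'' together with the paper's own remark after Theorem~\ref{thm:annulus-intro} about powering.

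Three small notational slips worth fixing. First, the two display lines for $\log(1/f(\alpha))$ should be inequalities, not equalities: Theorem~\ref{thm:annulus} only gives a lower bound on $f$ inside $[\alpha_-,\alpha_+]$ (hence $\log(1/f(\alpha)) \le (c_\alpha+1/c_\alpha)a_{\max}t^2/2 + O(\log t)$) and an upper bound on $f$ outside $[\beta_-,\beta_+]$ (hence $\ge$ for that line). Second, $\rho^*$ should be computed from the \emph{worst} collision probability inside the hit range, i.e. $\log(1/f(\alpha_-))$ (equivalently $\max_{\alpha\in[\alpha_-,\alpha_+]}\log(1/f(\alpha))$), not $\log(1/f(\alpha_{\max}))$; using the peak value would give the smaller (and wrong) exponent $2/(c_\beta+1/c_\beta)$. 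Your middle expression $\tfrac{c_\alpha+1/c_\alpha}{c_\beta+1/c_\beta}$ is what one gets from $f(\alpha_-)$, so the numerics are fine, only the symbol is off. Third, after cross-multiplication the elementary inequality reduces to $(c_\alpha^2-1)(c_\beta^2-c_\alpha^2)\ge 0$ (no stray $/x$), which holds because $1<c_\alpha<c_\beta$; your conclusion is unaffected.
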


\subsection{Spherical range reporting}
\emph{Approximate spherical range reporting}~\cite{AhleAP17} aims to report all points in $P$ within distance~$r$ from a query point. A common problem with LSH-based solutions for reporting all close points is that 
the CPF is monotonically decreasing starting with collision probability very 
close to $1$ for points that are very close to the query point. On the other hand, 
many repetitions are necessary to find points at the target distance $r$. This means 
that the algorithm retrieves many duplicates for solving range reporting problems.
The state-of-the-art data structure for range reporting queries~\cite{AhleAP17} requires $O\left( (1+|S^\ast|)(n/|S^\ast|)^\rho\right)$, where $S^\ast$ is the set of points at distance at most $r_+$.

CPFs that have a (roughly) fixed value in $[0,r]$ and then decrease rapidly to zero (so-called ``step-function CPFs'') yield data structures with good \emph{output sensitivity}. 
\begin{theorem}\label{thm:spherical-range}
Suppose we have a set $P$ of $n$ points and two distances $r < r_+$. 
Assume we are given a DSH scheme with CPF $f$ where $f(r')\leq 1/n$ for all $r' \geq r_+$, 
and let $f_\textnormal{min} = \inf_{t\in [0,r]} f(t)$, $f_\textnormal{max} = \sup_{t\in [0,r]} f(t)$.
Then there exists a data structure that, given a query $\q$, 
returns $S\subseteq \{ \x \in P  \mid  \dist(\q,\x) \leq r_+ \}$ such that for each $\x \in P$ with $\dist(\q,\x)\leq r$, $\Pr[\x \in S]>1/2$.
The data structure uses space $O(n^{1 + \rho^\ast} + dn)$ and the query has expected running time $O(dn^{\rho^\ast} + d|S|f_\textnormal{max}/f_\textnormal{min})$, 
where $\rho^\ast = \log (1/f_\textnormal{min})/\log (1/f(r_+))$.
\end{theorem}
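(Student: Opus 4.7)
The plan is to use the standard LSH-style tabulation data structure adapted to the DSH pairs $(h, g)$. First, invoke Lemma~\ref{lemma:transform}(a) to power the given DSH so that the resulting CPF satisfies $f(r_+) = \Theta(1/n)$; this does not change $\rho^\ast$ and only rescales $f_\textnormal{min}$ and $f_\textnormal{max}$ by a common power, so we may assume this already holds. Sample $L = \lceil c/f_\textnormal{min}\rceil = \Theta(n^{\rho^\ast})$ independent pairs $(h_i, g_i)\sim\DSH$ for a small constant $c \geq 1$, and for each $i$ build a hash table $T_i$ storing each $\x \in P$ under key $h_i(\x)$. The total space is $O(nL + dn) = O(n^{1 + \rho^\ast} + dn)$ words.

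The query for $\q$ iterates $i = 1,\dots,L$, computes $g_i(\q)$, retrieves every $\x$ in bucket $T_i[g_i(\q)]$, computes $\dist(\q, \x)$, and adds $\x$ (deduplicated) to $S$ if $\dist(\q, \x) \leq r_+$. For correctness, for any fixed $\x$ with $\dist(\q, \x) \leq r$ we have $\Pr[h_i(\x) = g_i(\q)] = f(\dist(\q, \x)) \geq f_\textnormal{min}$, so by independence across the $L$ tables $\Pr[\x \in S] \geq 1 - (1 - f_\textnormal{min})^L \geq 1 - e^{-c} > 1/2$.

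For the running time I would split the work into three parts. First, the $L$ hash evaluations and bucket lookups cost $O(dL) = O(dn^{\rho^\ast})$. Second, candidates with $\dist(\q, \x) > r_+$ contribute an expected $L\cdot\sum_{\dist > r_+} f(\dist(\q, \x)) \leq L \cdot n \cdot (1/n) = L$ retrievals, costing $O(dL) = O(dn^{\rho^\ast})$ to verify. Third, each candidate $\x$ with $\dist(\q, \x) \leq r_+$ contributes $L\,f(\dist(\q, \x)) \leq L f_\textnormal{max}$ expected retrievals and enters $S$ with probability $1 - (1 - f(\dist(\q, \x)))^L$. Using the elementary bound $1 - (1-p)^L \geq \min(Lp,1)/2$, the ratio of expected retrievals to expected contribution to $|S|$ is at most $O(\max(1, L f_\textnormal{max})) = O(f_\textnormal{max}/f_\textnormal{min})$, giving expected cost $O(d\,\E[|S|]\,f_\textnormal{max}/f_\textnormal{min})$ for this phase. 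Summing the three parts yields the stated bound.

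The main subtlety I anticipate is justifying $L = n^{\rho^\ast}$: this uses the identity $1/f_\textnormal{min} = (1/f(r_+))^{\rho^\ast}$ together with $f(r_+) = \Theta(1/n)$, so the initial powering step is essential. A secondary subtlety is that the third phase implicitly uses $f(t) \leq f_\textnormal{max}$ for $t \in (r, r_+]$, which is natural for the step-function CPFs the theorem targets (since $f$ must transition from roughly $f_\textnormal{max}$ down to at most $1/n$ across that interval), but is not spelled out as a hypothesis and should be noted when invoked.
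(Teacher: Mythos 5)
Your approach is essentially the paper's: build the standard LSH-style tabulation over $L = \Theta(1/f_{\min})$ repetitions, charge the far points ($\dist \geq r_+$) against the $n^{\rho^\ast}$ term via $f(r_+) \leq 1/n$, and bound the work on near points by the collision probability. The paper's own proof is terser---it simply says each repetition contributes $O(1 + |S^\ast| f_{\max})$ candidates in expectation (where $S^\ast$ is the set of points within $r_+$) and multiplies by $1/f_{\min}$. Your ratio argument, comparing $L\,f(\dist)$ expected retrievals of a point against its probability $1-(1-f(\dist))^L$ of entering $S$, is a genuinely useful piece of precision that the paper elides: it is what lets one pass from a bound in $|S^\ast|$ (what the paper's computation really gives) to the claimed bound in $\E[|S|]$ (what the theorem states). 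The paper's proof never explains that step.

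Two caveats. First, you correctly flagged that both your argument and the paper's implicitly need $f(t) \leq f_{\max}$ for $t \in (r, r_+]$, even though $f_{\max}$ is defined as $\sup_{t \in [0,r]} f(t)$; neither the paper nor its proof states this, so keeping it explicit is the right call. Second, your opening step---powering to force $f(r_+) = \Theta(1/n)$---is slightly off: concatenation (Lemma~\ref{lemma:transform}(a)) only raises $f$ to integer powers $\geq 1$, so it can push $f(r_+)$ below $1/n$ but never back up. Since the theorem already hypothesizes $f(r_+) \leq 1/n$, the powering step is unneeded; and if $f(r_+) \ll 1/n$, then $1/f_{\min} = (1/f(r_+))^{\rho^\ast}$ can exceed $n^{\rho^\ast}$, so the identity $L = \Theta(n^{\rho^\ast})$ really does rely on the implicit normalization $f(r_+) = \Theta(1/n)$---another point the paper leaves tacit but which your write-up should not claim to fix by powering.
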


\begin{proof}
    We assume that we build a standard LSH data structure as in the proof of Theorem~\ref{thm:annulus-intro} above.
We use $1/f_{\min}$ repetitions such that each point within distance $r$ is found with constant probability.
Each repetition will contribute $O\left(1+|S^\ast| f_{\max}\right)$ points in expectation.
Thus, the total cost will be $O\left((1+|S^\ast|f_{\max})/f_{\min}\right)$ from which the statement follows. 
\end{proof}
In short, Theorem~\ref{thm:spherical-range} provides a better analysis of the performance of a standard LSH 
data structure that takes into account the gap between $f_{\min}$ and $f_{\max}$. Again, the assumption $f(r_+) \leq 1/n$ in the theorem is not critical. 

In particular, the theorem shows that if we have a constant bound on $f_\text{max}/f_\text{min}$ the output sensitivity is optimal  
in the sense that the time to report an additional close point is $O(d)$ which is the time it takes to verify its distance to the query point. 
Such step-function CPFs are implicit in the linear space extremes of the space-time tradeoff techniques for near neighbor search~\cite{andoni2017optimal, christiani2017framework}.
To see why this is the case consider a randomized (list-of-points~\cite{andoni2017optimal}) data structure that solves the approximate near neighbor problem using linear space. 
The data structure stores each point in exactly one bucket.
During a query $L$ different buckets are searched.
The data structure has the property that near neighbors collide in at least one bucket with constant positive probability.
We can now construct a family $\DSH$ where we sample $(h, g) \sim \DSH$ by sampling a random (data-independent) data structure.
We set $h(\x)$ equal to the index of the single bucket that $\x$ would be stored in during an update, and $g(\x)$ is set to the index of a random bucket out of the $L$ buckets that would be searched during a query for $\x$. 
If the data structure guarantees finding an $r$-near neighbor, then the probability of collision is $\Theta(1/L)$ for points $\x, \y$ with $\dist(\x, \y) \leq r$. 
Even though this theoretical construction gives optimal output sensitivity, it is possible that a better value of $\rho^*$ can be obtained by allowing a higher space usage.

%

\subsection{Privacy-Preserving Distance Estimation}

Consider  a database consisting of private information, for example medical histories of patients, encoded as points in a space.
We would like to be able to search the database for points that are similar to a given query point $\q$ without revealing sensitive information about the data points, except whether there exists a point within a given distance $r$ from $\q$.
This is nontrivial even in the case of a single point, so we focus on the distance estimation problem: Is the distance between points $\q$ and $\x$ at most $r$ or not? Given $\x$ we would like to answer this while revealing as little information as possible about~$\x$.
Secure  multi-party computations can be used for this (see e.g.~\cite{goldreich2009foundations}), but such protocols are not practical in general.
In contrast, the intersection of two sets can be computed efficiently and in a private way that does not reveal anything about the items not in the intersection~\cite{DBLP:conf/eurocrypt/FreedmanNP04,de2010linear,pinkas2017}.

We are going to allow false positives and approximate false negatives as follows.
For an approximation factor $c>1$, and parameters $\varepsilon,\delta>0$:
\begin{itemize}
	\item If $\q$ and $\x$ have distance at most $r$, we say ``Yes'' with probability at least $1-\varepsilon$.
	\item If $\q$ and $\x$ have distance at least $cr$, we say ``No'' with probability at least $1-\delta$.
\end{itemize}

Our approach is to reduce this problem to private set intersection, as follows:
For a parameter $t$ to be chosen later, pick a DSH family $\mathcal{H}$ with a step-function CPF with  collision probability $\Theta(1/t)$ at distances in $[0,r]$, and let $\rho>0$ be the smallest constant such that the collision probability for distances larger than $cr$ is $O(t^{-1/\rho})$.
Without loss of generality we may assume that hash values are $O(\log t)$ bits (if not, hash them to this number of bits using universal hashing, increasing the collision probability only slightly).
Now generate a sequence of $O(t \log(1/\varepsilon))$ hash functions pairs $(h_1,g_1),(h_2,g_2),\dots \sim \mathcal{H}$, independently and consider the vectors $(h_1(\x),h_2(\x),\dots)$ and $(g_1(\x),g_2(\x),\dots)$.
By definition the expected (component-wise) intersection size of the two vectors, i.e., the expected number of hash collisions, is $O(\log(1/\varepsilon))$.
This means that the intersection of the two vectors reveals $O(\log(1/\varepsilon)\log t)$ bits of information about the vectors in expectation.
Also, by adjusting constants, the probability that the intersection size is $0$ when $\q$ and $\x$ are close is at most $\varepsilon$.
Thus, saying ``Yes'' if and only if the intersection is nonempty satisfies the first condition.
On the other hand, by a union bound the probability of a ``Yes''  when $\q$ and $\x$ have distance at least $cr$ is $\delta = O(t \log(1/\varepsilon) / t^{1/\rho})$.
Conversely, we need $t \approx (1/\delta)^{\rho/(1-\rho)}$ to have false positive probability $\delta$.

We observe that our approach has a stronger privacy constraint than the result in~\cite{Riazi16}. 
Indeed, the step-function DSH preserves privacy even if the $\q$ and $\x$ points are very close (e.g., $\q=\x$), since the collision probability is almost equal in the range $[0,r]$. On the other hand, a standard LSH, as the one adopted in~\cite{Riazi16}, has an high collision rate when the points are very close, revealing information on near points.

%

\section{Conclusion}\label{sec:conclusion}

We have initiated the study of \emph{distance-sensitive hashing}, an asymmetric class of hashing methods that considerably extend the capabilities of standard LSH. We proposed   different constructions of such hash families and described some applications.  
Interestingly, DSHs provide an unique framework  for capturing problems that have been separately studied, like nearest neighbor~\cite{IndykM98}, finding orthogonal vectors~\cite{vijayanarasimhan2014hyperplane}, furthest point  query~\cite{pagh2017approximate}, and privacy preserving search~\cite{Riazi16}.

Though we settled some basic questions regarding what is possible using DSH, many questions remain.
Ultimately, one would like for a given space a complete characterization of the CPFs that can be achieved, with emphasis on extremal properties.
For example: For a CPF that has $f(x)=\Theta(\varepsilon)$ for $x\in [0,r]$, how small a value $\rho_{+}(c) = \log(f(r))/\log(f(cr))$ is possible outside of this range?
Additionally, our solution to the annulus problem works by combining an LSH and an anti-LSH family to obtain a 
unimodal family: While we know lower bounds for both, it is not clear whether combining 
them yields optimal solutions for this problem. 
Finally,  it is also of interest to consider other applications of DSH in privacy preserving search and in kernel density estimation (e.g.~\cite{Lambert1999}). 

\begin{acks}
	The authors thank Thomas D.~Ahle for insightful conversations. The research leading to these results has received funding from the \grantsponsor{SP4496}{European Research Council under the European Union's 7th Framework Programme (FP7/2007-2013)}{} / ERC grant agreement no.~\grantnum{SP4496}{614331}.
	Silvestri has also been supported by project \grantnum{pd}{SID2017} of the \grantsponsor{pd}{University of Padova}{}. BARC, Basic Algorithms Research Copenhagen, is supported by the \grantsponsor{barc}{VILLUM Foundation}{} grant \grantnum{barc}{16582}.
\end{acks}

\appendix 

\section{Monotone DSH constructions}

%
%
%

\subsection{Optimal monotone DSH for the unit sphere}\label{app:tailbounds}
We initially bound the CPF of the family $\DSH_{+}$, which translates to a bound for $\DSH_{-}$ through the following observation:
\begin{lemma} \label{lem:symmetry}
	Given $\DSH_{+}$ and $\DSH_{-}$ with identical parameters, we have that $f_{+}(\alpha) = f_{-}(-\alpha)$.
\end{lemma}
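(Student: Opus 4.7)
The plan is to unfold the definitions and use the simple observation that negating one argument of an inner product flips the sign of the inner product, together with the fact that $\DSH_{-}$ is obtained from $\DSH_{+}$ precisely by composing $g_{+}$ with negation.

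Concretely, I would first fix $\x, \y \in \sphere{d}$ with $\ip{\x}{\y} = \alpha$ and write out the joint distribution on the $m$ Gaussian vectors $\z_1, \dots, \z_m$ that parameterizes both families (the ``identical parameters'' hypothesis gives us coupled samples). By the definition in Section~\ref{sec:unit_sphere} we have
\begin{equation*}
  f_{+}(\alpha) = \Pr\bigl[h_{+}(\x) = g_{+}(\y)\bigr],
  \qquad
  f_{-}(\alpha) = \Pr\bigl[h_{+}(\x) = g_{+}(-\y)\bigr],
\end{equation*}
since $h_{-} = h_{+}$ and $g_{-}(\y) = g_{+}(-\y)$ by construction.

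Next I would evaluate $f_{-}$ at $-\alpha$ by choosing $\x, \y'$ with $\ip{\x}{\y'} = -\alpha$ and setting $\y = -\y'$, so that $\ip{\x}{\y} = \alpha$ and $g_{+}(-\y') = g_{+}(\y)$. Substituting gives $f_{-}(-\alpha) = \Pr[h_{+}(\x) = g_{+}(\y)] = f_{+}(\alpha)$. This step is purely a change of variables in the event under the same underlying probability space on $(\z_1, \dots, \z_m)$, so no additional measure-theoretic work is needed.

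There is essentially no obstacle: the only thing to be mildly careful about is that the definitions of $h_{+}$ and $g_{+}$ use different ``default'' values ($m+1$ versus $m+2$) when no projection exceeds $t$, but this asymmetry plays no role in the calculation because we only relate $g_{+}(\y)$ to $g_{+}(-\y) = g_{-}(\y)$, and both default to $m+2$. Thus the argument is a one-line substitution once the definitions are laid out, and the lemma follows.
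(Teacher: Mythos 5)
Your proof is correct, and it takes a more direct route than the paper's. The paper proves the lemma by writing a correlated pair of projections as $(Z_1, \alpha Z_1 + \sqrt{1-\alpha^2}Z_2)$ for i.i.d.\ standard normals $Z_1, Z_2$ and then using the symmetry of the Gaussian to show $\Pr[Z_1 \geq t \wedge \alpha Z_1 + \sqrt{1-\alpha^2}Z_2 \geq t] = \Pr[Z_1 \geq t \wedge -\alpha Z_1 + \sqrt{1-\alpha^2}Z_2 \leq -t]$ --- i.e.\ it establishes the identity at the level of single-projection tail probabilities and leaves to the reader to propagate that equality through the full CPF expression (including the $\Pr[h(\x) \leq m \lor g(\y) \leq m]$ factor and the normalization). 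You instead work at the level of the hash functions themselves: by construction $g_{-} = g_{+} \circ (-\operatorname{id})$, and negating $\y$ flips the sign of $\ip{\x}{\y}$, so the substitution $\y \mapsto -\y$ carries the event $\{h_{-}(\x) = g_{-}(\y)\}$ for a pair at inner product $-\alpha$ identically onto $\{h_{+}(\x) = g_{+}(-\y)\}$ for a pair at inner product $\alpha$, coupled on the same $\z_1, \dots, \z_m$. That handles the entire CPF at once rather than one ingredient of it, and your note about the $m{+}1$ vs.\ $m{+}2$ defaults is exactly the right check (it is immaterial precisely because $h_{-} = h_{+}$). The only shared assumption is that $f_{+}$ is well-defined as a function of inner product, which follows from the rotational invariance of the Gaussian projections and which the paper's proof also relies on implicitly. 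Your version is the cleaner of the two.
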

\begin{proof}
	A bivariate normally distributed variable with correlation $\alpha$ can be represented as a pair $(X, Y)$ with $X = Z_1$ and 
	$Y = \alpha Z_1 + \sqrt{1-\alpha^2} Z_2$ where $Z_1, Z_2$ are i.i.d. standard normal.
	By the symmetry of the standard normal distribution around zero it is straightforward to verify that 
	$\Pr[Z_1 \geq t \land \alpha Z_1 + \sqrt{1-\alpha^2}Z_2 \geq t] =  \Pr[Z_1 \geq t \land -\alpha Z_1 + \sqrt{1-\alpha^2}Z_2 \leq -t]$. 
\end{proof}

The collision probability for $(h,g) \sim \DSH_{+}$ depends only on the inner product $\alpha = \ip{\x}{\y}$ between the pair of points being evaluated and is given by
\begin{equation*}
	\Pr[h(\x) \leq m \lor g(\y) \leq m] \frac{\Pr[\ip{\z}{\x} \geq t \land \ip{\z}{\y} \geq t]}{\Pr[\ip{\z}{\x} \geq t \lor \ip{\z}{\y} \geq t]}.
\end{equation*}
To see why this is the case first note that it is only possible that $h(\x) = g(\y)$ in the event that $h(\x) \leq m \lor g(\y) \leq m$.
Conditioned on this happening, consider the first $i$ such that either $\ip{\z_i}{\x} \geq t$ or $\ip{\z_i}{\y} \geq t$.
Now the probability of collision is given by $\Pr[\ip{\z}{\x} \geq t \land \ip{\z}{\y} \geq t \vert \ip{\z}{\x} \geq t \lor \ip{\z}{\y} \geq t]$. 

To bound the CPF of $\DSH_{+}$ and   $\DSH_{-}$, we use the following tail bounds for the standard normal distribution and the tail bounds by Savage \cite{savage1962} for the bivariate standard normal distribution.

\begin{lemma}[Follows Szarek \& Werner \cite{szarek1999}] \label{lem:univariatebounds}
	Let $Z$ be a standard normal random variable. Then, for every $t \geq 0$ we have that
	\begin{equation*}
		\frac{1}{\sqrt{2\pi}}\frac{1}{t+1}e^{-t^{2}/2} \leq \Pr[Z \geq t] \leq \frac{1}{\sqrt{2\pi}}\frac{1}{t}e^{-t^{2}/2}. 
	\end{equation*}
\end{lemma}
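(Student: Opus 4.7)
The proof plan is to handle the two inequalities separately, both using only elementary manipulations of the Gaussian density $\phi(x) = \tfrac{1}{\sqrt{2\pi}} e^{-x^2/2}$. I would not invoke Szarek--Werner explicitly; the bounds given are weaker than their near-tight ones and admit short self-contained arguments.

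For the upper bound, first I would dispose of $t=0$, where the claim reads $\Pr[Z\geq 0]\leq\infty$ and is trivially true. For $t>0$, the standard Mills-ratio trick works: on $[t,\infty)$ we have $x/t\geq 1$, so
\begin{equation*}
\Pr[Z\geq t]=\int_{t}^{\infty}\phi(x)\,dx\;\leq\;\int_{t}^{\infty}\tfrac{x}{t}\,\phi(x)\,dx\;=\;\tfrac{1}{t}\,\phi(t),
\end{equation*}
where the last equality follows from $-\phi'(x)=x\phi(x)$ and the fundamental theorem of calculus. This gives exactly the right-hand inequality.

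For the lower bound, I would define
\begin{equation*}
g(t)\;=\;\Pr[Z\geq t]-\tfrac{1}{\sqrt{2\pi}}\,\tfrac{1}{t+1}\,e^{-t^{2}/2},
\end{equation*}
and show $g(t)\geq 0$ on $[0,\infty)$ by the ``monotone and vanishes at infinity'' argument. At $t=0$, $g(0)=\tfrac{1}{2}-\tfrac{1}{\sqrt{2\pi}}>0$, and $g(t)\to 0$ as $t\to\infty$ since both terms vanish. Differentiating,
\begin{equation*}
g'(t)\;=\;-\phi(t)-\tfrac{1}{\sqrt{2\pi}}\,\tfrac{d}{dt}\!\left[\tfrac{e^{-t^{2}/2}}{t+1}\right]\;=\;\phi(t)\!\left[-1+\tfrac{t^{2}+t+1}{(t+1)^{2}}\right]\;=\;-\,\phi(t)\,\tfrac{t}{(t+1)^{2}},
\end{equation*}
which is non-positive for all $t\geq 0$. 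Hence $g$ is non-increasing on $[0,\infty)$, and since $\lim_{t\to\infty}g(t)=0$, we conclude $g(t)\geq 0$ throughout.

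There is no real obstacle here; the only thing to be slightly careful with is making sure the derivative computation is clean so that the factor $-t/(t+1)^{2}$ appears and the sign argument closes. The entire proof is a few lines and relies only on the identity $-\phi'(x)=x\phi(x)$ together with basic calculus.
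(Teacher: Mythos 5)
Your proof is correct and complete. The paper does not actually prove this lemma; it simply cites Szarek and Werner \cite{szarek1999}, whose bounds on the Mills ratio are tighter than what the lemma asserts. You instead give a short, self-contained argument: the upper bound via the standard trick of multiplying the integrand by $x/t\geq 1$ and integrating $-\phi'(x)=x\phi(x)$, and the lower bound by checking that $g(t)=\Pr[Z\geq t]-\frac{1}{\sqrt{2\pi}}\frac{e^{-t^2/2}}{t+1}$ satisfies $g'(t)=-\phi(t)\,t/(t+1)^2\leq 0$ and $g(t)\to 0$, so $g\geq 0$ on $[0,\infty)$. The derivative computation checks out, since $t^2+t+1-(t+1)^2=-t$, and the "non-increasing with limit zero implies nonnegative" step is sound. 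The only thing your argument trades away is sharpness: Szarek and Werner's constant in the lower bound is better (asymptotically $1/t$ rather than $1/(t+1)$), but the paper only needs the weaker form, so your elementary route is entirely adequate and arguably preferable for a self-contained exposition.
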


\begin{lemma}[Savage {\cite{savage1962}}] \label{lem:savage}
	Let $\alpha \in (-1,1)$ and let $Z_1, Z_2 \sim \mathcal{N}(0,1)$. Define $X_1 = Z_1$ and $X_2 = \alpha Z_1 + \sqrt{1-\alpha^2}Z_2$.
Then, for every $t > 0$ we have that


\begin{align*}
	&\left(1-\frac{(2-\alpha)(1+\alpha)}{1-\alpha}\frac{1}{t^2}\right)\frac{1}{2 \pi t^2} \frac{(1 + \alpha)^2} {\sqrt{1-\alpha^2}} \exp\left(-\frac{t^2}{1+\alpha}\right) \\
	&< \Pr[X_1 \geq t \land X_2 \geq t] < \frac{1}{2 \pi t^2} \frac{(1 + \alpha)^2} {\sqrt{1-\alpha^2}} \exp\left(-\frac{t^2}{1+\alpha}\right)
\end{align*}

\end{lemma}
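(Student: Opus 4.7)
I would start from the joint density of $(X_1,X_2)$, which is the bivariate standard normal with correlation $\alpha$:
\[
    \phi(x_1,x_2) = \tfrac{1}{2\pi\sqrt{1-\alpha^2}} \exp\bigl(-(x_1^2 - 2\alpha x_1 x_2 + x_2^2)/(2(1-\alpha^2))\bigr).
\]
Shifting to $y_i = x_i - t$ and expanding the quadratic,
\[
    x_1^2 - 2\alpha x_1 x_2 + x_2^2 = Q(y_1,y_2) + 2t(1-\alpha)(y_1+y_2) + 2t^2(1-\alpha),
\]
where $Q(y_1,y_2) := y_1^2 - 2\alpha y_1 y_2 + y_2^2$. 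Dividing by $2(1-\alpha^2) = 2(1-\alpha)(1+\alpha)$, the factor $1-\alpha$ cancels from the linear-in-$t$ and constant-in-$t$ pieces, yielding
\[
    \Pr[X_1 \geq t, X_2 \geq t] = \tfrac{e^{-t^2/(1+\alpha)}}{2\pi\sqrt{1-\alpha^2}}\, I(t),
\]
with $I(t) := \iint_{y_1,y_2 \geq 0} e^{-t(y_1+y_2)/(1+\alpha)}\, e^{-Q(y_1,y_2)/(2(1-\alpha^2))}\, dy_1\, dy_2$.

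For the upper bound I would observe that $Q$ is positive semidefinite (its matrix has eigenvalues $1\pm\alpha$, both positive since $|\alpha|<1$), so the quadratic factor $e^{-Q/(2(1-\alpha^2))} \leq 1$ can simply be dropped. The remaining integral factorises into the square of $\int_0^\infty e^{-ty/(1+\alpha)} dy = (1+\alpha)/t$, giving $I(t) \leq (1+\alpha)^2/t^2$ and thus the stated upper bound.

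For the lower bound I would apply the universal inequality $e^{-x} \geq 1 - x$ (valid for all real $x$, even when $1-x < 0$) to the quadratic factor:
\[
    I(t) \geq \iint e^{-t(y_1+y_2)/(1+\alpha)} \bigl(1 - Q(y_1,y_2)/(2(1-\alpha^2))\bigr) dy_1\, dy_2.
\]
All resulting integrals are separable and evaluable in closed form via $\int_0^\infty y^k e^{-ty/(1+\alpha)} dy = k!(1+\alpha)^{k+1}/t^{k+1}$. The leading term gives $(1+\alpha)^2/t^2$; the second-moment contributions from $y_1^2 + y_2^2 - 2\alpha y_1 y_2$ sum to $(4-2\alpha)(1+\alpha)^4/t^4$, and division by $2(1-\alpha)(1+\alpha)$ leaves $(2-\alpha)(1+\alpha)^3/((1-\alpha)t^4)$. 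Factoring out $(1+\alpha)^2/t^2$ from $I(t)$ recovers exactly the correction factor $1 - (2-\alpha)(1+\alpha)/((1-\alpha)t^2)$ appearing in the lemma.

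Conceptually the proof is routine once the shift $y_i = x_i - t$ is made, which is the standard way to expose the Gaussian tail rate $1/(1+\alpha)$. The main obstacle is bookkeeping: carefully tracking the interplay of $(1+\alpha)$ and $(1-\alpha)$ factors so they collapse to the precise constants in the bounds, and noting that using $e^{-x} \geq 1-x$ globally (rather than restricting to a domain where $1-x \geq 0$) avoids any loss in the lower-bound calculation.
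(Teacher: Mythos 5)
The paper does not prove this lemma; it simply cites Savage (1962) for the bound. Your proposal therefore supplies a self-contained proof where the paper gives none, and after checking the algebra it is correct.

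The key steps all check out. The shift $y_i = x_i - t$ gives
\[
x_1^2 - 2\alpha x_1 x_2 + x_2^2 = Q(y_1,y_2) + 2t(1-\alpha)(y_1+y_2) + 2t^2(1-\alpha),
\]
and the $(1-\alpha)$ factor indeed cancels against the $(1-\alpha)$ in $2(1-\alpha^2)$ for the linear and constant pieces, isolating the tail rate $e^{-t^2/(1+\alpha)}$. For the upper bound, dropping the positive-definite quadratic (eigenvalues $1\pm\alpha>0$) gives $I(t) < (1+\alpha)^2/t^2$ exactly as claimed, with strictness since $Q>0$ a.e. For the lower bound, the moment computations are right: $\iint y_i^2 e^{-\beta(y_1+y_2)}\,dy = 2/\beta^4$, $\iint y_1 y_2 e^{-\beta(y_1+y_2)}\,dy = 1/\beta^4$, so $\iint Q\, e^{-\beta(y_1+y_2)} = (4-2\alpha)/\beta^4 = (4-2\alpha)(1+\alpha)^4/t^4$, and dividing by $2(1-\alpha)(1+\alpha)$ and factoring out $(1+\alpha)^2/t^2$ yields $1 - (2-\alpha)(1+\alpha)/((1-\alpha)t^2)$. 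Your observation that $e^{-x}\geq 1-x$ holds globally (so there is no need to split the domain where $1-x$ changes sign) is the one delicate point, and you handle it correctly. Savage's original derivation proceeds via a different route (iterated integrals and Mills'-ratio machinery for multivariate normals); your argument is shorter and more elementary, purchasing the same two-term asymptotic directly from the first-order Taylor bound on the exponential.
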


\begin{corollary}
	By symmetry of the normal distribution the Lemma \ref{lem:savage} bounds apply to $\Pr[X_1 \geq t \land X_2 \leq -t]$ when we replace all occurrences of $\alpha$ with $-\alpha$.
\end{corollary}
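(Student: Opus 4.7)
The plan is to reduce the two-sided event $\{X_1 \geq t \land X_2 \leq -t\}$ to a two-sided ``upper tail'' event of the form covered by Lemma~\ref{lem:savage} by exploiting the symmetry of the standard normal distribution around zero. The key observation is that $-Z_2$ is itself a standard normal random variable independent of $Z_1$, so the joint distribution of $(Z_1, -Z_2)$ is identical to that of $(Z_1, Z_2)$.

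Concretely, I would set $Z_2' = -Z_2$ and rewrite
\begin{equation*}
-X_2 = -\alpha Z_1 - \sqrt{1-\alpha^2}\, Z_2 = (-\alpha) Z_1 + \sqrt{1-\alpha^2}\, Z_2'.
\end{equation*}
Defining $Y_1 = Z_1$ and $Y_2 = (-\alpha) Z_1 + \sqrt{1-\alpha^2}\, Z_2'$, the pair $(Y_1, Y_2)$ fits exactly the setup of Lemma~\ref{lem:savage} with the correlation parameter $\alpha$ replaced by $-\alpha$ (which still lies in $(-1,1)$, so the lemma applies). Since $\{X_2 \leq -t\} = \{-X_2 \geq t\}$, we have
\begin{equation*}
\Pr[X_1 \geq t \land X_2 \leq -t] = \Pr[Y_1 \geq t \land Y_2 \geq t],
\end{equation*}
and substituting $-\alpha$ for $\alpha$ throughout the upper and lower bounds of Lemma~\ref{lem:savage} yields the claimed inequalities.

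There is no real obstacle here; the only thing to be careful about is that the substitution $\alpha \mapsto -\alpha$ must be carried out in every occurrence in the bound (including in the prefactors $(1+\alpha)^2$ and $\sqrt{1-\alpha^2}$ and in the correction term $(2-\alpha)(1+\alpha)/(1-\alpha)$), but this is mechanical. The argument does not require rederiving any tail estimate from scratch: it is purely a distributional-symmetry reduction.
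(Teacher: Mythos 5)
Your proof is correct and uses exactly the symmetry argument the paper relies on (see their Lemma~\ref{lem:symmetry}, which establishes the same identity $\Pr[Z_1 \geq t \land \alpha Z_1 + \sqrt{1-\alpha^2}Z_2 \geq t] = \Pr[Z_1 \geq t \land -\alpha Z_1 + \sqrt{1-\alpha^2}Z_2 \leq -t]$ via the same substitution $Z_2 \mapsto -Z_2$). Nothing to add.
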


We are now ready to bound the CPF for  $\DSH_{+}$.

\begin{lemma} \label{lem:cpfbounds}
	For every $t > 0$ and $\alpha \in (-1, 1)$ the family $\DSH_{+}$ satisfies 
	\begin{align*}
		f_{+}(\alpha) &< \bar{f}_{+}(\alpha) := \frac{1}{\sqrt{2\pi}}\frac{t + 1}{t^2}\frac{(1+\alpha)^2}{\sqrt{1-\alpha^2}}\exp\left(-\frac{1-\alpha}{1+\alpha}\frac{t^2}{2}\right), \\
		f_{+}(\alpha) &> \left(1-\frac{(2-\alpha)(1+\alpha)}{1-\alpha}\frac{1}{t^2}\right) \frac{t}{t + 1} \bar{f}_{+}(\alpha) - 2e^{-t^3}. 
	\end{align*}
	The complexity of sampling, storing, and evaluating a pair of functions $(h,g) \in \DSH_{+}$ is $O(d t^4 e^{t^2 / 2})$.
\end{lemma}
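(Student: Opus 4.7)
My plan is to derive a closed-form expression for $f_{+}(\alpha)$ and then sandwich each factor in it using the univariate and bivariate Gaussian tail estimates from Lemma~\ref{lem:univariatebounds} and Lemma~\ref{lem:savage}. Let $p := \Pr[\ip{\z}{\x}\geq t]$ and $q := \Pr[\ip{\z}{\x}\geq t \land \ip{\z}{\y}\geq t]$ for $\x,\y\in \sphere{d}$ with $\ip{\x}{\y}=\alpha$. A collision $h(\x)=g(\y)=i$ with $i\in\{1,\dots,m\}$ occurs iff for every $j<i$ neither $\ip{\z_j}{\x}\geq t$ nor $\ip{\z_j}{\y}\geq t$ (probability $1-(2p-q)$ per trial), followed by both events at index~$i$ (probability~$q$). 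Summing the resulting geometric series gives
\begin{equation*}
f_{+}(\alpha) \;=\; q\cdot \frac{1-(1-(2p-q))^{m}}{2p-q}.
\end{equation*}

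For the upper bound I would use the trivial $1-(1-(2p-q))^{m}\leq 1$ together with $q\leq p$ (which yields $2p-q\geq p$ and hence $q/(2p-q)\leq q/p$), and then substitute the Savage upper bound on $q$ and the Szarek--Werner lower bound $p\geq (\sqrt{2\pi}(t+1))^{-1}e^{-t^{2}/2}$. The combined exponent simplifies via $\tfrac{t^{2}}{2}-\tfrac{t^{2}}{1+\alpha}=-\tfrac{1-\alpha}{1+\alpha}\tfrac{t^{2}}{2}$, and matching the prefactors reproduces exactly $\bar f_{+}(\alpha)$.

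For the lower bound I would fix $m := \lceil t^{3}/p\rceil$, a value independent of $\alpha$; since $2p-q\geq p$ this guarantees $(1-(2p-q))^{m}\leq e^{-m(2p-q)}\leq e^{-t^{3}}$, so $f_{+}(\alpha)\geq (1-e^{-t^{3}})\cdot q/(2p-q)$. To lower bound $q/(2p-q)$ I would plug in the Szarek--Werner upper bound $p\leq (\sqrt{2\pi}t)^{-1}e^{-t^{2}/2}$ and the Savage lower bound on $q$, which carries the multiplicative correction $1-\tfrac{(2-\alpha)(1+\alpha)}{(1-\alpha)t^{2}}$. After the prefactors are matched, the dominant contribution is $\tfrac{t}{t+1}\bar f_{+}(\alpha)$, and the $(1-e^{-t^{3}})$ multiplier is absorbed into the additive $-2e^{-t^{3}}$ term using $(1-\varepsilon)X\geq X-\varepsilon$ whenever $X\leq 1$.

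Finally, the complexity claim follows by bookkeeping: sampling stores $m$ Gaussian vectors of dimension $d$, and evaluating either $h$ or $g$ on a point requires at most $m$ inner products of dimension~$d$; with $m=\lceil t^{3}/p\rceil = O(t^{3}(t+1)e^{t^{2}/2}) = O(t^{4}e^{t^{2}/2})$, the total cost is $O(dt^{4}e^{t^{2}/2})$, matching the claim. The main technical obstacle is converting the pointwise tail estimates into the tight prefactor $\tfrac{t}{t+1}$: the ratio $q/(2p-q)$ transitions from $\approx q/(2p)$ when $q\ll p$ (weakly correlated regime) to $\approx q/p$ when $q\approx p$ (strongly correlated regime), so obtaining a uniform-in-$\alpha$ bound demands tracking $2p-q$ through Savage's lower bound on $q$ rather than through the crude inequality $2p-q\leq 2p$.
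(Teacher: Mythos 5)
Your plan follows the paper's proof structure almost exactly: deriving the closed-form $f_{+}(\alpha)=q\cdot\frac{1-(1-(2p-q))^{m}}{2p-q}$, obtaining the upper bound via $1-(1-(2p-q))^{m}\le1$ and $q/(2p-q)\le q/p$ combined with the Savage upper bound on $q$ and the Szarek--Werner lower bound on $p$, and obtaining the lower bound by choosing $m$ so that the event $h(\x)>m$ contributes a negligible additive $e^{-\Omega(t^3)}$ term. The only place where you depart from the paper is the final paragraph, and there you are mistaken. The paper does \emph{not} track $2p-q$ through Savage's lower bound on $q$; it uses precisely the crude estimate you reject, passing from $(1-a)\tfrac{q}{2p-q}$ to $\tfrac{q}{2p}-a$, then applies the Savage lower bound on $q$ and Szarek--Werner upper bound $p\le(\sqrt{2\pi}t)^{-1}e^{-t^2/2}$. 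If you compute what that gives, the prefactor is $\tfrac{1}{2\sqrt{2\pi}}\tfrac{1}{t}$, which is $\tfrac{t}{2(t+1)}\bar f_{+}(\alpha)$, not $\tfrac{t}{t+1}\bar f_{+}(\alpha)$. Your proposed refinement does not recover the factor of two either: writing $\gamma=(1-\epsilon_\alpha)\tfrac{t}{t+1}\bar f_{+}(\alpha)$ and using the endpoint bounds $q\ge q_L$, $p\le p_U$ gives $\tfrac{q}{2p-q}\ge\tfrac{q_L}{2p_U-q_L}=\tfrac{\gamma}{2-\gamma}$, and $\tfrac{\gamma}{2-\gamma}\ge\gamma$ only when $\gamma\ge1$, which never holds for a collision probability bounded away from one. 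So the sharpened tracking of $2p-q$ you advertise as the ``main technical obstacle'' is neither what the paper does nor an argument you can make work uniformly in $\alpha$. (In fact the paper's own displayed chain appears to establish only the weaker prefactor $\tfrac{t}{2(t+1)}$, so the discrepancy you are trying to paper over is in the lemma itself, not in the choice of $2p$ versus $2p-q$.) Everything else — the choice $m=\Theta(t^3/p)=O(t^4 e^{t^2/2})$, the $(1-\varepsilon)X\ge X-\varepsilon$ absorption, the $O(dt^4e^{t^2/2})$ cost accounting — matches the paper's argument.
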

\begin{proof}
We proceed by deriving upper and lower bounds on the collision probability.
\begin{align*}
	f_{+}(\alpha) &\leq \frac{\Pr[\ip{\z}{\x} \geq t \land \ip{\z}{\y} \geq t]}{\Pr[\ip{\z}{\x} \geq t]} \\
					   &\leq \frac{1}{\sqrt{2\pi}}\frac{t + 1}{t^2}\frac{(1+\alpha)^2}{\sqrt{1-\alpha^2}}\exp\left(-\frac{1-\alpha}{1+\alpha}\frac{t^2}{2}\right). 
\end{align*}
We derive the lower bound in stages.
\begin{align*}
	&\Pr[h(\x) = g(\y)] \\ 
	&\geq (1 - \Pr[h(\x) > m])
					  \frac{\Pr[\ip{\z}{\x} \geq t \land \ip{\z}{\y} \geq t]}{\Pr[\ip{\z}{\x} \geq t \lor\ip{\z}{\y} \geq t]} \\
	&\geq \frac{\Pr[\ip{\z}{\x} \geq t \land \ip{\z}{\y} \geq t]}{2 \Pr[\ip{\z}{\x} \geq t]} - \Pr[h(\x) > m]. 
\end{align*}
The first part is lower bounded by
\begin{equation*}
\begin{split}
	&\frac{\Pr[\ip{\z}{\x} \geq t \land \ip{\z}{\y} \geq t]}{2 \Pr[\ip{\z}{\x} \geq t]} \geq 
	\left(1-\frac{(2-\alpha)(1+\alpha)}{1-\alpha}\frac{1}{t^2}\right) \cdot \\
	&\qquad \frac{1}{2 \sqrt{2 \pi}} \frac{1}{t} \frac{(1 + \alpha)^2} {\sqrt{1-\alpha^2}} \exp\left(-\frac{1-\alpha}{1+\alpha}\frac{t^2}{2}\right). 
\end{split}
\end{equation*}
The probability of not being captured by a projection depends on the number of projections $m$. 
In order to make this probability negligible we can set $m = \lceil 2t^3/p' \rceil$ where $p'$ denotes the lower bound from Lemma \ref{lem:univariatebounds}.
\begin{align*}
	\Pr[h(\x) > m] 				  &\leq (1 - \Pr[\ip{\z}{\x} \geq t])^m 
							      \leq (1 - p')^{2t^3/p'} \leq e^{-2t^3}. 
\end{align*}
The bound on the complexity of sampling, storing, and evaluating a pair of functions $(h,g) \in \DSH_{+}$ follows from having $m = \lceil 2t^3/p' \rceil = O(t^4 e^{t^{2}/2})$ standard normal projections of length $d$ to be sampled, stored, and evaluated.
\end{proof}

Combining the above ingredients we get the following results, which implies  Theorem~\ref{thm:anti} by Lemma \ref{lem:symmetry}.

\begin{theorem} 
	For every $t > 1$ there exists a distance-sensitive family $\DSH_{+}$ for $(\sphere{d}, \ip{\cdot}{\cdot})$ 
	with a CPF $f$ such that for every $\alpha \in (-1,1)$ satisfying $|\alpha| < 1 - 1/t$ we have that
	\begin{equation}
		\ln(1/f(\alpha)) = \tfrac{1 - \alpha}{1 + \alpha}\tfrac{t^2}{2} + \Theta(\log t) .
	\end{equation}
Furthermore, the CPF of $\DSH_{+}$ is monotonically increasing, and the complexity of sampling, storing, and evaluating $(h,g) \in \DSH_{-}$ is $O(d t^4 e^{t^2 / 2})$.
\end{theorem}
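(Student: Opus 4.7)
The plan is to derive the theorem directly from the two-sided CPF estimate of Lemma~\ref{lem:cpfbounds}, turning those inequalities into the asymptotic identity $\ln(1/f_+(\alpha)) = \frac{1-\alpha}{1+\alpha}\frac{t^2}{2} + \Theta(\log t)$ by taking logarithms and controlling the residual prefactors on the range $|\alpha| < 1 - 1/t$; monotonicity and complexity then follow directly from the construction of $\DSH_+$.

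For the lower bound on $\ln(1/f_+)$, apply $\ln(\cdot)$ to $f_+(\alpha) < \bar{f}_+(\alpha)$ to get
\begin{equation*}
\ln\!\left(1/f_+(\alpha)\right) \geq \frac{1-\alpha}{1+\alpha}\frac{t^2}{2} + \ln\!\left(\frac{\sqrt{2\pi}\, t^2 \sqrt{1-\alpha^2}}{(t+1)(1+\alpha)^2}\right).
\end{equation*}
Under $|\alpha| < 1 - 1/t$ both $1-\alpha$ and $1+\alpha$ lie in $(1/t, 2)$, so every factor inside the logarithm is bounded above and below by a polynomial in $t$, giving a correction of absolute value $O(\log t)$.

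For the matching upper bound on $\ln(1/f_+)$, plug in the lower estimate from Lemma~\ref{lem:cpfbounds}: the multiplicative factor $\left(1 - \frac{(2-\alpha)(1+\alpha)}{(1-\alpha) t^2}\right)\frac{t}{t+1}$ is $(1 - O(1/t))$ in our range, and $|\alpha| < 1 - 1/t$ forces $\frac{1-\alpha}{1+\alpha} < 2t - 1$, so $\bar{f}_+(\alpha) \geq e^{-(t^3 - t^2/2)}/\mathrm{poly}(t)$, which exceeds the additive error $2 e^{-t^3}$ by a factor $e^{t^2/2}/\mathrm{poly}(t)$. Taking logs then yields $\ln(1/f_+(\alpha)) \leq \frac{1-\alpha}{1+\alpha}\frac{t^2}{2} + O(\log t)$, matching the lower bound. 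Monotonicity of $f_+$ in $\alpha$ follows from Slepian's inequality: $f_+$ depends on $\alpha$ only through the orthant probability $\Pr[\ip{\z}{\x} \geq t \wedge \ip{\z}{\y} \geq t]$ of a bivariate standard normal with correlation $\alpha$, which is nondecreasing in $\alpha$. The complexity claim is inherited from Lemma~\ref{lem:cpfbounds}, since $\DSH_+$ stores $m = O(t^4 e^{t^2/2})$ Gaussian projections in $\real^d$, and sampling, storing, or evaluating $(h,g)$ costs $O(dm)$.

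The delicate step is the second one: ensuring that the additive tail $2 e^{-t^3}$ in Lemma~\ref{lem:cpfbounds} remains truly negligible compared to $\bar{f}_+(\alpha)$ near the left boundary $\alpha \to -1 + 1/t$, where $\bar{f}_+(\alpha)$ itself decays like $e^{-\Theta(t^3)}$. The precise hypothesis $|\alpha| < 1 - 1/t$ leaves exactly an $e^{t^2/2}/\mathrm{poly}(t)$ safety factor between the two exponentials, which is what makes the $\Theta(\log t)$ correction meaningful; a looser restriction on $\alpha$ would let the additive tail swamp the main term and collapse the estimate.
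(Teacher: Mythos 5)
Your argument is correct and follows the same route as the paper: both derive the theorem by taking logarithms of the two-sided bound in Lemma~\ref{lem:cpfbounds} and checking that the prefactors and the additive tail $2e^{-t^3}$ contribute only $\Theta(\log t)$ on the range $|\alpha| < 1 - 1/t$. Your explicit verification that $\bar{f}_+(\alpha)$ stays above $e^{-t^3 + t^2/2}/\mathrm{poly}(t)$ so that the additive error is dwarfed by a factor $e^{t^2/2}/\mathrm{poly}(t)$, and that the residual prefactor term lies between $\Omega(\log t)$ (achieved near $\alpha \to 1-1/t$, where it is about $\tfrac12\log t$) and $O(\log t)$ (about $\tfrac52\log t$ near $\alpha \to -(1-1/t)$), is exactly the bookkeeping the paper leaves implicit.

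One gap: the monotonicity argument as written is incomplete. You correctly observe that $f_+$ depends on $\alpha$ only through $p := \Pr[\ip{\z}{\x}\geq t \wedge \ip{\z}{\y}\geq t]$ (since the union probability is $q = 2r - p$ with $r := \Pr[\ip{\z}{\x}\geq t]$ independent of $\alpha$), and Slepian's inequality gives that $p$ is nondecreasing in $\alpha$. But $f_+ = \tfrac{p}{2r-p}\bigl(1 - (1-2r+p)^m\bigr)$ is not \emph{obviously} monotone in $p$: the first factor increases with $p$ while the second decreases. It is in fact increasing, which can be seen by writing $f_+ = (2r - q)\cdot \tfrac{1-(1-q)^m}{q} = (2r - q)\sum_{j=0}^{m-1}(1-q)^j$ and checking that this is decreasing in $q$ (both factors are), hence increasing in $p = 2r - q$, hence increasing in $\alpha$. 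Your proposal asserts the conclusion but skips this step; Slepian alone does not finish the job.
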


    A more careful analysis of the collision probabilities is required in order to combine the families $\DSH_{-}$ and $\DSH_{+}$ to form a unimodal family that can be used to solve the annulus search problem, see Theorem~\ref{thm:annulus-intro}.
These results are stated in Appendix \ref{app:annulus}.



\section{DSH for Euclidean space}\label{app:proof:e2antilsh}

\begin{proof}[Proof of Theorem~\ref{thm:euclidean:anti:lsh}]
For the sake of simplicity we assume $r=1$ in the analysis (otherwise it is enough to scale down vectors accordingly).
Let $\x$ and $\y$ be two points in $\mathbb{R}^d$ with distance $\Delta$. We know that for $\a \sim \mathcal{N}^d(0,1)$ the inner product 
$\langle \a, (\x - \y)\rangle$ is distributed as $\mathcal{N}(0, \Delta)$. A necessary but not sufficient condition to have a collision between~$\x$ and~$\y$ is that  $\langle \a, (\x - \y)\rangle$ lies in the interval 
$[(k - 1) w, (k + 1)w]$. Now, if $t := \langle \a, (\x - \y)\rangle \in [(k -1)w, kw]$, then the random offset $b$ must lie in an interval of length $t - (k - 1) w$, 
putting $\langle \a, \x \rangle$ and $\langle \a, \y \rangle - (k - 1)w$ into different buckets. For the interval $[kw, (k + 1)w]$ similar observations show 
that $b$ has to be chosen in an interval of length $(k + 1)w - t$.
Let $\phi(t) = 1/\sqrt{2 \pi} e^{-t^2/2}$ be the density function of a standard normal random variable. Similarly to the calculations in \cite{Datar04}, the collision probability at distance $\Delta$ can be calculated as follows:
\begin{align*}
  \hspace{-1cm}  f&(\Delta) = \Pr\left(\left\lfloor\frac{\langle \a, \x \rangle + b}{w}\right\rfloor - \left\lfloor\frac{\langle \a, \y \rangle + b}{w}\right\rfloor = k\right)\\
              =& \int_{(k - 1) w}^{k w} \frac{\phi(t/\Delta)}{\Delta} \left( \frac{t}{w} - (k - 1)\right) dt  \\ 
               &\quad\quad+ \int_{k w}^{(k + 1) w} \frac{\phi(t/\Delta)}{\Delta} \left( k + 1 - \frac{t}{w}\right)  dt - \frac{\phi(kw/\Delta)}{\Delta} \\
    =& \frac{1}{\sqrt{2\pi}\Delta}\Biggl(\int_{(k - 1) w}^{k w} e^{-\frac{t^2}{2\Delta^2}} \left(\frac{t}{w} - (k - 1) \right) dt \\  &\quad\quad + \int_{k w}^{(k + 1) w} e^{-\frac{t^2}{2\Delta^2}} \left(k + 1 - \frac{t}{w}\right) dt - e^{-\frac{(kw)^2}{2\Delta^2}}\Biggr).
\end{align*}
We now proceed to upper bound $\rho^-$
by finding an upper bound on $f(1/c)$ and a lower bound on $f(1)$. 
Simple calculations give an upper bound of 
\begin{align}
    f(1/c) \leq \frac{2wc}{\sqrt{2 \pi}}e^{-(c(k - 1)w)^2/2}.
    \label{eq:upper}
\end{align}
For the lower bound, we only look at the interval $t \in [kw, (k + 1/2)w]$ and obtain the bound: 
\begin{align}
    f(1) &\geq \frac{1}{\sqrt{2\pi}} \int_{k w}^{(k + 1/2) w} e^{-\frac{t^2}{2}} \left(k + 1 - \frac{t}{w}\right) dt \notag\\
         &\geq \frac{w}{4\sqrt{2 \pi}} e^{-((k + 1/2)w)^2/2}.
    \label{eq:lower}
\end{align}
Now we multiply the ratio of the logarithms of the right-hand sides of \eqref{eq:upper} and \eqref{eq:lower} with $c^2$ and proceed to show that this term is bounded by $1 + O(1/k)$, which shows the result. In the following, we set 
$w$ such that $w \leq \sqrt{2 \pi}/(2c)$. We compute:
\begin{align*}
    &\frac{\ln\left(\frac{w}{4\sqrt{2 \pi}} e^{-((k + 1/2)w)^2/2}\right)}{\ln\left(\frac{2wc}{\sqrt{2 \pi}}e^{-(c(k - 1)w)^2/2}\right)} c^2 \\
&= \frac{-2\ln(\frac{w}{4\sqrt{2 \pi}}) + ((k + 1/2)w)^2}{-2 \ln(\frac{2wc}{\sqrt{2 \pi}})c^2 + (( k - 1)w)^2} \leq  \frac{-2\ln(\frac{w}{4\sqrt{2 \pi}}) + ((k + 1/2)w)^2}{(( k - 1)w)^2} \\
&=  \frac{(k + 1/2)^2}{( k - 1)^2} + O(1/k^2) = 1 + O(1/k). 
\end{align*}
\end{proof}


\section{General constructions}

\subsection{Proof of Lemma~\ref{lemma:transform}}\label{app:transform}

\begin{proof}
We consider the transformation in~\cite{Chierichetti15} in the asymmetric setting.
Let $\x, \y$ be two arbitrary points from $X$. 
Part (a): Sample a pair $(h_i, g_i)$ from $\mathcal{D}_i$ for each $i \in \{1,\ldots,n\}$ and set $h(\x) = (h_1(\x), \ldots, h_n(\x))$ and $g(\y) = (g_1(\y), \ldots, g_n(\y))$. We observe that 
\begin{align*}
\Pr(h(\x) = g(\y)) &= \prod_{i = 1}^n \Pr(h_i(\x) = g_i(\y)) = \prod_{i = 1}^n f_i(\text{dist}(\x, \y)).
\end{align*}
Part (b): Pick an integer $i \in \{1, \ldots, n\}$ according to $\{p_i\}$ at random. Then sample a pair $(h_i, g_i)$ from $\mathcal{D}_i$. The hash function pair $(h, g)$ is given by $(i, h_i(\x))$ and $(i, g_i(\y))$.  We observe that 
\begin{align*}
    \Pr(h(\x) = g(\y)) &= \sum_{i = 1}^{n} p_i \Pr_{(h,g) \sim \mathcal{D}_i}(h(\x) = g(\y)) = 
\sum_{i = 1}^n p_i f_i(\text{dist}(\x, \y)). 
\end{align*}
\end{proof}

\subsection{Angular similarity function}\label{app:angfunc}
This section shows how to derive a distance sensitive scheme with collision probability $\text{sim}(\pp{\ip{\x}{\y}})$, when $\sum_{i=0}^k |a_i| = 1$.
Figure~\ref{fig:CPF} gives some examples of functions that can be obtained from Theorem~\ref{thm:distribution-sensitive} using SimHash~\cite{Charikar02}.

\begin{figure*}
\includegraphics[width=0.45\linewidth]{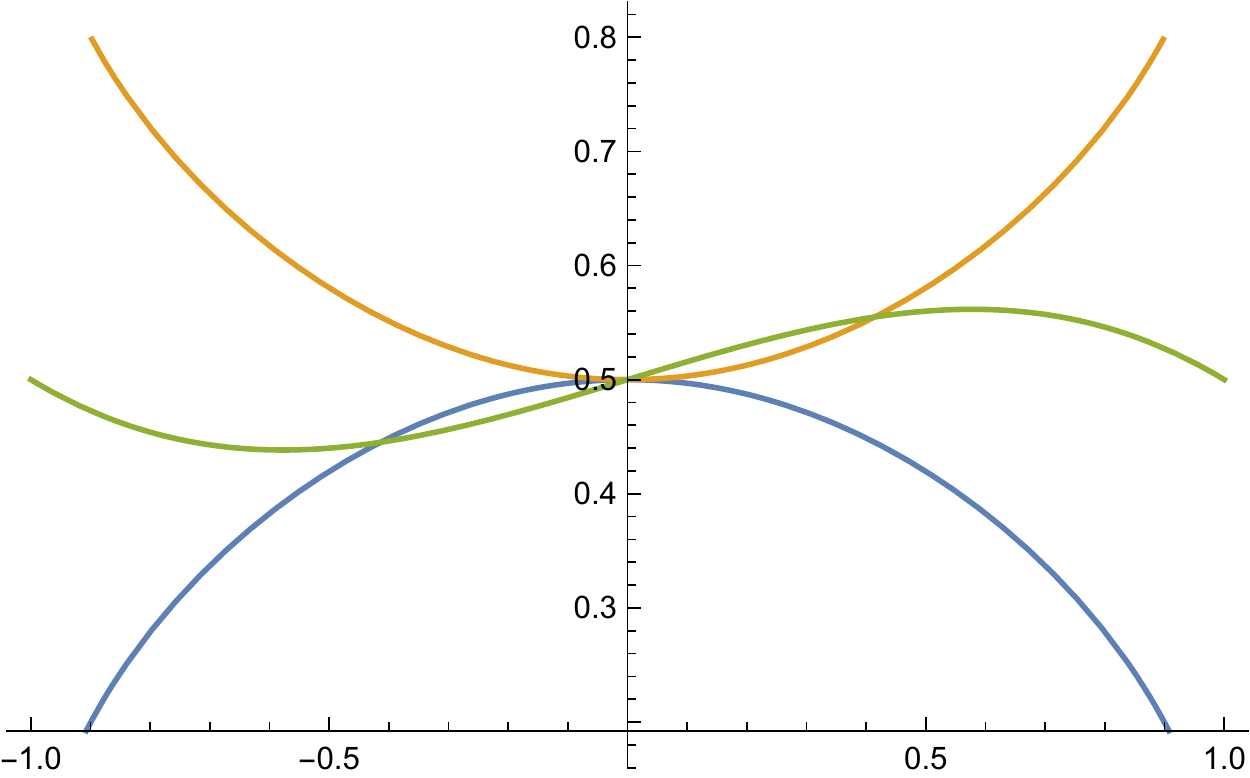}
\hfill
\includegraphics[width=0.45\linewidth]{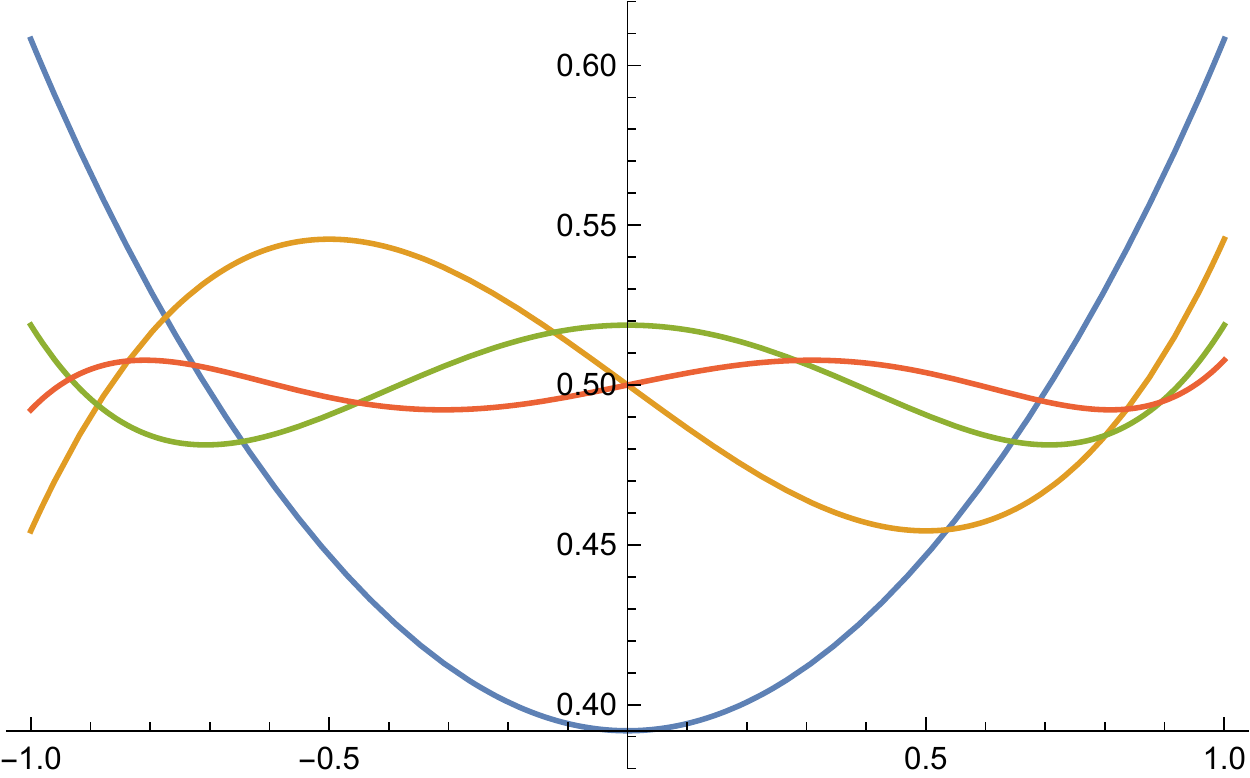}
\caption{Examples of collision probability functions obtained using Theorem~\ref{thm:distribution-sensitive}. The polynomials used are $t^2$, $-t^2$, $(-t^3 + t^2 - t)/3$ (left), and $(2 t^2 - 1)/3$, $(4 t^3 - 3 t)/7$, $(8 t^4 - 8 t^2 + 1)/17$, $(16 t^5 - 20 t^3 + 5 t)/41$ (right).}\label{fig:CPF}
\end{figure*}

\begin{proof}[Proof of Theorem~\ref{thm:distribution-sensitive}]
    Valiant~\cite{valiant2015finding} has shown how, for any real degree-$k$ polynomial $p$, to construct a pair of mappings $\varphi^p_1,\varphi^p_2: {\mathbb{R}}^d \rightarrow {\mathbb{R}}^D$, where $D=O(d^k)$, such that $\varphi^p_1(\x)\cdot \varphi^p_2(\y) = \pp{\ip{\x}{\y}}$.
	For completeness we outline the argument here:
	First consider the monomial $\pp{t} = a_k t^k$.
    For $\x\in {\mathbb{R}}^d$, let $\x^{(k)}$ denote the vector of dimension $d^k$ indexed by vectors ${\bf i} = (i_1,\dots,i_k) \in [d]^k$, where $\x^{(k)}_{\bf i} = \prod_{j=1}^k x_{i_j}$.
    It is easy to verify that $\ip{\x^{(k)}}{\y^{(k)}} = (\ip{\x}{\y})^k$ for all $\x,\y\in {\mathbb{R}}^d$.
    With this notation in place we can define $\varphi^p_1(\x) = \sqrt{|a_i|}\, \x^{(k)} $ and $\varphi^p_2(\y) = (a_i / \sqrt{|a_i|})\, \y^{(k)}$ which satisfy $\varphi^p_1(\x) \cdot \varphi^p_2(\x) = a_i (\ip{\x,\y})^k$.
	The asymmetry of the mapping is essential to allow a negative coefficient $a_k$.
	To handle an arbitrary real polynomial $\pp{t}=\sum_{i=0}^k a_i t^i$ we simply concatenate vectors corresponding to each monomial, obtaining a vector of dimension $\sum_{i=0}^k d^i = O(d^k)$.
	
    Observe that $||\x^{(k)}||_2^2 = \ip{\x^{(k)}}{ \x^{(k)}} = (\ip{\x,\x})^k = ||\x||_2^{2k}$.
	This means that for $||\x||_2^2 = 1$ we have $||\varphi^p_1(\x)||^2 = \sum_{i=0}^k \sqrt{|a_i|}^2 = 1$, using the assumption $\sum_{i=0}^k |a_i| = 1$.
	Similarly, we  have for  $||\y||_2^2 = 1$ that $||\varphi^p_2(\y)||^2 = \sum_{i=0}^k (a_i/\sqrt{|a_i|})^2 = \sum_{i=0}^k |a_i| = 1$.	
	Thus, $\varphi^p_1$ and $\varphi^p_2$ map $S^{d-1}$ to $S^{D-1}$.

	Our family $\mathcal{F}$ samples a function $s$ from the distribution $\mathcal{S}$ corresponding to $\text{sim}$ and constructs the function pair $(h,g)$ with $h(\x) = s(\varphi^p_1(\x))$, $g(\y) = s(\varphi^p_2(\y))$.
    Using the properties of the functions involved we have $$\Pr[h(\x)=g(\y)] = \text{sim}(\ip{\varphi^p_1(\x)}{\varphi^p_2(\y)}) = \text{sim}(\pp{\ip{\x}{\y}}).$$
\end{proof}

\subsection{Hamming distance functions}\label{sec:hamfunc}

\begin{proof}[Proof of Theorem~\ref{thm:hamcontstr}]
We initially assume that $a_0\neq 0$ (i.e., 0 is not a root of $\pp{t}$), and then remove this assumption at the end of the proof.
We recall that a root of $\pp{t}$ can appear with multiplicity larger than 1  and that, by the  complex conjugate root theorem, if $z=a + b i$ is a complex root then so is its conjugate $z'=a-b i$. 
We let $Z$ be the multiset containing the $k$ roots of $\pp{t}$, with $Z_{r+}$ and $Z_{r-}$ being the multiset of positive and negative real roots, respectively, and with $Z_c$ being the multiset consisting of pairs of conjugate complex roots. 
By factoring $\pp{t}$, we get:
\begin{align}\label{eq:pol_fact}
    \pp{t} &= a_k  \prod_{z\in Z} (t-z) = |a_k|  \prod_{z\in Z_{r+}} (z-t) \prod_{z\in Z_{r-}} (t+|z|) \cdot \notag\\  &\quad\quad\prod_{z=a+bi\in Z_c} (t^2-2at +a^2 + b^2),
\end{align}
where  the last step follows since $a_k  \prod_{z\in Z_{r+}} (z-t)= |a_k|  \prod_{z\in Z_{r+}} (t-z)>0$. Indeed, $\pp{t}$ is positive in $(0,1)$ and the multiplicative terms associated with complex and negative real roots are positive in this range; this implies that the remaining terms are positive as well.

We need to introduce scaled and biased  variations of bit-sampling or anti bit-sampling.
Anti-bit sampling with scaling factor $\alpha\in [0,1]$ and bias $\beta \in [0,1]$ has the CPF  $f(t)=\beta/2+\alpha t/2$ and is given by  randomly selecting one of following two schemes: 
(1) with probability $1/2$, the scheme is a standard hashing that maps data and query points to 0 with probability $\beta$, and otherwise to 0 and 1 respectively; 
(2) with probability $1/2$, the scheme is anti bit-sampling where the sampled bit is set to 0 with probability $1-\alpha$ on both data and query points, or kept unchanged otherwise.
Similarly, bit-sampling with scaling factor $\alpha\in [0,1]$ has the CPF $f(t)=(1-\alpha t)$ and is given by  using  bit-sampling, where the sampled bit is set to 0 with probability $1-\alpha$ on both data and query points. (We do not need a biased version of bit-sampling.)

We now assign  to each multiplicative term of~\eqref{eq:pol_fact} a scaled and biased  version of bit-sampling or anti bit-sampling as follows:
\begin{itemize}
\item \textbf{$z$ is real and $z<-1$.}
We assign to $z$  an anti bit-sampling with bias $1$ and scaling factor $1/|z|\leq 1$: the CPF is $S_1(t,z) = (1/2+t/(2|z|))$, and we have $(t+|z|) =  2|z| S_1(t,z)$.

\item \textbf{$z$ is real and $-1\leq z< 0$.} 
We assign to $z$  an anti bit-sampling with bias $|z|\leq 1$ and scaling factor $1$: the CPF is $S_2(t,z) = |z|/2+t/2$, and we have $(t+|z|) = 2 S_2(t,z)$.

\item \textbf{$z$ is real and $ z\geq 1$.} 
We assign to $z$  a bit-sampling with scaling factor $1/z\leq 1$: the CPF is $S_3(t,z) = (1-t/z)$, and we have $(t-z)=z S_3(t,z)$.

\item \textbf{$(z,z')$ are conjugate complex roots and $\textnormal{Real}(z)< -1$.}
Let $z=a+bi$ and $z'=a-bi$.
The assigned scheme has CPF $$S_4(t,z)=\left(\frac{b^2}{4(a^2+b^2)}+\frac{a^2}{a^2+b^2}\left(\frac{x}{2|a|}+\frac{1}{2}\right)^2\right)$$
and is obtained as follows:
with probability $b^2/(a^2+b^2)$, the scheme maps data and query points to 0 and 0 with probability 1/4, or to 0 and 1 with probability 3/4;
with probability $a^2/(a^2+b^2)$, the schemes consists of the concatenation of two anti bit-sampling with bias $1$ and scaling factor $1/|a|$.
Note that 
$t^2 - 2a t + a^2 + b^2=4(a^2+b^2) S_4(t,z)$.

\item  \textbf{$(z,z')$ are conjugate complex roots and $\textnormal{Real}(z)\geq 1$.}
The scheme is similar to the previous one where we use two bit-sampling with scaling factor $1/a$ instead of the anti bit-sampling. 
The CPF is $$S_5(t,z)=\left(\frac{b^2}{a^2+b^2}+\frac{a^2}{a^2+b^2}\left(1-\frac{x}{a}\right)^2\right),$$ and we get $t^2 - 2a t + a^2 + b^2=(a^2+b^2) S_5(t,z)$.

\item \textbf{$(z,z')$ are conjugate complex roots, $-1\leq \textnormal{Real}(z)\leq  0$, and $|z|=a^2+b^2 \geq 1$.}
We assign the following scheme with CPF $$S_6(t,z)=\left(\frac{x^2}{4(a^2+b^2)}+\frac{|a| x}{2(a^2+b^2)}+\frac{1}{4}\right).$$
With probability $1/4$ the scheme maps data and query points to 0; with probability  $1/2$, the scheme consists of anti bit-sampling with bias 0 and scaling factor $|a|/(a^2+b^2)\leq 1$; with probability $1/4$ the scheme consists of two anti bit-sampling with bias 0 and scaling factor $\sqrt{a^2+b^2}$ each.
We have $t^2 - 2a t + a^2 + b^2=4 (a^2+b^2) S_6(t,z)$.

\item \textbf{$(z,z')$ are conjugate complex roots, $-1\leq \textnormal{Real}(z)\leq  0$, and $|z|=a^2+b^2 < 1$.}
We use the  scheme of the previous point  with different parameters, giving CPF $$S_7(t,z)=\left(\frac{x^2}{4}+\frac{|a| x}{2}+\frac{a^2+b^2}{4}\right).$$
The scheme is the following:
with probability $1/4$, the scheme is a standard hashing scheme where data points are always mapped to $0$ and where a query point is mapped to $0$ with probability $a^2+b^2$ and to $1$ with probability $1-a^2+b^2$;
with probability  $1/2$, the scheme consists of anti bit-sampling with bias 0 and scaling factor $|a|\leq 1$; with probability $1/4$, the scheme consists of two anti bit-sampling with bias 0 and scaling factor $1$ each.
We have $t^2 - 2a t + a^2 + b^2=4  S_7(t,z)$.
\end{itemize}

Consider the scheme obtained by concatenating the above ones for each real root and each pair of conjugate roots. 
Its CPF is $S(t)=\prod_{i=1}^{6}\prod_{z\in Z_i} S_i(t,z)$, where $Z_i$ contains root with CPF $S_i$.
Then,  by letting $\psi$ denote the number of roots with negative real part, we get from Equation~\ref{eq:pol_fact}:
\begin{equation*}
\pp{t}= \left(2^{\psi} |a_k| \prod_{z\in Z, |\textnormal{Real}(z)|>1} |z|\right)  S(t) = \Delta S(t).
\end{equation*}

Consider now $a_k=0$ and let $\ell$ be the largest value such that $\pp{t}=t^\ell \ppp{x}$ with $\ppp{0}\neq 0$.
We get the claimed result by concatenating  $\ell$ anti bit-sampling, which gives a CPF of $x^\ell$, and the scheme for $\ppp{t}$ obtained by the procedure described above. 
\end{proof}

\bibliographystyle{ACM-Reference-Format}

\begin{thebibliography}{54}


\ifx \showCODEN    \undefined \def \showCODEN     #1{\unskip}     \fi
\ifx \showDOI      \undefined \def \showDOI       #1{#1}\fi
\ifx \showISBNx    \undefined \def \showISBNx     #1{\unskip}     \fi
\ifx \showISBNxiii \undefined \def \showISBNxiii  #1{\unskip}     \fi
\ifx \showISSN     \undefined \def \showISSN      #1{\unskip}     \fi
\ifx \showLCCN     \undefined \def \showLCCN      #1{\unskip}     \fi
\ifx \shownote     \undefined \def \shownote      #1{#1}          \fi
\ifx \showarticletitle \undefined \def \showarticletitle #1{#1}   \fi
\ifx \showURL      \undefined \def \showURL       {\relax}        \fi
\providecommand\bibfield[2]{#2}
\providecommand\bibinfo[2]{#2}
\providecommand\natexlab[1]{#1}
\providecommand\showeprint[2][]{arXiv:#2}

\bibitem[\protect\citeauthoryear{Abbar, Amer{-}Yahia, Indyk, Mahabadi, and
  Varadarajan}{Abbar et~al\mbox{.}}{2013}]%
        {AbbarAIMV13}
\bibfield{author}{\bibinfo{person}{Sofiane Abbar}, \bibinfo{person}{Sihem
  Amer{-}Yahia}, \bibinfo{person}{Piotr Indyk}, \bibinfo{person}{Sepideh
  Mahabadi}, {and} \bibinfo{person}{Kasturi~R. Varadarajan}.}
  \bibinfo{year}{2013}\natexlab{}.
\newblock \showarticletitle{Diverse near neighbor problem}. In
  \bibinfo{booktitle}{\emph{Symposium on Computational Geometry (SoCG)}}.
  \bibinfo{pages}{207--214}.
\newblock


\bibitem[\protect\citeauthoryear{Abiteboul, Arenas, Barcel\'{o}, Bienvenu,
  Calvanese, David, Hull, H\"{u}llermeier, Kimelfeld, Libkin, Martens, Milo,
  Murlak, Neven, Ortiz, Schwentick, Stoyanovich, Su, Suciu, Vianu, and
  Yi}{Abiteboul et~al\mbox{.}}{2017}]%
        {Abiteboul17}
\bibfield{author}{\bibinfo{person}{Serge Abiteboul}, \bibinfo{person}{Marcelo
  Arenas}, \bibinfo{person}{Pablo Barcel\'{o}}, \bibinfo{person}{Meghyn
  Bienvenu}, \bibinfo{person}{Diego Calvanese}, \bibinfo{person}{Claire David},
  \bibinfo{person}{Richard Hull}, \bibinfo{person}{Eyke H\"{u}llermeier},
  \bibinfo{person}{Benny Kimelfeld}, \bibinfo{person}{Leonid Libkin},
  \bibinfo{person}{Wim Martens}, \bibinfo{person}{Tova Milo},
  \bibinfo{person}{Filip Murlak}, \bibinfo{person}{Frank Neven},
  \bibinfo{person}{Magdalena Ortiz}, \bibinfo{person}{Thomas Schwentick},
  \bibinfo{person}{Julia Stoyanovich}, \bibinfo{person}{Jianwen Su},
  \bibinfo{person}{Dan Suciu}, \bibinfo{person}{Victor Vianu}, {and}
  \bibinfo{person}{Ke Yi}.} \bibinfo{year}{2017}\natexlab{}.
\newblock \showarticletitle{Research Directions for Principles of Data
  Management (Abridged)}.
\newblock \bibinfo{journal}{\emph{SIGMOD Rec.}} \bibinfo{volume}{45},
  \bibinfo{number}{4} (\bibinfo{year}{2017}), \bibinfo{pages}{5--17}.
\newblock


\bibitem[\protect\citeauthoryear{Agrawal and Aggarwal}{Agrawal and
  Aggarwal}{2001}]%
        {Agrawal:2001:DQP:375551.375602}
\bibfield{author}{\bibinfo{person}{Dakshi Agrawal} {and}
  \bibinfo{person}{Charu~C. Aggarwal}.} \bibinfo{year}{2001}\natexlab{}.
\newblock \showarticletitle{On the Design and Quantification of Privacy
  Preserving Data Mining Algorithms}. In \bibinfo{booktitle}{\emph{Proc. 20th
  Symposium on Principles of Database Systems (PODS)}}.
  \bibinfo{pages}{247--255}.
\newblock
\showISBNx{1-58113-361-8}


\bibitem[\protect\citeauthoryear{Ahle, Aum{\"{u}}ller, and Pagh}{Ahle
  et~al\mbox{.}}{2017}]%
        {AhleAP17}
\bibfield{author}{\bibinfo{person}{Thomas~D. Ahle}, \bibinfo{person}{Martin
  Aum{\"{u}}ller}, {and} \bibinfo{person}{Rasmus Pagh}.}
  \bibinfo{year}{2017}\natexlab{}.
\newblock \showarticletitle{Parameter-free Locality Sensitive Hashing for
  Spherical Range Reporting}. In \bibinfo{booktitle}{\emph{Proc. 28th Symposium
  on Discrete Algorithms (SODA)}}. \bibinfo{pages}{239--256}.
\newblock


\bibitem[\protect\citeauthoryear{Ahle, Pagh, Razenshteyn, and Silvestri}{Ahle
  et~al\mbox{.}}{2016}]%
        {ahle2016inner}
\bibfield{author}{\bibinfo{person}{Thomas~D. Ahle}, \bibinfo{person}{Rasmus
  Pagh}, \bibinfo{person}{Ilya~P. Razenshteyn}, {and}
  \bibinfo{person}{Francesco Silvestri}.} \bibinfo{year}{2016}\natexlab{}.
\newblock \showarticletitle{On the Complexity of Inner Product Similarity
  Join}. In \bibinfo{booktitle}{\emph{Proc. 35th {ACM} Symposium on Principles
  of Database Systems ({PODS})}}. \bibinfo{pages}{151--164}.
\newblock


\bibitem[\protect\citeauthoryear{Alexandr~Andoni}{Alexandr~Andoni}{2017}]%
        {AndoniI16}
\bibfield{author}{\bibinfo{person}{Piotr~Indyk Alexandr~Andoni}.}
  \bibinfo{year}{2017}\natexlab{}.
\newblock \showarticletitle{Nearest Neighbors in High-Dimensional Spaces}.
\newblock In \bibinfo{booktitle}{\emph{Handbook of Discrete and Computational
  Geometry, Third Edition.}} \bibinfo{publisher}{Chapman and Hall/CRC},
  \bibinfo{pages}{1133--1153}.
\newblock


\bibitem[\protect\citeauthoryear{Andoni and Indyk}{Andoni and Indyk}{2006}]%
        {andoni2006}
\bibfield{author}{\bibinfo{person}{Alexandr Andoni} {and}
  \bibinfo{person}{Piotr Indyk}.} \bibinfo{year}{2006}\natexlab{}.
\newblock \showarticletitle{Near-Optimal Hashing Algorithms for Approximate
  Nearest Neighbor in High Dimensions}. In \bibinfo{booktitle}{\emph{Proc. 47th
  Symposium on Foundations of Computer Science ({FOCS})}}.
  \bibinfo{pages}{459--468}.
\newblock


\bibitem[\protect\citeauthoryear{Andoni, Indyk, Laarhoven, Razenshteyn, and
  Schmidt}{Andoni et~al\mbox{.}}{2015}]%
        {andoni2015practical}
\bibfield{author}{\bibinfo{person}{Alexandr Andoni}, \bibinfo{person}{Piotr
  Indyk}, \bibinfo{person}{Thijs Laarhoven}, \bibinfo{person}{Ilya
  Razenshteyn}, {and} \bibinfo{person}{Ludwig Schmidt}.}
  \bibinfo{year}{2015}\natexlab{}.
\newblock \showarticletitle{Practical and Optimal LSH for Angular Distance}. In
  \bibinfo{booktitle}{\emph{Proc. 28th Int. Conference on Neural Information
  Processing Systems (NIPS)}}. \bibinfo{pages}{1225--1233}.
\newblock


\bibitem[\protect\citeauthoryear{Andoni, Indyk, Nguyen, and Razenshteyn}{Andoni
  et~al\mbox{.}}{2014}]%
        {andoni2014beyond}
\bibfield{author}{\bibinfo{person}{Alexandr Andoni}, \bibinfo{person}{Piotr
  Indyk}, \bibinfo{person}{Huy~L. Nguyen}, {and} \bibinfo{person}{Ilya
  Razenshteyn}.} \bibinfo{year}{2014}\natexlab{}.
\newblock \showarticletitle{Beyond Locality-Sensitive Hashing}. In
  \bibinfo{booktitle}{\emph{Proc. 25th Symposium on Discrete Algorithms
  (SODA)}}. \bibinfo{pages}{1018--1028}.
\newblock


\bibitem[\protect\citeauthoryear{Andoni, Laarhoven, Razenshteyn, and
  Waingarten}{Andoni et~al\mbox{.}}{2017}]%
        {andoni2017optimal}
\bibfield{author}{\bibinfo{person}{Alexandr Andoni}, \bibinfo{person}{Thijs
  Laarhoven}, \bibinfo{person}{Ilya~P. Razenshteyn}, {and}
  \bibinfo{person}{Erik Waingarten}.} \bibinfo{year}{2017}\natexlab{}.
\newblock \showarticletitle{Optimal Hashing-based Time-Space Trade-offs for
  Approximate Near Neighbors}. In \bibinfo{booktitle}{\emph{Proc. 28th
  Symposium on Discrete Algorithms ({SODA})}}. \bibinfo{pages}{47--66}.
\newblock


\bibitem[\protect\citeauthoryear{Andoni and Razenshteyn}{Andoni and
  Razenshteyn}{2015}]%
        {andoni2015optimal}
\bibfield{author}{\bibinfo{person}{Alexandr Andoni} {and}
  \bibinfo{person}{Ilya~P. Razenshteyn}.} \bibinfo{year}{2015}\natexlab{}.
\newblock \showarticletitle{Optimal Data-Dependent Hashing for Approximate Near
  Neighbors}. In \bibinfo{booktitle}{\emph{Proc. 47th Symposium on Theory of
  Computing ({STOC})}}. \bibinfo{pages}{793--801}.
\newblock


\bibitem[\protect\citeauthoryear{Andoni and Razenshteyn}{Andoni and
  Razenshteyn}{2016}]%
        {andoni2016tight}
\bibfield{author}{\bibinfo{person}{Alexandr Andoni} {and}
  \bibinfo{person}{Ilya~P. Razenshteyn}.} \bibinfo{year}{2016}\natexlab{}.
\newblock \showarticletitle{Tight Lower Bounds for Data-Dependent
  Locality-Sensitive Hashing}. In \bibinfo{booktitle}{\emph{Proc. 32nd Int.
  Symposium on Computational Geometry ({SoCG})}}. \bibinfo{pages}{9:1--9:11}.
\newblock


\bibitem[\protect\citeauthoryear{Augsten and B{\"o}hlen}{Augsten and
  B{\"o}hlen}{2013}]%
        {augsten2013similarity}
\bibfield{author}{\bibinfo{person}{Nikolaus Augsten} {and}
  \bibinfo{person}{Michael~H B{\"o}hlen}.} \bibinfo{year}{2013}\natexlab{}.
\newblock \showarticletitle{Similarity joins in relational database systems}.
\newblock \bibinfo{journal}{\emph{Synthesis Lectures on Data Management}}
  \bibinfo{volume}{5}, \bibinfo{number}{5} (\bibinfo{year}{2013}),
  \bibinfo{pages}{1--124}.
\newblock


\bibitem[\protect\citeauthoryear{Becker, Ducas, Gama, and Laarhoven}{Becker
  et~al\mbox{.}}{2016}]%
        {BeckerDGL16}
\bibfield{author}{\bibinfo{person}{Anja Becker}, \bibinfo{person}{L{\'{e}}o
  Ducas}, \bibinfo{person}{Nicolas Gama}, {and} \bibinfo{person}{Thijs
  Laarhoven}.} \bibinfo{year}{2016}\natexlab{}.
\newblock \showarticletitle{New directions in nearest neighbor searching with
  applications to lattice sieving}. In \bibinfo{booktitle}{\emph{Proc. 27th
  Symposium on Discrete Algorithms ({SODA})}}. \bibinfo{pages}{10--24}.
\newblock


\bibitem[\protect\citeauthoryear{Broder}{Broder}{1997}]%
        {Broder97}
\bibfield{author}{\bibinfo{person}{Andrei~Z. Broder}.}
  \bibinfo{year}{1997}\natexlab{}.
\newblock \showarticletitle{On the resemblance and containment of documents}.
  In \bibinfo{booktitle}{\emph{Proc. Compression and Complexity of Sequences}}.
  \bibinfo{pages}{21--29}.
\newblock


\bibitem[\protect\citeauthoryear{Broder, Glassman, Manasse, and Zweig}{Broder
  et~al\mbox{.}}{1997}]%
        {broder1997syntactic}
\bibfield{author}{\bibinfo{person}{Andrei~Z. Broder},
  \bibinfo{person}{Steven~C. Glassman}, \bibinfo{person}{Mark~S. Manasse},
  {and} \bibinfo{person}{Geoffrey Zweig}.} \bibinfo{year}{1997}\natexlab{}.
\newblock \showarticletitle{Syntactic clustering of the web}.
\newblock \bibinfo{journal}{\emph{Computer Networks and ISDN Systems}}
  \bibinfo{volume}{29}, \bibinfo{number}{8-13} (\bibinfo{year}{1997}),
  \bibinfo{pages}{1157--1166}.
\newblock


\bibitem[\protect\citeauthoryear{Charikar}{Charikar}{2002}]%
        {Charikar02}
\bibfield{author}{\bibinfo{person}{Moses Charikar}.}
  \bibinfo{year}{2002}\natexlab{}.
\newblock \showarticletitle{Similarity estimation techniques from rounding
  algorithms}. In \bibinfo{booktitle}{\emph{Proc. 34th {ACM} Symposium on
  Theory of Computing (STOC)}}. \bibinfo{pages}{380--388}.
\newblock


\bibitem[\protect\citeauthoryear{Chierichetti and Kumar}{Chierichetti and
  Kumar}{2015}]%
        {Chierichetti15}
\bibfield{author}{\bibinfo{person}{Flavio Chierichetti} {and}
  \bibinfo{person}{Ravi Kumar}.} \bibinfo{year}{2015}\natexlab{}.
\newblock \showarticletitle{LSH-preserving functions and their applications}.
\newblock \bibinfo{journal}{\emph{J. ACM}} \bibinfo{volume}{62},
  \bibinfo{number}{5} (\bibinfo{year}{2015}), \bibinfo{pages}{33}.
\newblock


\bibitem[\protect\citeauthoryear{Chierichetti, Kumar, and Mahdian}{Chierichetti
  et~al\mbox{.}}{2014}]%
        {Chierichetti14}
\bibfield{author}{\bibinfo{person}{Flavio Chierichetti}, \bibinfo{person}{Ravi
  Kumar}, {and} \bibinfo{person}{Mohammad Mahdian}.}
  \bibinfo{year}{2014}\natexlab{}.
\newblock \showarticletitle{The complexity of \{LSH\} feasibility}.
\newblock \bibinfo{journal}{\emph{Theoretical Computer Science}}
  \bibinfo{volume}{530} (\bibinfo{year}{2014}), \bibinfo{pages}{89 -- 101}.
\newblock


\bibitem[\protect\citeauthoryear{Chierichetti, Panconesi, Kumar, and
  Terolli}{Chierichetti et~al\mbox{.}}{2017}]%
        {chierichetti2017distortion}
\bibfield{author}{\bibinfo{person}{Flavio Chierichetti},
  \bibinfo{person}{Alessandro Panconesi}, \bibinfo{person}{Ravi Kumar}, {and}
  \bibinfo{person}{Erisa Terolli}.} \bibinfo{year}{2017}\natexlab{}.
\newblock \showarticletitle{The Distortion of Locality Sensitive Hashing}. In
  \bibinfo{booktitle}{\emph{Proc. Conference on Innovations in Theoretical
  Computer Science (ITCS)}}.
\newblock


\bibitem[\protect\citeauthoryear{Christiani}{Christiani}{2017}]%
        {christiani2017framework}
\bibfield{author}{\bibinfo{person}{Tobias Christiani}.}
  \bibinfo{year}{2017}\natexlab{}.
\newblock \showarticletitle{A Framework for Similarity Search with Space-Time
  Tradeoffs using Locality-Sensitive Filtering}. In
  \bibinfo{booktitle}{\emph{Proc. 28th Symposium on Discrete Algorithms
  (SODA)}}. \bibinfo{pages}{31--46}.
\newblock


\bibitem[\protect\citeauthoryear{Christiani and Pagh}{Christiani and
  Pagh}{2017}]%
        {christiani2017beyond}
\bibfield{author}{\bibinfo{person}{Tobias Christiani} {and}
  \bibinfo{person}{Rasmus Pagh}.} \bibinfo{year}{2017}\natexlab{}.
\newblock \showarticletitle{Set Similarity Search Beyond Minhash}. In
  \bibinfo{booktitle}{\emph{Proc. 49th Symposium on Theory of Computing
  ({STOC})}}.
\newblock


\bibitem[\protect\citeauthoryear{Datar, Immorlica, Indyk, and Mirrokni}{Datar
  et~al\mbox{.}}{2004}]%
        {Datar04}
\bibfield{author}{\bibinfo{person}{Mayur Datar}, \bibinfo{person}{Nicole
  Immorlica}, \bibinfo{person}{Piotr Indyk}, {and} \bibinfo{person}{Vahab~S.
  Mirrokni}.} \bibinfo{year}{2004}\natexlab{}.
\newblock \showarticletitle{Locality-sensitive Hashing Scheme Based on P-stable
  Distributions}. In \bibinfo{booktitle}{\emph{Proc. 20th Symposium on
  Computational Geometry {(SoCG)}}}. \bibinfo{pages}{253--262}.
\newblock
\showISBNx{1-58113-885-7}


\bibitem[\protect\citeauthoryear{De~Cristofaro, Kim, and Tsudik}{De~Cristofaro
  et~al\mbox{.}}{2010}]%
        {de2010linear}
\bibfield{author}{\bibinfo{person}{Emiliano De~Cristofaro},
  \bibinfo{person}{Jihye Kim}, {and} \bibinfo{person}{Gene Tsudik}.}
  \bibinfo{year}{2010}\natexlab{}.
\newblock \showarticletitle{Linear-Complexity Private Set Intersection
  Protocols Secure in Malicious Model}. In
  \bibinfo{booktitle}{\emph{Asiacrypt}}, Vol.~\bibinfo{volume}{6477}.
  \bibinfo{pages}{213--231}.
\newblock


\bibitem[\protect\citeauthoryear{Deerwester, Dumais, Furnas, Landauer, and
  Harshman}{Deerwester et~al\mbox{.}}{1990}]%
        {Deerwester90}
\bibfield{author}{\bibinfo{person}{Scott Deerwester}, \bibinfo{person}{Susan~T
  Dumais}, \bibinfo{person}{George~W Furnas}, \bibinfo{person}{Thomas~K
  Landauer}, {and} \bibinfo{person}{Richard Harshman}.}
  \bibinfo{year}{1990}\natexlab{}.
\newblock \showarticletitle{Indexing by latent semantic analysis}.
\newblock \bibinfo{journal}{\emph{Journal of the American Society for
  Information Science}} \bibinfo{volume}{41}, \bibinfo{number}{6}
  (\bibinfo{year}{1990}), \bibinfo{pages}{391}.
\newblock


\bibitem[\protect\citeauthoryear{Freedman, Nissim, and Pinkas}{Freedman
  et~al\mbox{.}}{2004}]%
        {DBLP:conf/eurocrypt/FreedmanNP04}
\bibfield{author}{\bibinfo{person}{Michael~J. Freedman}, \bibinfo{person}{Kobbi
  Nissim}, {and} \bibinfo{person}{Benny Pinkas}.}
  \bibinfo{year}{2004}\natexlab{}.
\newblock \showarticletitle{Efficient Private Matching and Set Intersection}.
  In \bibinfo{booktitle}{\emph{Proc. Int. Conference on the Theory and
  Applications of Cryptographic Techniques ({EUROCRYPT})}}.
  \bibinfo{pages}{1--19}.
\newblock


\bibitem[\protect\citeauthoryear{Gionis, Indyk, and Motwani}{Gionis
  et~al\mbox{.}}{1999}]%
        {gionis1999similarity}
\bibfield{author}{\bibinfo{person}{Aristides Gionis}, \bibinfo{person}{Piotr
  Indyk}, {and} \bibinfo{person}{Rajeev Motwani}.}
  \bibinfo{year}{1999}\natexlab{}.
\newblock \showarticletitle{Similarity Search in High Dimensions via Hashing}.
  In \bibinfo{booktitle}{\emph{Proc. 25th Int. Conference on Very Large Data
  Bases (VLDB)}}. \bibinfo{pages}{518--529}.
\newblock


\bibitem[\protect\citeauthoryear{Goldreich}{Goldreich}{2009}]%
        {goldreich2009foundations}
\bibfield{author}{\bibinfo{person}{Oded Goldreich}.}
  \bibinfo{year}{2009}\natexlab{}.
\newblock \bibinfo{booktitle}{\emph{Foundations of cryptography: volume 2,
  basic applications}}.
\newblock \bibinfo{publisher}{Cambridge University Press}.
\newblock


\bibitem[\protect\citeauthoryear{Har{-}Peled, Indyk, and Motwani}{Har{-}Peled
  et~al\mbox{.}}{2012}]%
        {Har-PeledIM12}
\bibfield{author}{\bibinfo{person}{Sariel Har{-}Peled}, \bibinfo{person}{Piotr
  Indyk}, {and} \bibinfo{person}{Rajeev Motwani}.}
  \bibinfo{year}{2012}\natexlab{}.
\newblock \showarticletitle{Approximate Nearest Neighbor: Towards Removing the
  Curse of Dimensionality}.
\newblock \bibinfo{journal}{\emph{Theory of Computing}} \bibinfo{volume}{8},
  \bibinfo{number}{1} (\bibinfo{year}{2012}), \bibinfo{pages}{321--350}.
\newblock


\bibitem[\protect\citeauthoryear{Hu, Tao, and Yi}{Hu et~al\mbox{.}}{2017}]%
        {DBLP:conf/pods/HuTY17}
\bibfield{author}{\bibinfo{person}{Xiao Hu}, \bibinfo{person}{Yufei Tao}, {and}
  \bibinfo{person}{Ke Yi}.} \bibinfo{year}{2017}\natexlab{}.
\newblock \showarticletitle{Output-optimal Parallel Algorithms for Similarity
  Joins}. In \bibinfo{booktitle}{\emph{Proc. Symposium on Principles of
  Database Systems (PODS)}}. \bibinfo{pages}{79--90}.
\newblock


\bibitem[\protect\citeauthoryear{Indyk}{Indyk}{2003}]%
        {Indyk2003}
\bibfield{author}{\bibinfo{person}{Piotr Indyk}.}
  \bibinfo{year}{2003}\natexlab{}.
\newblock \showarticletitle{Better Algorithms for High-dimensional Proximity
  Problems via Asymmetric Embeddings}. In \bibinfo{booktitle}{\emph{Proc. 14th
  Symposium on Discrete Algorithms (SODA)}}. \bibinfo{pages}{539--545}.
\newblock


\bibitem[\protect\citeauthoryear{Indyk and Motwani}{Indyk and Motwani}{1998}]%
        {IndykM98}
\bibfield{author}{\bibinfo{person}{Piotr Indyk} {and} \bibinfo{person}{Rajeev
  Motwani}.} \bibinfo{year}{1998}\natexlab{}.
\newblock \showarticletitle{Approximate Nearest Neighbors: {T}owards Removing
  the Curse of Dimensionality}. In \bibinfo{booktitle}{\emph{Proc. 30th {ACM}
  Symposium on the Theory of Computing (STOC)}}. \bibinfo{pages}{604--613}.
\newblock


\bibitem[\protect\citeauthoryear{Jain, Vijayanarasimhan, and Grauman}{Jain
  et~al\mbox{.}}{2010}]%
        {Jain_NIPS10}
\bibfield{author}{\bibinfo{person}{Prateek Jain}, \bibinfo{person}{Sudheendra
  Vijayanarasimhan}, {and} \bibinfo{person}{Kristen Grauman}.}
  \bibinfo{year}{2010}\natexlab{}.
\newblock \showarticletitle{Hashing Hyperplane Queries to Near Points with
  Applications to Large-Scale Active Learning}. In
  \bibinfo{booktitle}{\emph{Proc. Conference on Neural Information Processing
  Systems ({NIPS})}}.
\newblock


\bibitem[\protect\citeauthoryear{Kapralov}{Kapralov}{2015}]%
        {Kapralov15}
\bibfield{author}{\bibinfo{person}{Michael Kapralov}.}
  \bibinfo{year}{2015}\natexlab{}.
\newblock \showarticletitle{Smooth Tradeoffs between Insert and Query
  Complexity in Nearest Neighbor Search}. In \bibinfo{booktitle}{\emph{Proc.
  34th {ACM} Symposium on Principles of Database Systems, {PODS} 2015}}.
  \bibinfo{pages}{329--342}.
\newblock


\bibitem[\protect\citeauthoryear{Lambert, Harrington, Harvey, and
  Glodjo}{Lambert et~al\mbox{.}}{1999}]%
        {Lambert1999}
\bibfield{author}{\bibinfo{person}{C.~G. Lambert}, \bibinfo{person}{S.~E.
  Harrington}, \bibinfo{person}{C.~R. Harvey}, {and} \bibinfo{person}{A.
  Glodjo}.} \bibinfo{year}{1999}\natexlab{}.
\newblock \showarticletitle{Efficient On-Line Nonparametric Kernel Density
  Estimation}.
\newblock \bibinfo{journal}{\emph{Algorithmica}} \bibinfo{volume}{25},
  \bibinfo{number}{1} (\bibinfo{year}{1999}), \bibinfo{pages}{37--57}.
\newblock
\showISSN{1432-0541}


\bibitem[\protect\citeauthoryear{Liu, Wang, Mu, Kumar, and Chang}{Liu
  et~al\mbox{.}}{2012}]%
        {Liu_ICML12}
\bibfield{author}{\bibinfo{person}{Wei Liu}, \bibinfo{person}{Jun Wang},
  \bibinfo{person}{Yadong Mu}, \bibinfo{person}{Sanjiv Kumar}, {and}
  \bibinfo{person}{Shih{-}Fu Chang}.} \bibinfo{year}{2012}\natexlab{}.
\newblock \showarticletitle{Compact Hyperplane Hashing with Bilinear
  Functions}. In \bibinfo{booktitle}{\emph{Proc. Conference on Machine Learning
  (ICML)}}.
\newblock


\bibitem[\protect\citeauthoryear{Motwani, Naor, and Panigrahy}{Motwani
  et~al\mbox{.}}{2007}]%
        {motwani2007}
\bibfield{author}{\bibinfo{person}{Rajeev Motwani}, \bibinfo{person}{Assaf
  Naor}, {and} \bibinfo{person}{Rina Panigrahy}.}
  \bibinfo{year}{2007}\natexlab{}.
\newblock \showarticletitle{Lower Bounds on Locality Sensitive Hashing}.
\newblock \bibinfo{journal}{\emph{{SIAM} J. Discrete Math.}}
  \bibinfo{volume}{21}, \bibinfo{number}{4} (\bibinfo{year}{2007}),
  \bibinfo{pages}{930--935}.
\newblock


\bibitem[\protect\citeauthoryear{Neyshabur and Srebro}{Neyshabur and
  Srebro}{2015}]%
        {neyshabur15symmetric}
\bibfield{author}{\bibinfo{person}{Behnam Neyshabur} {and}
  \bibinfo{person}{Nathan Srebro}.} \bibinfo{year}{2015}\natexlab{}.
\newblock \showarticletitle{On Symmetric and Asymmetric LSHs for Inner Product
  Search}. In \bibinfo{booktitle}{\emph{Proc. 32nd Conference on Machine
  Learning ({ICML})}}. \bibinfo{pages}{1926--1934}.
\newblock


\bibitem[\protect\citeauthoryear{O'Donnell}{O'Donnell}{2014}]%
        {odonnell2014analysis}
\bibfield{author}{\bibinfo{person}{Ryan O'Donnell}.}
  \bibinfo{year}{2014}\natexlab{}.
\newblock \bibinfo{booktitle}{\emph{Analysis of Boolean Functions}}.
\newblock \bibinfo{publisher}{Cambridge University Press}.
\newblock


\bibitem[\protect\citeauthoryear{O’Donnell, Wu, and Zhou}{O’Donnell
  et~al\mbox{.}}{2014}]%
        {o2014optimal}
\bibfield{author}{\bibinfo{person}{Ryan O’Donnell}, \bibinfo{person}{Yi Wu},
  {and} \bibinfo{person}{Yuan Zhou}.} \bibinfo{year}{2014}\natexlab{}.
\newblock \showarticletitle{Optimal lower bounds for locality-sensitive hashing
  (except when q is tiny)}.
\newblock \bibinfo{journal}{\emph{ACM Transactions on Computation Theory
  (TOCT)}} \bibinfo{volume}{6}, \bibinfo{number}{1} (\bibinfo{year}{2014}),
  \bibinfo{pages}{5}.
\newblock


\bibitem[\protect\citeauthoryear{Pagh, Silvestri, Sivertsen, and Skala}{Pagh
  et~al\mbox{.}}{2017}]%
        {pagh2017approximate}
\bibfield{author}{\bibinfo{person}{Rasmus Pagh}, \bibinfo{person}{Francesco
  Silvestri}, \bibinfo{person}{Johan Sivertsen}, {and} \bibinfo{person}{Matthew
  Skala}.} \bibinfo{year}{2017}\natexlab{}.
\newblock \showarticletitle{Approximate furthest neighbor with application to
  annulus query}.
\newblock \bibinfo{journal}{\emph{Information Systems}}  \bibinfo{volume}{64}
  (\bibinfo{year}{2017}), \bibinfo{pages}{152--162}.
\newblock


\bibitem[\protect\citeauthoryear{Pham and Pagh}{Pham and Pagh}{2013}]%
        {pham2013fast}
\bibfield{author}{\bibinfo{person}{Ninh Pham} {and} \bibinfo{person}{Rasmus
  Pagh}.} \bibinfo{year}{2013}\natexlab{}.
\newblock \showarticletitle{Fast and scalable polynomial kernels via explicit
  feature maps}. In \bibinfo{booktitle}{\emph{Proc. 19th Int. Conference on
  Knowledge Discovery and Data Mining {(KDD)}}}. ACM,
  \bibinfo{pages}{239--247}.
\newblock


\bibitem[\protect\citeauthoryear{Pinkas, Schneider, Weinert, and Wieder}{Pinkas
  et~al\mbox{.}}{2017}]%
        {pinkas2017}
\bibfield{author}{\bibinfo{person}{Benny Pinkas}, \bibinfo{person}{Thomas
  Schneider}, \bibinfo{person}{Christian Weinert}, {and} \bibinfo{person}{Udi
  Wieder}.} \bibinfo{year}{2017}\natexlab{}.
\newblock \bibinfo{title}{Linear Size Circuit-based PSI via Two-Dimensional
  Cuckoo Hashing}.  (\bibinfo{year}{2017}).
\newblock
\newblock
\shownote{Manuscript under submission.}


\bibitem[\protect\citeauthoryear{Rahimi and Recht}{Rahimi and Recht}{2007}]%
        {rahimi2007}
\bibfield{author}{\bibinfo{person}{Ali Rahimi} {and} \bibinfo{person}{Benjamin
  Recht}.} \bibinfo{year}{2007}\natexlab{}.
\newblock \showarticletitle{Random Features for Large-Scale Kernel Machines}.
  In \bibinfo{booktitle}{\emph{Proc. 21st Conference on Neural Information
  Processing Systems ({NIPS})}}. \bibinfo{pages}{1177--1184}.
\newblock


\bibitem[\protect\citeauthoryear{Riazi, Chen, Shrivastava, Wallach, and
  Koushanfar}{Riazi et~al\mbox{.}}{2016}]%
        {Riazi16}
\bibfield{author}{\bibinfo{person}{M.~Sadegh Riazi}, \bibinfo{person}{Beidi
  Chen}, \bibinfo{person}{Anshumali Shrivastava}, \bibinfo{person}{Dan~S.
  Wallach}, {and} \bibinfo{person}{Farinaz Koushanfar}.}
  \bibinfo{year}{2016}\natexlab{}.
\newblock \bibinfo{title}{{Sub-Linear Privacy-Preserving Near-Neighbor Search
  with Untrusted Server on Large-Scale Datasets}}.  (\bibinfo{year}{2016}).
\newblock
\newblock
\shownote{ArXiv:1612.01835.}


\bibitem[\protect\citeauthoryear{Rudin}{Rudin}{1990}]%
        {rudin1990}
\bibfield{author}{\bibinfo{person}{Walter Rudin}.}
  \bibinfo{year}{1990}\natexlab{}.
\newblock \bibinfo{booktitle}{\emph{Fourier Analysis on Groups}}.
\newblock \bibinfo{publisher}{Wiley}, \bibinfo{address}{New York}.
\newblock


\bibitem[\protect\citeauthoryear{Savage}{Savage}{1962}]%
        {savage1962}
\bibfield{author}{\bibinfo{person}{Richard~I. Savage}.}
  \bibinfo{year}{1962}\natexlab{}.
\newblock \showarticletitle{Mill's Ratio for Multivariate Normal
  Distributions}.
\newblock \bibinfo{journal}{\emph{Jour. Res. {NBS} Math. Sci.}}
  \bibinfo{volume}{66}, \bibinfo{number}{3} (\bibinfo{year}{1962}),
  \bibinfo{pages}{93--96}.
\newblock


\bibitem[\protect\citeauthoryear{Shrivastava and Li}{Shrivastava and
  Li}{2014}]%
        {ShrivastavaL14}
\bibfield{author}{\bibinfo{person}{Anshumali Shrivastava} {and}
  \bibinfo{person}{Ping Li}.} \bibinfo{year}{2014}\natexlab{}.
\newblock \showarticletitle{Asymmetric {LSH} {(ALSH)} for Sublinear Time
  Maximum Inner Product Search {(MIPS)}}. In \bibinfo{booktitle}{\emph{Proc.
  27th Conference on Neural Information Processing Systems (NIPS)}}.
  \bibinfo{pages}{2321--2329}.
\newblock


\bibitem[\protect\citeauthoryear{Silva, Aref, and Ali}{Silva
  et~al\mbox{.}}{2010}]%
        {silva2010similarity}
\bibfield{author}{\bibinfo{person}{Yasin~N Silva}, \bibinfo{person}{Walid~G
  Aref}, {and} \bibinfo{person}{Mohamed~H Ali}.}
  \bibinfo{year}{2010}\natexlab{}.
\newblock \showarticletitle{The similarity join database operator}. In
  \bibinfo{booktitle}{\emph{Proc. Int. Conference on Data Engineering (ICDE)}}.
  IEEE, \bibinfo{pages}{892--903}.
\newblock


\bibitem[\protect\citeauthoryear{Szarek and Werner}{Szarek and Werner}{1999}]%
        {szarek1999}
\bibfield{author}{\bibinfo{person}{S.~J. Szarek} {and} \bibinfo{person}{E.
  Werner}.} \bibinfo{year}{1999}\natexlab{}.
\newblock \showarticletitle{A nonsymmetric correlation inequality for Gaussian
  measure}.
\newblock \bibinfo{journal}{\emph{Journal of Multivariate Analysis}}
  \bibinfo{volume}{68}, \bibinfo{number}{2} (\bibinfo{year}{1999}),
  \bibinfo{pages}{193--211}.
\newblock


\bibitem[\protect\citeauthoryear{Valiant}{Valiant}{2015}]%
        {valiant2015finding}
\bibfield{author}{\bibinfo{person}{Gregory Valiant}.}
  \bibinfo{year}{2015}\natexlab{}.
\newblock \showarticletitle{Finding correlations in subquadratic time, with
  applications to learning parities and the closest pair problem}.
\newblock \bibinfo{journal}{\emph{J. ACM}} \bibinfo{volume}{62},
  \bibinfo{number}{2} (\bibinfo{year}{2015}), \bibinfo{pages}{13}.
\newblock


\bibitem[\protect\citeauthoryear{Vijayanarasimhan, Jain, and
  Grauman}{Vijayanarasimhan et~al\mbox{.}}{2014}]%
        {vijayanarasimhan2014hyperplane}
\bibfield{author}{\bibinfo{person}{Sudheendra Vijayanarasimhan},
  \bibinfo{person}{Prateek Jain}, {and} \bibinfo{person}{Kristen Grauman}.}
  \bibinfo{year}{2014}\natexlab{}.
\newblock \showarticletitle{Hashing Hyperplane Queries to Near Points with
  Applications to Large-Scale Active Learning}.
\newblock \bibinfo{journal}{\emph{IEEE Trans. Pattern Anal. Mach. Intell.}}
  \bibinfo{volume}{36}, \bibinfo{number}{2} (\bibinfo{year}{2014}),
  \bibinfo{pages}{276--288}.
\newblock


\bibitem[\protect\citeauthoryear{Wang, Shen, Song, and Ji}{Wang
  et~al\mbox{.}}{2014}]%
        {wang2014lsh-survey}
\bibfield{author}{\bibinfo{person}{J. Wang}, \bibinfo{person}{H.~T. Shen},
  \bibinfo{person}{J. Song}, {and} \bibinfo{person}{J. Ji}.}
  \bibinfo{year}{2014}\natexlab{}.
\newblock \showarticletitle{Hashing for Similarity Search: {A} Survey}.
\newblock \bibinfo{journal}{\emph{CoRR}}  \bibinfo{volume}{abs/1408.2927}
  (\bibinfo{year}{2014}).
\newblock
\urldef\tempurl%
\url{http://arxiv.org/abs/1408.2927}
\showURL{%
\tempurl}


\bibitem[\protect\citeauthoryear{Zhang and Zhang}{Zhang and Zhang}{2017}]%
        {DBLP:conf/kdd/Zhang017}
\bibfield{author}{\bibinfo{person}{Haoyu Zhang} {and} \bibinfo{person}{Qin
  Zhang}.} \bibinfo{year}{2017}\natexlab{}.
\newblock \showarticletitle{EmbedJoin: Efficient Edit Similarity Joins via
  Embeddings}. In \bibinfo{booktitle}{\emph{Proc. Int. Conference on Knowledge
  Discovery and Data Mining (KDD)}}. \bibinfo{pages}{585--594}.
\newblock


\end{thebibliography}

\end{document}